\newif\iflong
\def\TAAdvClass{\mathcal{C}}
\def\originalIdeal{\mathcal{F}}
\def\PairSML{(SM)\overline{L}}
\def\PairRML{(RM)\overline{L}}
\def\PairSRL{SR\overline{L}}
\def\SRO{(SR)\overline{O}}
\def\SMO{(SM)\overline{O}}
\def\RMO{(RM)\overline{O}}
\def\SML{SM\overline{L}}
\def\RML{RM\overline{L}}
\def\SFLP{SF\overline{L}-P}
\def\RFLP{RF\overline{L}-P'}
\def\RMLP{RM\overline{L}-P'}
\def\PairSL{(2S)\overline{L}}
\def\PairRL{(2R)\overline{L}}
\def\conf{M\overline{O}}
\def\MOm{M\overline{O}- |M|}
\def\OSSenderNode{LU}
\def\OSNodeReceiver{TI}
\def\inlineheading#1{\emph{{#1}. }}
\def\ExitSecurity{Tail-Indistinguishability}
\def\LinkingSecurity{Layer-Unlinkability}
\def\corrStandard#1{{#1}_{e}}
\def\noCorr#1{{#1}_{0}}
\def\corrOnlyPartnerSender#1{{#1}_{s}}
\def\OI#1{{#1}_{wi}}
\def\PiOI{\OI{\Pi}}
\def\FormOnionOI{\OI{\mathrm{FormOnion}}}
\def\ProcOnionOI{\OI{\mathrm{ProcOnion}}}
\def\FormOnion{\mathrm{FormOnion}}
\def\ProcOnion{\mathrm{ProcOnion}}
\def\inlineheading#1{\emph{{#1}. }}
\def\something { \mathrel{\ooalign{\hss$\Diamond$\hss\cr  \kern-1pt\raise0ex\hbox{\scalebox{1}{$\not$}}}}}
\newcommand{\stirling}[2]{\genfrac{\{}{\}}{0pt}{}{#1}{#2}}
\newtheorem{definition}{Definition}
\newtheorem{lemma}{Lemma}
\newtheorem{theorem}{Theorem}
\newtheorem{assumption}{Assumption}
\newenvironment{proof}{Proof:}
\acrodef{OR}[OR]{Onion Routing}
 \acrodef{PPT}[PPT]{probabilistic polynomial time}
 \acrodef{UC}{universal composability }
 \acrodef{PRNG}{pseudo-random number generator}
 \acrodef{PRF}{pseudo-random function}
  \acrodef{PRP}{pseudo-random permutation}
  \acrodef{PKI}{public key infrastructure}
    \acrodef{MAC}{Message Authentication Code}
\newcommand\mynobreakpar{\par\nobreak\@afterheading} 
\let\svthefootnote\thefootnote
\newcommand\blankfootnote[1]{%
  \let\thefootnote\relax\footnotetext{#1}%
  \let\thefootnote\svthefootnote%
}
\let\svfootnote\footnote
\renewcommand\footnote[2][?]{%
  \if\relax#1\relax%
    \blankfootnote{#2}%
  \else%
    \if?#1\svfootnote{#2}\else\svfootnote[#1]{#2}\fi%
  \fi
}
\begin{document}
%
\title{Breaking and (Partially) Fixing Provably Secure Onion Routing}

\author{\IEEEauthorblockN{Christiane Kuhn\IEEEauthorrefmark{1}\IEEEauthorrefmark{3}, Martin Beck\IEEEauthorrefmark{2}, Thorsten Strufe\IEEEauthorrefmark{1}\IEEEauthorrefmark{2}}
\IEEEauthorblockA{ \IEEEauthorrefmark{1}$<$firstname.lastname$>$@kit.edu, KIT Karlsruhe \hspace{2cm}
 \IEEEauthorrefmark{2}$<$firstname.lastname$>$@tu-dresden.de, TU Dresden }}


%


\maketitle

\begin{abstract}
  After several years of research on onion routing, Camenisch and Lysyanskaya, in an attempt at rigorous analysis, defined an ideal functionality in the universal composability model, together with properties that protocols have to meet to achieve provable security.
  A whole family of systems based their security proofs on this work.
    However, analyzing HORNET and Sphinx, two instances from this family, we  show that this proof strategy is broken. We discover a previously unknown vulnerability that breaks anonymity completely, and explain a known one. Both should not exist if privacy is proven correctly.
  
In this work, we analyze and fix the proof strategy used for this family of systems. After proving the efficacy of the ideal functionality, we show how the original properties are flawed and suggest improved, effective properties in their place. Finally, we discover another common mistake in the proofs. We demonstrate how to avoid it by showing our improved properties for one protocol, thus partially fixing the family of provably secure onion routing protocols. 
\end{abstract}

\footnote[]{{\IEEEauthorrefmark{3} This work in parts was carried out while affiliated with TU Dresden.}}


%
\IEEEpeerreviewmaketitle

\section{Introduction} \label{sec:introduction}
Anonymous communication protocols are developed to protect communication meta data from surveillance.
With millions of users\footnote{according to https://metrics.torproject.org/userstats-relay-country.html} Tor~\cite{dingledine04tor} is the most widely known protocol to restrict the information an adversary learns. 
It relies on the idea of \ac{OR}~\cite{goldschlag1996hiding}. 
This generic approach removes the relationship between a message and its corresponding sender by forwarding the message over multiple proxies that modify it at each hop.

With increasing importance of \ac{OR}, the need to build efficient protocols for low delay communication with proven security guarantees became apparent. 
To simplify building those protocols, Sphinx~\cite{danezis_sphinx:_2009} was proposed as a secure packet format for onions. Building on this format HORNET~\cite{chen_hornet:_2015}, among others,  was proposed for high-speed \ac{OR} at the network layer. 
Using multiple cryptographic techniques the authors both of Sphinx and HORNET present proofs along the lines of the 
strategy proposed by Camenisch and Lysyanskaya in~\cite{camenisch2005formal}.

This proof strategy is based on defining an \emph{ideal functionality} for \ac{OR}\footnote{Understanding of \emph{OR} varied in the field.  To be compliant with the terms of~\cite{camenisch2005formal}, we understand OR in this work as a free-route Chaumian MixNet~\cite{chaum1981untraceable} without requiring that messages are delayed. This conforms with the understanding of~\cite{goldschlag1996hiding} and~\cite{dingledine04tor} except that circuits are excluded.} in the \ac{UC} model. 
The functionality is an abstraction to show which information even a perfect \ac{OR} scheme leaks to an adversary. 
The authors in addition design protocol \emph{properties}.
Proving that a real world protocol complies with these properties, they claim, implies the security and privacy of their ideal \ac{OR} functionality.
This convenient proof scheme has been used to analyze the privacy of a whole family of recent, efficient packet formats (e.g. the improved Minx~\cite{shimshock_breaking_2008} and Sphinx~\cite{danezis_sphinx:_2009}) and \ac{OR} protocols (e.g. HORNET\cite{chen_hornet:_2015} and TARANET~\cite{chen_taranet:_2018}).

Analyzing HORNET, we discovered a simple attack on its data transmission phase that allows it to link senders to receivers and  large parts of the messages to their senders as well. Our attack complies to HORNET's adversary model and should have been detected when proving its security. We found that similar attacks are to some extent possible on related work~\cite{shimshock_breaking_2008, danezis_sphinx:_2009,chen_taranet:_2018}. 
In addition, there is a padding flaw in Sphinx~\cite{danezis_sphinx:_2009}, luckily detected and corrected in the implementation\footnote{\url{https://github.com/UCL-InfoSec/sphinx/blob/c05b7034eaffd8f98454e0619b0b1548a9fa0f42/SphinxClient.py\#L67}}, that has escaped the formal analysis. 
Undetected, this flaw would have jeopardized the privacy of the senders in systems using Sphinx. 

As all the  protocols prove privacy based on the ideal functionality and properties of~\cite{camenisch2005formal}, attacks threatening the users' privacy should not be possible in the final protocol. 
We thus set out to identify and correct the mistakes in the process. As it turns out, there are multiple open questions and discrepancies that have to be solved for the proof strategy. 

First, no one ever analyzed the privacy this ideal \ac{OR} functionality actually achieves, to start with. 
Many papers  \cite{ayele_analysis_2011,backes_tuc:_2014, balkovich_electronic_2015,berman_provable_2015, degabriele_untagging_2018, feigenbaum_anonymity_nodate, feigenbaum_model_2007,fujioka_security_2010, peng_general_2011,pohly_modeling_2016, potgieter_introduction_2009,  tschorsch_onions_2016} citing it disagree violently on this point.  
As our first contribution towards solving the matter, we analyze the ideal functionality. 
We show that it indeed implies the privacy goals expected for \ac{OR}, namely sender anonymity and relationship anonymity, against a limited yet useful adversary model. 

Next, we look closer at the attack on Sphinx and realize the first mistake: The properties proposed to imply the privacy of the ideal functionality are not sufficient. 
Proving the original properties thus does not provide any privacy guarantee. 
Having a closer look at the properties, we  discover that one of them is inexact, a second missing important aspects, and the last two lack to provide any additional privacy.
To understand what exactly is missing, we construct two obviously broken protocols that still fulfill the properties.  
Based on our insights from the broken protocols, we construct two new properties, \ExitSecurity \ and \LinkingSecurity,  and prove that they, together with the correction of the inexact property, indeed imply the privacy of the analyzed ideal functionality. Thus, they allow to prove privacy with the convenient strategy of showing that a protocol meets the improved properties.

By reconsidering our new attack on HORNET, we uncover an independent second mistake: The properties of Camenisch and Lysyanskaya have not been proven correctly for the protocols. 
More precisely, the oracles used in the original definition of the properties have been ignored or weakened.

Finally, we demonstrate how to perform an analysis for our new properties, by proving that a variation of Sphinx~\cite{improvedSphinx}, which improves performance but neglects replies,  has (with the small additional correction to the padding flaw known from the Sphinx implementation) the privacy of the ideal functionality.

By solving the issues, it turns out that the model behind the ideal functionality does neither support anonymous reply packets, nor sessions -- which are frequently adapted in above papers. 
The privacy for these extensions cannot be proven using the given ideal functionality.
In this work, we favor a rigorous treatment of the foundations, i.e. sending a message to a receiver, over extensions like sessions and reply channels. We conjecture that with the solid foundation given in this paper the issues of sessions and replies can be solved in future work by adapting the functionality and properties.

Our main contributions are:
	(a) a privacy analysis of the ideal functionality of Camenisch and Lysyanskaya;
	(b) a rigorous analysis of the original properties; 
	(c) the design of improved properties that provably achieve the ideal functionality;
	(d) a new attack on HORNET, that similarly is possible on the improved Minx (and in slight violation of their models, on TARANET and Sphinx);
	(e) demonstrations of flaws in the privacy proofs of the above named formats and systems;
	(f) a demonstration how to prove the corrected properties.


\paragraph*{Outline} We first introduce the background, then analyze the ideal functionality, followed by the explanation of the Sphinx flaw and original properties. After this we show weaknesses of the original properties, construct new properties and prove them secure. Next, we explain the new HORNET attack and the flaw in the proofs. Finally, we prove a variation of Sphinx private, discuss our findings and conclude the paper.


\section{Background} \label{sec:background}

This section explains the adversary model, \ac{OR} 
and selected existing systems based on \ac{OR}. 
We further introduce the formal proof strategy~\cite{camenisch2005formal} 
and the used privacy definitions~\cite{ownFramework}.

For explaining \ac{OR}, we consider the scenario of whistleblower Alice who wants to leak sensitive information to media agent Bob and uses open anonymization systems (like Tor) to hide from a regime that deploys mass surveillance.

\subsection{Adversary Model}
Assuming a nation state adversary we have to expect a global attacker with full control over the Internet infrastructure.
This entails the possibility to observe all links and to actively drop, delay, modify, and insert packets on any link.
Given the open nature of anonymization systems, the adversary can easily provide a large subset of nodes, which seemingly run the anonymization system, but are really under her full control.
She hence knows all secret keys of those nodes, and she can modify, drop, and insert packets at each of them.
Even the receivers are untrusted and considered potentially under control of the adversary, and as the system is open, the adversary may also act as one or a set of senders, seemingly using the anonymization system parallel to Alice.
We assume full collusion between all adversarial parties, but follow the common assumption that the attacker is limited to probabilistic polynomial time algorithms (PPT).
These assumptions are common for onion routing, and they correspond to the model in \cite{camenisch2005formal}.

\subsection{\acf{OR}}\label{sec:ORBackground}
Considering the scenario, sending her message to Bob, the journalist, Alice requires that both Bob and the regime shall not learn that she was the individual sending the message.
Given the existence of a trusted proxy, she can encrypt her message with the public key of the proxy and send it there, to be decrypted and forwarded to Bob on her behalf.
Her identity then is hidden in the set of all users that communicate over this proxy at the same time. The set of these users is commonly called her \emph{anonymity set}.


Given the open nature of the system, Alice cannot trust any single proxy completely.
She hence chooses a chain of proxies,  hoping that one of the proxies is honest and does not collaborate with the adversary.
To hide the mapping between the packets that arrive at and depart from a proxy, she consecutively encrypts the packet for each of the proxies on the chain, and includes a header signaling where to forward the packet next.
Each proxy locally decrypts and forwards the packet. The last proxy decrypts it to the original message and forwards it to Bob.

As the packet is encrypted in several layers that consecutively are removed, the scheme is commonly called \emph{onion encryption}.
The proxies hence often are called \emph{onion routers}, or \emph{relays}.


Decrypting at the relays yields the intermediate header and a shorter onion for the next relay.
Corresponding length reductions of the onions would leak information that the adversary could use to link observed packets arriving and departing at an honest relay.
Onions hence are usually padded to a fixed length that is globally known, which restricts the maximum length of the payload as well as the number of relays on the path that can be addressed. 
We therefore assume the maximum path length $N$ in terms of hops between an honest sender and a receiver.
\begin{assumption}
The \ac{OR} protocol has a maximum path length of $N$.
\end{assumption}


Protection in \ac{OR} follows from hiding the link between incoming and outgoing onions at a relay.
Should the adversary by chance control all proxies that are chosen for an onion, she can trivially reversely link outgoing to incoming onions for all relays, and hence identify Alice as the original sender of a message delivered to Bob.
As the path is chosen by Alice who actively aims to remain anonymous towards Bob and the regime, she will pick a path solely containing corrupted relays only rarely, by mistake.
We therefore, deem it suitable to add the following assumption for our analysis:
\begin{assumption}\label{as:honestOnPath}
There is at least one honest relay on the chosen path, if the sender is honest.
\end{assumption}

Further, as the adversary can actively insert packets, she can replay the same onion at the honest relay and observe the same behavior twice. 
\ac{OR} protocols hence usually implement a replay protection, by detecting and dropping replayed onions.
For an easier analysis, we limit our scope to replay protection mechanisms that drop onions that have already been processed:
\begin{assumption}
The replay protection, if one is used, drops bit-identical onions.
\end{assumption}

\subsection{Network Model}
Onion Routing can be used in two different modes: the receiver participating in the anonymization protocol, or not. 
The first case considers an \emph{integrated system} to be set up for anonymous communication. 
The receiver will act as an onion router and, while processing an onion, discover that it is intended for herself. 
In the second case messages are anonymized as a \emph{service} and the receiver is unaware of the anonymization happening. The last router, called  \emph{exit node},  discovers that the message needs to be forwarded outside the anonymization network to reach its receiver.

	\subsection{Existing Schemes and Systems} \label{sec:BackgroundHornet}
\label{sec:backgroundSphinx}

Danezis and Goldberg~\cite{danezis_sphinx:_2009} define {\em Sphinx}, a packet format for secure \ac{OR}.
Sphinx's goals are to provide bitwise unlinkability between onion layers before and after an honest node, resistance against all active tagging attacks to learn the destination or content of a message, and space efficiency. 
Hiding the number of hops an onion already traveled, and the indistinguishability of both forward onions as well as response onions on a reply channel were considered to further strengthen privacy. 
Their network model assumes anonymization services, and their adversary model mainly matches the above description.
Traffic analysis, flooding or denial of service are however excluded. Tagging attacks, i.e. an adversary modifying onions before reinjecting them, on the other hand are explicitly allowed.

Sphinx's onion layers consist of a header that contains all path information except the receiver, and a payload that contains the protected message and protected receiver address. 
Padding and multiple cryptographic primitives are used for construction and processing of onions, but the integrity of the payload at each layer is not protected by Sphinx as this would conflict with their support for replies. Tampering with the payload is only recognized at the exit node. 
As security proof, Danezis and Goldberg prove the properties of~\cite{camenisch2005formal} for Sphinx.

Predecessors to Sphinx were  \emph{Minx}~\cite{danezis2004minx} and its fixed version~\cite{shimshock_breaking_2008}. Like Sphinx, neither of the two protects the integrity of the payload at the relays.
Beato et al.  proposed a \emph{variant of Sphinx}~\cite{improvedSphinx} that neglects replies and replaces the cryptographic primitives to increase performance and security, and thereby protects the integrity of the payload at each relay.

Subsequent to the work on packet formats, Chen et al. proposed the protocol {\em HORNET}~\cite{chen_hornet:_2015}
as a high-speed, highly scalable anonymity system for the network layer. 
The authors claim that HORNET protects the anonymity of Alice against a slightly restricted adversary compared to our attacker: The attacker does actively control a fraction of the relays (including the receiver), but corruption of links is not explicitly mentioned. 
Further, traffic analysis attacks are excluded as in the case of Sphinx. 
They assume an integrated anonymization network including the receiver. 
HORNET  distinguishes between a setup phase and a transmission phase.
It adapts Sphinx  for the setup phase to create an anonymous header that allows for routing data in the subsequent transmission phase.
Multiple cryptographic primitives are used in the construction and processing of packets in both phases.
Similar to Sphinx, HORNET's data transmission phase does not protect the integrity of the payload at each relay. 
Further, at every relay the payload is decrypted with a block cipher in CBC mode.

Extending HORNET to protect against partially stronger adversaries, \emph{TARANET}~\cite{ chen_taranet:_2018} bases its setup on Sphinx as well. 
Additionally, it proposes packet-splitting as a traffic-shaping technique to withstand some traffic-analysis. Therefore, however, shared trust between sender and receiver is presumed.

The privacy of HORNET's and TARANET's setup phase is claimed to follow from Sphinx. 
The privacy of their data transmission phase is proven following the same proof technique from~\cite{camenisch2005formal},  similar as  in the improved Minx~\cite{shimshock_breaking_2008} and  Sphinx.
 
 	\subsection{Formally treating \ac{OR}}\label{sec:camenischBackground}
  	Towards rigorous analysis of \ac{OR}, Camenisch and Lysyanskaya~\cite{camenisch2005formal} specified an ideal functionality in the \ac{UC} framework  and defined properties to ease the analysis of \ac{OR} protocols\footnote{Although designed for the integrated system model, it applies to the service model as well (except for renaming \emph{recipient} to \emph{exit node}) if no protection outside of the \ac{OR} protocol exists.  There the ideal functionality however only considers the anonymization network and additional private information might leak when the packet is sent from the exit node to the receiver.}. 

\subsubsection{UC Framework~\cite{canetti2001universally}}
An \emph{ideal functionality} in the \ac{UC} framework is an abstraction of a real protocol that expresses the security and privacy properties as required in the real protocol. 
Proving that the real protocol realizes the ideal functionality implies proving that attacks on the real protocol do not reveal anything to the adversary she would not learn from attacks on the ideal functionality.

\subsubsection{Formal \ac{OR} Scheme}\label{model:onionRoutingScheme}
To model \ac{OR},~\cite{camenisch2005formal} defines an \emph{Onion Routing Scheme} as the set of three algorithms:
\begin{itemize}
\item Key generation algorithm $G$: \((PK, SK) \gets G(1^\lambda , p, P )\) with public key  $PK$,  secret key $SK$,  security parameter $\lambda$, public parameter $p$ and router name $P$
\item Sending algorithm $\mathrm{FormOnion}$: \( (O_1, . . . , O_{n+1} ) \gets \mathrm{FormOnion}(m, (P_1, \dots, P_{n+1}), (PK_1, \dots, PK_{n+1}))\) with $O_i$ being the onion layer to process by router $P_i$, $m$ the message, and $PK_i$ the public key belonging to router $P_i$
\item Forwarding algorithm $\mathrm{ProcOnion}$: \((O',P') \gets \mathrm{ProcOnion}(SK,O,P)\) with $O'$ the processed onion that is forwarded to $P'$ and $P$ the router processing $O$ with secret key $SK$. $O'$ and $P'$ attains $\perp$ in case of error or if $P$ is the recipient.
\end{itemize}

\subsubsection{Properties}\label{secBG:properties}
\cite{camenisch2005formal} defines three \emph{security properties} for \ac{OR} schemes and proves that those imply realizing their ideal \ac{OR} functionality, i.e. being private and secure.  Later works~\cite{chen_hornet:_2015, chen_taranet:_2018, danezis_sphinx:_2009}  split one of the properties in two. The resulting four properties are Onion-Correctness, Onion-Integrity, Onion-Security and  Wrap-Resistance:

\emph{Onion-Correctness} requires that all messages use the intended path and reach the intended receiver in absence of an adversary.
\emph{Onion-Integrity} limits the number of honest relays that any onion (even one created by the adversary) can traverse.
\emph{Onion-Security} states that an adversary observing an onion departing from an honest sender and being routed to an honest relay, cannot distinguish whether this onion contains adversarial chosen inputs or a random message for the honest relay.
The adversary is even allowed to observe the processing of other onions at the honest relay via an oracle.
\emph{Wrap-Resistance} informally means that an adversary cannot create an onion that after processing at a relay equals an onion she previously observed as an output at another relay, even if she has full control over the inputs. 

  	\subsection{Analysis Framework}\label{sec:analysisTool}
  	We use the framework of Kuhn et al.~\cite{ownFramework} that unifies the privacy goals of existing theoretical analysis frameworks~ like AnoA~\cite{backes2017anoa} and others~\cite{bohli_relations_2011, gelernter_limits_2013, hevia_indistinguishability-based_2008}. It introduces a well-analyzed hierarchy of privacy goals and thus allows our analysis results for \ac{OR} to be easily comparable.

\subsubsection{Privacy Goals}
The idea of the analysis  follows game-based security-proofs. 
It challenges an adversary to distinguish two simulations of the protocol that differ only in protected parts of the communications (e.g. who the sender of a certain message was). 
Each communication in this context contains a sender, receiver, message and auxiliary information, like, for our purpose, the path included in the onion. 
The communications input for the two simulations are called scenarios.
They are freely chosen by the adversary to reflect the worst case. 
\emph{Privacy notions} specify formally in which elements the scenarios are allowed to differ, or, in other words, which information has to be protected by the protocol.

Four privacy notions are of specific interest when analyzing \ac{OR}.
The first is a strong form of confidentiality: The adversary is unable to decide which of two self-chosen messages was sent in the simulation. 
As thus the message is unobservable, this notion is called \emph{Message Unobservability} ($\conf$). 

The second corresponds to our example above, and is a form of sender anonymity:
Informally speaking, the adversary is unable to decide, which of the messages that she provided is sent by which of the senders that she chose.
As thus she cannot link the sender to its message, this notion is called \emph{Sender-Message Unlinkability} ($\SML$). 

The third, conversely, is a form of receiver anonymity:
The adversary is unable to decide,  which of the messages that she provided is received by which of the receivers that she chose.
As thus she cannot link the receiver to its message, this notion is called \emph{Receiver-Message Unlinkability} ($\RML$). 

The fourth is a form of relationship anonymity: The adversary is unable to decide which pairs of two self-chosen senders and two self-chosen receivers communicate with each other. 
As thus she cannot link the sender to the receiver, this notion is called \emph{Sender-Receiver Unlinkability} ($\PairSRL$).\footnote{This notion is called $(SR)\overline{L}$ in~\cite{ownFramework}.}

\subsubsection{Adversary}
All the privacy notions can be analyzed for different user (sender and receiver) corruption.
Therefore, options for user corruption are defined and added to the abbreviation of privacy notion $X$:
\begin{itemize}
\item [$\noCorr{X}$:] no users are corrupted, but some relays or links can be,
\item  [$\corrOnlyPartnerSender{X}$:]  only receivers, relays, and links can be corrupted, but no senders,
\item [$\corrStandard{X}$:] senders, receivers, relays, and links can be corrupted (some limitations apply to prevent the adversary to trivially win the game)
\end{itemize}

The framework introduces adversary classes as part of the game, known to the adversary.
They specify modifications of the input from, as well as the output to the adversary. Their purpose is to fine-tune the adversary capabilities e.g. to make sure that Assumption~\ref{as:honestOnPath} is met in the scenarios the adversary is challenged to distinguish.
 

\subsubsection{Relation of Goals}
Analyzing \ac{OR} we are interested in determining the strongest notion that it achieves.
The analysis in the framework then allows statements even for notions that are not directly analyzed, as it proves a hierarchy: By showing that a certain notion is achieved, all implied (weaker) notions are shown to be achieved as well. 

Given the claims in~\cite{camenisch2005formal,danezis_sphinx:_2009,chen_hornet:_2015}, we are specifically interested in the above mentioned notions  of sender- as well as receiver-message unlinkability ($\SML$ and $\RML$), which each implies sender-receiver unlinkability ($\PairSRL$), and the independent message unobservability ($\conf$) 
\iflong 
, as highlighted red in Fig. \ref{fig:hierarchy} 
(the exact definition of the notions are described in Appendix \ref{app:formalHierarchy}).
\begin{figure}[htb]
  \centering
  \includegraphics[width=0.45\textwidth]{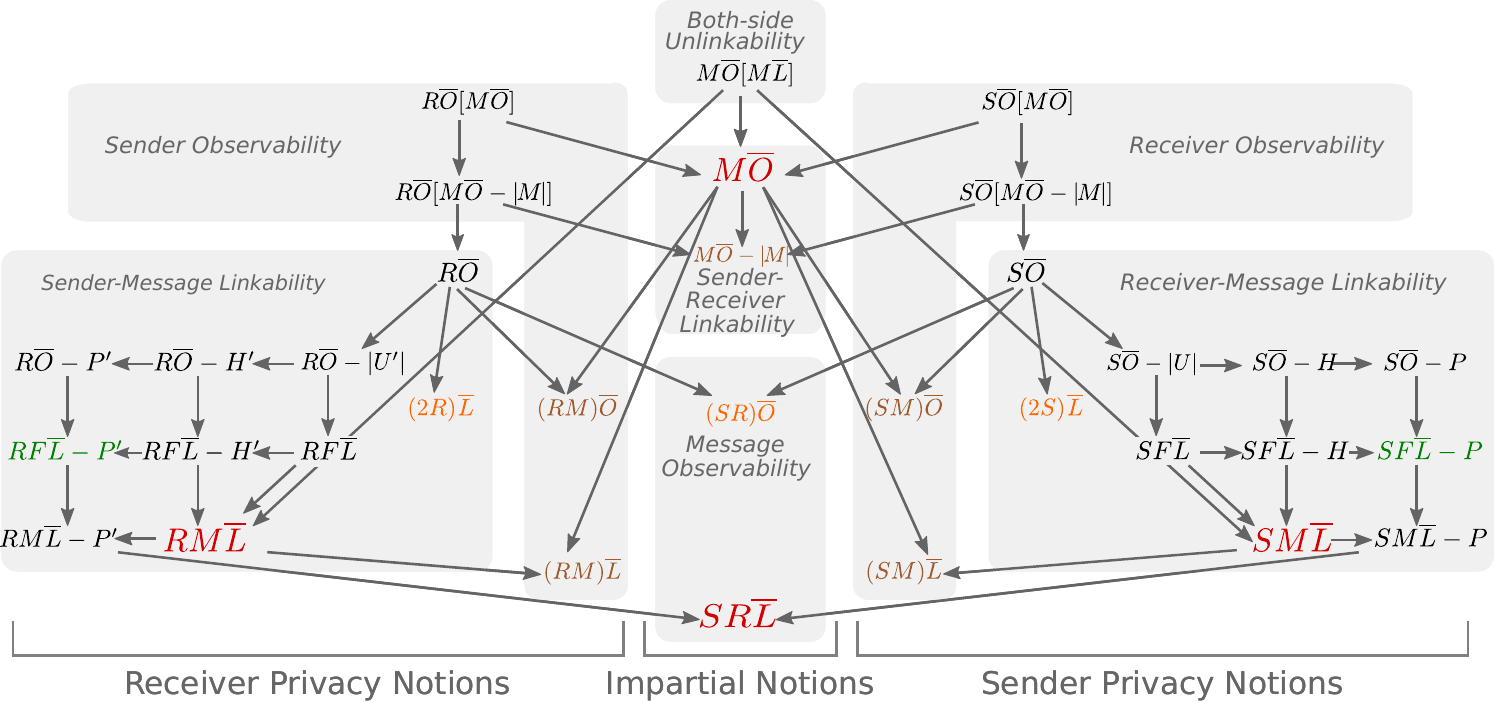}
  \caption{Excerpt of the hierarchy of~\cite{ownFramework}}
  \label{fig:hierarchy}
\end{figure}
\else
(See Appendix~\ref{app:notionsShort} for detailed definitions. For the analyses of other notions we refer the interested reader to the extended version of this paper at~\cite{extendedVersion}).
\fi

\section{Analyzing the Ideal OR Functionality}~\label{sec:analysisFunctionality}
There indeed is confusion about which privacy the ideal functionality $\mathcal{F}$ of~\cite{camenisch2005formal} actually guarantees. 
The work itself states only that ``it’s not hard to see that $\mathcal{Z}$ [the environment, a construct of the UC Framework that gets all observations of the adversary] learns nothing else than pieces of paths of onions formed by honest senders (i.e., does not learn a sub-path's position or relations among different sub-paths). Moreover, if the sender and the receiver are both honest, the adversary does not learn the message.'' 

\cite{ayele_analysis_2011,backes_tuc:_2014,pohly_modeling_2016, potgieter_introduction_2009,  tschorsch_onions_2016} state that this translates to the degree of anonymity Tor provides, although \cite{degabriele_untagging_2018, feigenbaum_anonymity_nodate} argue that it is not applicable for Tor. 
\cite{balkovich_electronic_2015} states that it  ``hide(s) the source and destination over a network,"  \cite{peng_general_2011} even understood it as ``a concrete ZK proof of senders' knowledge of their messages" and  \cite{berman_provable_2015} as  ``provable reduction from unlinkability to traffic analysis." \cite{fujioka_security_2010} states that the privacy is  ``that an adversary cannot correctly guess relations between incoming messages and outgoing messages at onion routers, and [...] that each onion router cannot know the whole route path of any onion." While \cite{feigenbaum_anonymity_nodate} and \cite{feigenbaum_model_2007}  realize that the anonymity is not analyzed and suspect it to be close to the one of \cite{mauw2004formalization}, which claims to have sender and receiver anonymity against a global passive adversary \cite{feigenbaum_model_2007}.

We hence set out to analyze the actual privacy guarantees of the ideal functionality.

	\subsection{Ideal Functionality $\mathcal{F}$}
  	\label{secBG:IdealFunc}
 
Recall the basic idea of \ac{OR}: an adversary can only track the communication from the sender until the first honest relay. After this she can no longer link the onion to the sender (or the route before the honest relay).  Further, any onion layer does hide the included message and remaining path, as they are encrypted.

The ideal functionality for \ac{OR} of~\cite{camenisch2005formal} therefore uses temporary random IDs in place of onion packets.
All network information necessary to create onions (sender, receiver, path, message, hopcount, a randomly chosen session ID) are stored within the ideal functionality, inaccessible to the adversary.

Sending the onion along a path of relays is represented by informing all relays about the temporary IDs of the corresponding onions they receive.
The temporary ID is replaced with a new randomly drawn ID at every honest node.

The adversary in this model learns the temporary IDs on links and at the corrupted relays, and if the receiver is corrupted also the corresponding plaintext message.
She specifically does not learn which departing ID at an honest relay corresponds to which received ID.
The adversary however is allowed to decide when an ID is delivered to the next relay (and thus whether it is delivered at all), as she is assumed to control all links. 

Nitpicking, we add a small detail to the ideal functionality as suggested by Camenisch and Lysyanskaya: The functionality represents the case of an honest sender well. However, for a corrupted sender the adversary trivially learns the complete path and message as the sender chooses it. 
As no secure protocol can remove information an adversary already knows, we add that the functionality outputs all information about the onion (sender, receiver, path, etc.) together with the temporary ID, if its sender is corrupted. 
The ideal functionality is detailed in Algorithm \ref{algo}.
\iflong 
Further, the description of~\cite{camenisch2005formal} with highlighted small changes is in Appendix~\ref{app:idealFunctionality}. 
\fi

  	\subsection{Analysis under Restricted Adversary Model}~\label{sec:CorrectedAdversaryModel}
  	The ideal functionality was designed to capture the cryptographic properties of onion routing. 
Therefore, it does not protect against dropping or delaying onions. Hence, for this analysis we need to exclude attacks that result in dropping or delaying onions.\footnote{However, we include modification attacks that do not lead to dropping or delaying onions, like classical tagging attacks.  A protocol realizing the ideal functionality might either drop modified onions or keep them in the network, but prevent the attacker from learning critical information from them (i.e. the modified onion's path and message have no correlation to the original one's).}
Given this adversary model\footnote{This limitation is not significant in practical implementations as they need to employ additional protection against privacy attacks based on dropping and delaying onions.} we are able to prove the privacy goals expected for \ac{OR}.

\begin{algorithm}\scriptsize
\SetKwFunction{NewOnion}{Process\_New\_Onion}
\SetKwFunction{NextStep}{Process\_Next\_Step}
\SetKwFunction{Deliver}{Deliver\_Message}
\SetKwFunction{Forward}{Forward\_Onion}
\SetKwFunction{Corrupt}{Output\_Corrupt\_Sender}

  \SetKwProg{myproc}{On message}{}{}
    \SetKwProg{myproctwo}{Procedure}{}{}
    \textbf{Data structure:}\\
  $\textrm{Bad}$: Set of Corrupted Nodes\\
  $L$: List of Onions processed by adversarial nodes\\
  $B_i$: List of Onions held by node $P_i$\\ \tcp{ Notation:\\
  $\mathcal{S}$: Adversary (resp. Simulator)\\
  $\mathcal{Z}$: Environment\\
  $\mathcal{P}=(P_{o_1}, \dots, P_{o_n})$: Onion path\\
  $O=(sid, P_s,P_r,m,n,\mathcal{P},i )$: Onion = (session ID, sender, receiver, message, path length, path, traveled distance)\\
  $N$: Maximal onion path length}
  
  \myproc{\NewOnion{$P_r, m, n, \mathcal{P}$}  from $P_s$}{
  \tcp{$P_s$ creates and sends a new onion (either instructed by $\mathcal{Z}$ if honest or $\mathcal{S}$ if corrupted)}
    \eIf {$|\mathcal{P}|>N$ \tcp*{selected path too long}} 
   {Reject\;}
   { $sid \gets^R \text{session ID}$ \tcp*{pick random session ID}
   $O \gets (sid,P_s,P_r,m,n,\mathcal{P},0)$  \tcp*{create new onion}
   \Corrupt{$P_s, sid, P_r, m, n, \mathcal{P}, \mathrm{start}$}\;
   \NextStep{$O$}\;}}
   
      \myproctwo{\Corrupt{$P_s, sid, P_r, m, n, \mathcal{P}, temp$}}{
      \tcp{Give all information about onion to adversary if sender is corrupt}
    \uIf{$P_s \in \textrm{Bad}$}
    {Send ``$temp$ belongs to onion from $P_s$ with $sid,P_r,m,n,\mathcal{P}$'' to $\mathcal{S}$\;}
   }
  
    \myproctwo{\NextStep{$O=(sid, P_s, P_r,m,n,\mathcal{P}, i)$}}{
    \tcp{Router $P_{o_i}$ just processed $O$ that is now passed to  router $P_{o_{i+1}}$}

    \eIf{$P_{o_{j}} \in \textrm{Bad} \text{ for all } j >i$ \tcp*{All remaining nodes including receiver are corrupt}}
    {Send ``Onion from $P_{o_i}$ with message $m$ for $P_r$ routed through $(P_{o_{i+1}}, \dots, P_{o_n})$'' to $\mathcal{S}$\;
   \Corrupt{$P_s, sid, P_r, m, n, \mathcal{P}, \mathrm{end}$}\;
    }
    { \tcp{there exists an honest successor $P_{o_j}$}
    $P_{o_j} \gets \text{$P_{o_k}$ with smallest $k$ \text{ such that } $P_{o_k}\not \in \textrm{Bad}$}$
    $temp\gets^R \text{temporary ID}$\; 
    Send ``Onion $temp$ from $P_{o_i}$ routed through $(P_{o_{i+1}}, \dots, P_{o_{j-1}})$ to $P_{o_j}$'' to $\mathcal{S}$\;
    \Corrupt{$P_s, sid, P_r, m, n, \mathcal{P}, temp$}\;
    Add $(temp,O, j)$ to $L$\tcp*{see \Deliver{$temp$} to continue this routing}
    }}

    \myproc{\Deliver{$temp$} from $\mathcal{S}$}{
     \tcp{Adversary $\mathcal{S}$ (controlling all links) delivers onion belonging to $temp$ to next node}
    \uIf {$(temp, \_, \_) \in L$}
   {Retrieve $(temp,O=(sid, P_s, P_r, m, n, \mathcal{P}, i), j)$ from $L$\;
   $O \gets (sid, P_s,P_r,m,n,\mathcal{P},j)$\tcp*{$j$th router reached}
   \eIf{$j<n+1$}
   {$temp'\gets^R \text{temporary ID}$\;
   Send ``$temp'$ received'' to $P_{o_j}$\;
   Store $(temp', O)$ in $B_{o_j}$\tcp*{See \Forward{$temp'$} to continue}}
   {\uIf{$m \neq \perp$}
{Send ``Message $m$ received'' to $P_r$ }
   }}
   }
  
      \myproc{\Forward{$temp'$}  from $P_i$}{
     \tcp{$P_i$ is done processing onion with $temp'$ (either decided by $\mathcal{Z}$ if honest or $\mathcal{S}$ if corrupted)}
      \uIf {$(temp', \_) \in B_i$}
   {Retrieve $(temp',O)$ from $B_i$\;
   Remove $(temp',O)$ from $B_i$\;
   \NextStep{O}\;  }
   
   }

    \caption{Ideal Functionality $\mathcal{F}$ }
    \label{algo}
\end{algorithm}

\subsubsection{Instantiation of the Framework}
As the path $\mathcal{P}$ is an important input to an onion, we model it specified in the auxiliary information of a communication.
The communications, including the auxiliary information, are picked arbitrarily by the adversary in the framework. 
Assumption \ref{as:honestOnPath} however requires at least one honest relay to exist on the path for our analysis.
For this reason, we define the adversary class $\TAAdvClass$ to modify the path: $\TAAdvClass$ replaces the paths as chosen by the adversary with alternative paths, whenever an honest sender constructing the onion. 
The replacements are chosen at random from the set of paths with valid length that include at least one common honest relay.

We further restrict the adversary to be incapable of timing-based traffic analysis. Hence, in the traffic analysis restricted adversary class $\TAAdvClass$ the adversary  must not use any timing information about the onion, i.e. the adversary class shuffles all the outputs from the ideal functionality for communications that are processed together before handing them to the adversary. 
Since the adversary is incapable of traffic analysis, the adversary class prohibits to delay packets.
To further prohibit replay attacks, which we consider as special kind of traffic analysis attack, the adversary class drops any duplicated deliver requests from the adversary.

\subsubsection{Analysis}
Recall, the ideal functionality only outputs the message to the adversary for a corrupted receiver or sender. So, the message is protected if sender and receiver are honest or corrupted users get the same messages in both scenarios (limitation in $\corrStandard{X}$) and confidentiality $\conf$ is achieved.
 
Due to the adversary class $\TAAdvClass$, the adversary observes all outputs corresponding to the inputs of an honest relay in random order.
Combined with random ID replacement, this prevents the adversary from linking departing onions to their received counterparts.
 However, it can still be observed that a user is actively sending if she has not previously received an onion (or: that a user is receiving, if upon receiving an onion she subsequently does not send one).
 This leads to Theorem~\ref{theo:AnalysisRightAdv}, which we prove in
 \iflong    
 Appendix \ref{app:moreThanConf}.
 \else
our extended version~\cite{extendedVersion}.
 \fi

\begin{theorem}\label{theo:AnalysisRightAdv}
$\originalIdeal$ achieves $\corrStandard{\conf}$, $\corrOnlyPartnerSender{\SML}$ and $\noCorr{\RML}$, and those implied by them, but no other notions of  \cite{ownFramework} for  $\TAAdvClass$.
\end{theorem}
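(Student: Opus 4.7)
The plan is to split the theorem into a positive half, namely that $\originalIdeal$ achieves $\corrStandard{\conf}$, $\corrOnlyPartnerSender{\SML}$, and $\noCorr{\RML}$ (whence all weaker notions follow automatically from the hierarchy of~\cite{ownFramework}), and a negative half, namely that every strictly stronger notion is refuted by exhibiting an explicit distinguisher against $\originalIdeal$ inside $\TAAdvClass$.

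For the positive direction, the strategy is to characterize the adversary's view and show it is identically distributed in the two challenge scenarios. Under $\TAAdvClass$ (which shuffles all outputs within a batch, forbids delays, and drops duplicate delivery requests) the transcript for a single honest-sender onion reduces, by Assumption~\ref{as:honestOnPath}, to fresh uniform temporary IDs on every sub-path bracketed by two corrupted entities together with the piece of path revealed in between. For $\corrStandard{\conf}$, a message reaches the adversary only through the corrupt-sender branch or the final delivery to a corrupt receiver, and the equality constraint built into $\corrStandard{X}$ forces both scenarios to agree on such messages, so the two transcripts coincide. For $\corrOnlyPartnerSender{\SML}$, no sender is corrupt, \texttt{Output\_Corrupt\_Sender} never fires, and a hybrid over the honest relays on each onion's path shows that the joint distribution of fresh shuffled IDs is independent of which of the two candidate senders produced which onion. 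The argument for $\noCorr{\RML}$ is symmetric and uses additionally that, since no user is corrupt at all, no ``Message $m$ received'' is ever output to the adversary.

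For the negative direction, I would walk the hierarchy of~\cite{ownFramework} and exhibit, for each notion strictly stronger than the three above, a concrete pair of scenarios the adversary distinguishes with overwhelming probability. The leaks I would exploit are: (a) with a corrupt sender, \texttt{Output\_Corrupt\_Sender} hands the adversary the entire onion description (path and message included); (b) with a corrupt receiver, the final \texttt{Deliver\_Message} step outputs the plaintext; and (c) for any corrupt element on an honest path, the mere existence of the corresponding output witnesses activity of the honest endpoint. For notions admitting sender corruption beyond what $\corrStandard{\conf}$ already protects, leak (a) breaks them; for receiver-hiding notions admitting receiver corruption, leak (b) breaks them; for notions requiring the concealment of send/receive activity patterns, leak (c) breaks them.

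The main obstacle will be the negative direction: every notion in the hierarchy comes with its own equality constraints on corrupted users and auxiliary information, so the counterexample scenarios must simultaneously respect those constraints, the rules of $\TAAdvClass$ (notably Assumption~\ref{as:honestOnPath}), and still induce a visibly different transcript. A secondary subtlety on the positive side is handling batches of concurrent onions meeting at a single honest relay; I would address this by arguing per batch that the multiset of outgoing IDs is a uniformly random permutation of fresh identifiers, independent of the input-to-output matching, and then composing this observation across all honest relays to obtain identical global transcripts.
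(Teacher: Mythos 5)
Your overall decomposition mirrors the paper's: it splits the theorem into three positive lemmas (one per notion, with weaker notions following from the hierarchy of~\cite{ownFramework}) and one negative lemma that attacks the \emph{minimal} notions not implied by the achieved ones, so that the hierarchy propagates the impossibility upward. Your argument for $\corrStandard{\conf}$ is essentially the paper's. However, there is a genuine gap in your positive argument for $\corrOnlyPartnerSender{\SML}$ and $\noCorr{\RML}$: you derive the shape of the transcript from Assumption~\ref{as:honestOnPath} plus the shuffling/no-delay/no-replay features of $\TAAdvClass$, but you omit the one feature that actually carries the proof, namely that $\TAAdvClass$ \emph{replaces} the adversary-chosen paths of honest senders with random paths that all share a \emph{common} honest relay. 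Assumption~\ref{as:honestOnPath} alone (one honest relay somewhere on each path) is not enough: since the paths are adversary-chosen auxiliary information, the adversary could route the two challenge onions through \emph{distinct} honest relays $H_1,H_2$; then the sub-path ending at $H_1$ and the sub-path leaving $H_1$ trivially belong to the same onion regardless of shuffling, the sender is read off the first sub-path and the message/receiver off the last, and $\SML$ is broken. Your ``hybrid over the honest relays'' showing that the ID distribution is independent of the sender-to-onion matching is only true once all onions in the batch are forced through a common honest relay and every to-honest/from-honest sub-path combination remains a valid path that the adversary cannot exclude --- this is exactly the step the paper's Lemma on $\corrOnlyPartnerSender{\SML}$ makes explicit and that your sketch silently assumes.

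On the negative side your three leak categories do cover most of the paper's counterexamples ($\corrStandard{\PairSRL}$ via corrupt senders, $\corrOnlyPartnerSender{\SMO}$/$\corrOnlyPartnerSender{\RMO}$/$\corrOnlyPartnerSender{\MOm}$ via corrupt receivers, $\noCorr{\SRO}$ and $\noCorr{\PairSL}$/$\noCorr{\PairRL}$ via observed endpoint activity), but the hardest cases, $\noCorr{\SFLP}$ and $\noCorr{\RFLP}$, are not settled by ``the existence of an output witnesses activity'': because $\TAAdvClass$ randomizes paths, the target user also appears as a randomly inserted relay in other onions' revealed sub-paths, so the distinguisher must be a counting argument with an explicit non-negligible advantage bound (the paper bounds the probability that the frequency of $A$ as a random relay masks the difference in how often $A$ sends). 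You flag the negative direction as the main obstacle, which is fair, but as written the proposal would not go through without adding the path-replacement ingredient to the positive lemmas and the probabilistic counting argument to the $\SFLP$/$\RFLP$ attacks.
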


Note that under this adversary model sender anonymity ($\SML$) is achieved even if the receiver is corrupted. From the hierarchy of \cite{ownFramework}, we know that this strong version of sender anonymity also implies relationship anonymity ($\PairSRL$). Note further that the receiver anonymity ($\RML$) is only achieved if neither the sender nor the receiver is compromised. Thus, as soon as the sender is corrupted, receiver anonymity is no longer  achieved.

\subsection{First Summary} \label{sec:IdealAnalysisConsequences}

We have seen that the ideal functionality indeed provides the privacy expected from OR.
Showing that a system realizes the ideal functionality proves these privacy notions for an adversary that cannot do timing-based traffic analysis. Even if in practice stronger adversary models are assumed, proving the realization of the ideal functionality is a useful way to reduce the problem of proving privacy to the attacks excluded by our adversary class $\TAAdvClass$. 


\section{First Pitfall: Incomplete Properties}\label{sec:IncompleteProperties}

We first explain a known attack on Sphinx that should not be possible if Sphinx realizes the ideal functionality. Then we  analyze the properties to see why the insecurity was not detected in the proof: the properties are incomplete and some of them do not increase privacy. We further generalize the attack on Sphinx and introduce an insecure protocol to make the shortcoming obvious and to help us in the construction of a new improved property. After that, we present a second independent insecurity, a corresponding broken protocol and again construct a new property to protect against it. Finally, we ensure that no more properties are missing by proving that they indeed imply the ideal functionality.

	\subsection{Attack on Sphinx}\label{sec:zeroAttack}
  	In Sphinx as presented in~\cite{danezis_sphinx:_2009} the exit node receives $\beta$  as part of the header. 
$\beta$ contains the receiver address, an identifier, and a 0-bit string to pad $\beta$ for the exit node to a fixed length.
It is again padded with a filler string of random bits that compensates for the parts used to encrypt the earlier relay addresses. Further, the three components are XORed  with the output of a \ac{PRNG}.  

The exit node hence can learn the length of the chosen path\footnote{To the best of our knowledge this flaw is only mentioned and corrected in the Sphinx implementation so far: \url{https://github.com/UCL-InfoSec/sphinx/blob/c05b7034eaffd8f98454e0619b0b1548a9fa0f42/SphinxClient.py\#L67}} with the following attack:
The adversarial exit node observes  (after XORing)  where the first 1 bit after the identifier is. It knows that the filler string starts there or earlier and can determine by the length of the filler string a lower bound on the length of the path used.

Being able to know the length of the path is critical. If e.g. the routing topology is restricted or large parts of the path are only adversarial relays, this information limits the number of users under which the sender can hide and thus reduces her protection. According to the ideal functionality such an attack should not be possible if Sphinx, as proven with the properties of Camenisch and Lysyanskaya,  realizes the ideal functionality. 

  	\subsection{Analyzing  the Original Properties}~\label{sec:analysisCamenischProperties}
In this section  we have a closer look at the properties to see why the attack on Sphinx is not detected and we make four observations. 
The original definition of Onion-Correctness technically was not entirely correct, which we fix briefly.
Integrity and Wrap-Resistance do not add privacy to the proposed combination of properties, at all.
Onion-Security is required, but fails to protect against some weaknesses.

    		\subsubsection{Onion-Correctness}\label{sec:correctness}
    		Informally, \emph{Onion-Correctness} requires that all messages use the intended path and reach the intended receiver in absence of an adversary:
\begin{definition}[Original Onion-Correctness] \label{def:onionCorrectnessOrig}
Let \((G, \FormOnion, \ProcOnion)\) be an \ac{OR} scheme with maximal path length $N$.
Then  for all polynomial numbers of routers $P_i$, for all settings of the public parameters $p$, for all $(PK(P),SK(P))$ generated by $G(1^\lambda, p, P)$, 
for all $n < N$,
for all messages
$m \in \mathcal{M}$, and for all onions $O_1$ formed as 
{\center
  $ \displaystyle
    \begin{aligned} 
(O_1, \dots , O_{n+1} )\gets &\FormOnion(m, (P_1 , \dots , P_{n+1} ), \\
&(PK(P_1 ), \ldots , PK(P_{n+1} )))
    \end{aligned}
  $ 
\par}
the following is true: 
\begin{enumerate}[leftmargin=1.25cm]
\item  correct path: \(\mathcal{P}(O_1, P_1) = (P_1, \ldots , P_{n+1} )\),
\item correct layering: \(\mathcal{L}(O_1 , P_1 ) = (O_1 , \ldots , O_{n+1})\),
\item correct decryption:\\  \((m, \perp) = \ProcOnion(SK(P_{n+1} ), O_{n+1}, P_{n+1}) \),
\end{enumerate}
where $\mathcal{P}(O,P)$ returns the path included in $O$ and $\mathcal{L}(O,P)$ the onion layers.\smallskip
\end{definition}

This however cannot be achieved by Sphinx or almost any other system suggested or implemented so far.
They commonly use duplicate checks, which, practically implemented, may fail in a small number of cases (for example due to hash collisions) in reality.
We hence allow the requirements 1) - 3)  of the definition to fail with negligible probability, so that real systems can achieve Onion-Correctness at all.


\iflong 
This leads to the following \textcolor{blue}{changes} in Definition \ref{def:onionCorrectnessOrig}:
\begin{definition}(Onion-Correctness)\label{def:onionCorrectness}

[as in Definition \ref{def:onionCorrectnessOrig}]...
the following is true: 
\begin{enumerate}[leftmargin=1.25cm]
\item  correct path:\\ \(Pr[\mathcal{P}(O_1, P_1) = (P_1, \ldots , P_{n+1} )] \geq 1\textcolor{blue}{-negl(\lambda)}\),
\item correct layering:\\ \(Pr[\mathcal{L}(O_1 , P_1 ) = (O_1 , \ldots , O_{n+1})] \geq 1\textcolor{blue}{-negl(\lambda)}\),
\item correct decryption:\\ \(Pr[(m, \perp) = \ProcOnion(SK(P_{n+1} ), O_{n+1}, P_{n+1})] \\\geq 1\textcolor{blue}{-negl(\lambda)}\).
\end{enumerate}
\end{definition}
\fi

    		\subsubsection{Wrap-Resistance and Onion-Integrity}\label{sec:onionIntegrity}
    		\iflong 
\emph{Onion-Integrity} limits the number of honest relays that any onion (even one created by the adversary) can traverse.

\begin{definition}[Onion-Integrity i.a.w.~\cite{camenisch2005formal}]\label{def:onionIntegrity}
For all \ac{PPT} adversaries, the probability (taken over the choice of the public parameters \(p\), the set of honest router names \(\mathcal{Q}\) and the corresponding PKI, generated by $G$) that an adversary with adaptive access to \(\mathrm{ProcOnion}(SK(P ), \cdot, P )\) procedures for all \(P \in \mathcal{Q}\), can produce and send to a router \(P_1 \in \mathcal{Q}\) an onion \(O_1\) such that \(|\mathcal{P}(O_1,P_1)| > N\), where $\mathcal{P}(O_1,P_1)$ is the path the onion takes,  is negligible.
\end{definition}

\emph{Wrap-Resistance} informally means that given an onion, which resulted from processing at a certain relay, the adversary cannot add a layer to it, such that processing at another relay results in the same onion. 

\begin{definition}[Wrap-Resistance i.a.w.~\cite{camenisch2005formal}]\label{def:wrapResistance}
  Consider an adversary interacting with an \ac{OR} challenger as follows.
 
  \begin{enumerate}[leftmargin=1.25cm]
   \item The adversary receives as input a challenge public key \(PK\), chosen by the challenger by letting \((PK,SK) \leftarrow G(1^{\lambda}, p, P)\), and the router name \(P\).
    \item The adversary submits any number of onions \(O_i\) of her choice to the challenger, and obtains the output of \(\mathrm{ProcOnion}(SK, O_i, P)\).
    \item The adversary submits \(n\), a message \(m\), a set of router names \((P_1,\ldots,P_{n+1})\), an index \(j\), and \(n\) key pairs \(1 \leq i \leq n, i \neq j, (PK_i, SK_i)\).
      The challenger checks that the router names are valid, that the public keys correspond to the secret keys and if so, sets \(PK_j = PK\), sets bit \(b\) at random and does the following:
      \begin{itemize}
        \item If \(b=0\), let \small
          \((O_1,\ldots,O_j,\ldots,O_{n+1}) \leftarrow \mathrm{FormOnion} (m,( P_1, \ldots, P_{n+1}), (PK_1,\ldots,PK_{n+1}))\)
        \item \normalsize Otherwise, let $r \leftarrow^R \mathcal{M}$, and
          \((O_1,\ldots,O_j) \leftarrow \) \(\mathrm{FormOnion}(r,(P_1,\ldots,P_{j}),(PK_1,\ldots,PK_{j}))\)
      \end{itemize}
      The challenger returns \(O_1\) to the adversary.
          \item The adversary may submit any number of onions \(O_i \neq O_j\) of her choice to the challenger, and obtain the output of \(\mathrm{ProcOnion}(SK, O_i, P)\).
    \item The adversary submits a secret key \(SK'\), an identity \(P' \neq P_{j-1}\), and an onion \(O'\).
      If \(P'\) is valid and \(SK',O',P'\) satisfy the condition \(O_j=\mathrm{ProcOnion}(SK',O',P')\) he wins the game.
  \end{enumerate}
    
  An \ac{OR} scheme satisfies Wrap-Resistance if for all \ac{PPT} adversaries \(A\) the adversary's probability of winning is negligible.
\end{definition}
\fi

\iflong 
We realized that (in combination with Onion-Security and Onion-Correctness) Wrap-Resistance and Onion-Integrity are not related to any privacy goal in anonymous communication. To show this, we describe a template for adding Wrap-Resistance and Onion-Integrity of~\cite{camenisch2005formal} to any \ac{OR} protocol  with Onion-Security 
without reducing  the privacy relevant information an adversary learns and while preserving Onion-Correctness and Onion-Security.

Recall that Onion-Integrity limits the number of honest hops an onion can take to $N$. Our basic idea for achieving Onion-Integrity is to append $N$ chained MACs to any onion, such that each router on the path has to verify one of them (for their current onion layer) and such that modifying or including more MACs is not possible. Hence, an onion can only traverse $N$ honest routers, i.e. the protocol has Onion-Integrity.
 Used MAC blocks are replaced with deterministic padding similar to the scheme of~\cite{camenisch2005formal} to keep the length fixed.\footnote{A fixed length is necessary to achieve Onion-Security.} To achieve Wrap-Resistance, we assure that the deterministic padding depends on the identity of the processing router and that a collision for a different identity only happens with negligible probability. Thus, wrapping an onion, such that if it is processed with another router, creates the same deterministic padding is hard, i.e. Wrap-Resistance is achieved. The general structure of our appended extension and some further details that we explain in the next paragraph can be seen in Fig. \ref{fig:extension}.

\begin{figure}[htb]
  \centering
  \includegraphics[width=0.48\textwidth]{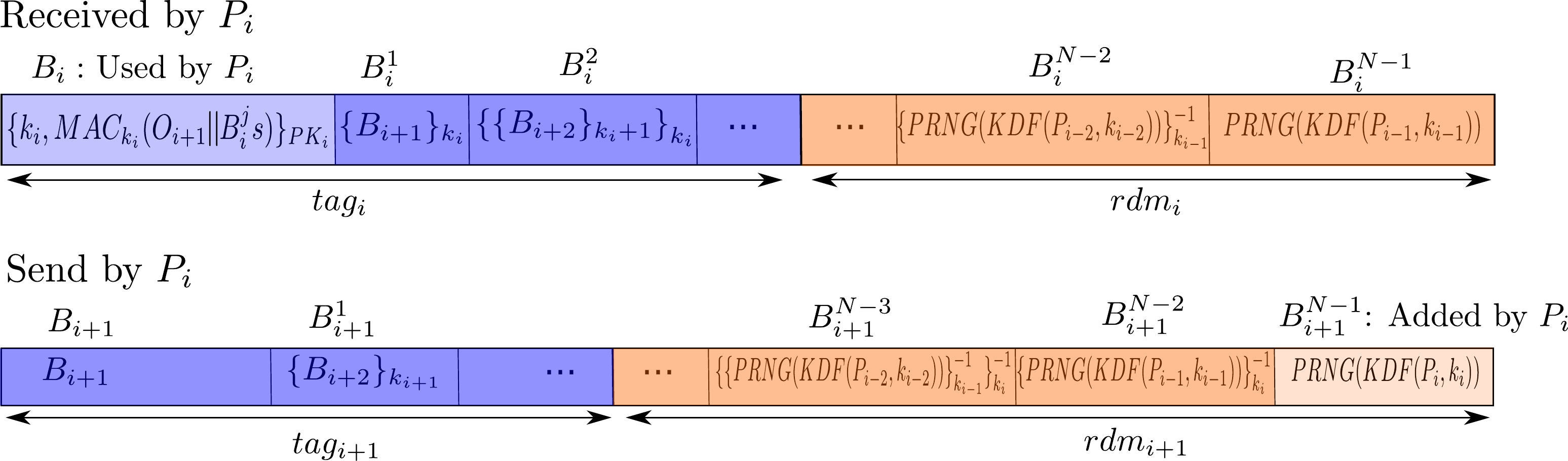}
  \caption{$ext_i$ (resp. $ext_{i+1}$, that is appended to the original onion layer $O_i$ (resp. $O_{i+1}$). Depicted in blue are the unused MAC blocks, in orange the deterministic padding. $\{X\}_k$ is short for encryption with key $k$. $k_i$ is the symmetric key, $PK_i$ the public key of $P_i$. $B_i^j$s is short for the concatenation of all blocks that follow in the extension, i.e. $j \in \{1, \dots, N-1\}$.}
  \label{fig:extension}
\end{figure}

\paragraph{Protocol Transformation}\label{sec:onionIntegrityTransformation}

For our transformation we require a number of efficient primitives:
\begin{itemize}
\item  \(Enc_{asym}:\mathcal{K}_{Pub}\times\mathcal{M}_{asym}\to\mathcal{C}_{asym}\):  a  non-malleable IND-CCA secure asymmetric encryption function for which public keys are shared in advance
\item \(Dec_{asym}:\mathcal{K}_{Priv}\times\mathcal{C}_{asym}\to\mathcal{M}_{asym}\): decryption function to \(Enc_{asym}\)
\item \(MAC: \mathcal{K}_{sym} \times \mathcal{M}_{MAC} \mapsto \mathcal{T}\): secure symmetric MAC scheme; a \ac{PRF}\footnote{This holds for constructions like HMAC using a \ac{PRF} as compression function (like SHA3).}
\item  \(Ver: \mathcal{K}_{sym} \times \mathcal{T} \times \mathcal{M}_{MAC} \mapsto \{0,1\}\): verification algorithm to $MAC$
\item \(Enc_{sym}:\mathcal{K}_{sym}\times\mathcal{C}_{asym}\to\mathcal{C}_{asym}\) with \(\mathcal{K}_{sym} = \{0,1\}^\lambda\): IND-CCA2 secure symmetric encryption function 
\item \(Dec_{sym}:\mathcal{K}_{sym}\times\mathcal{C}_{asym}\to\mathcal{C}_{asym}\) decryption function to $Enc_{sym}$
\item  \(PRNG:\mathcal{K}_{sym}\times\mathbb{N}\to\mathcal{C}_{asym}\): secure \ac{PRNG} 
\item \(embed:\mathcal{K}_{sym}\times\mathcal{T}\to\mathcal{M}_{asym}\):  bijective map
\item \(extract:\mathcal{M}_{asym}\to\mathcal{K}_{sym}\times\mathcal{T}\): inverse operation \(extract = embed^{-1}\)
\item   \(KDF: \mathcal{S}_{salt} \times\mathcal{K}_{sym}  \mapsto \mathcal{K}_{sym}\): collision resistant key derivation function with salt space $\mathcal{S}_{salt}$\footnote{For example HKDF~\cite{krawczyk2010hmac}, which is based on HMAC, is such a KDF.}
\end{itemize}

Let \(\Pi\) be an \ac{OR} protocol that transfers a message \(m\) from a sender \(P_0\) to a receiver \(P_{n+1}\) over \(n\) intermediate routers \(\{P_i\}\) for \(1 \leq i \leq n\). Let~\(n \leq N\).
Based on any such protocol \(\Pi\) we create an extended protocol \(\PiOI\) using \(\FormOnionOI\) and \(\ProcOnionOI\) that has Wrap-Resistance and Onion-Integrity:

\paragraph*{\(\FormOnionOI\)}
The sender uses Algorithm \ref{alg:formOnionOI} of Appendix~\ref{app:OIExtension} to construct onion layers. Each layer is the concatenation of the original onion layer and our extension.
As explained before, the length of the appended extension \(ext_i\) has to be constant for fixed \(N\) and \(\lambda\), and thus is a known system parameter. 
We treat \(ext_i\) as a concatenated sequence of \(N\) blocks (see Fig. \ref{fig:extension}), each being an element of the asymmetric ciphertextspace \(c \in \mathcal{C}_{asym}\).
We split this sequence in the blocks that contains MAC tags $t_i$ and the padding blocks that contain only decrypted random numbers  $rdm_i$ to guarantee a constant length $N \cdot |c|$, i.e. for all $i$: $| (t_i\|rdm_i) | =N \cdot |c|$.

First the decrypted random numbers are chosen. We use pseudo-random numbers that the adversary cannot easily modify as otherwise Onion-Integrity would be broken\footnote{An adversarial router could extend the path by picking a correct MAC for an honest router as its random number.}. Algorithm \ref{alg:CalcEncRandom} of Appendix~\ref{app:OIExtension} shows how the padded (pseudo-) randomness at each layer is calculated based on the way the onions will be processed.
 Then those values are used to calculate the MAC blocks $t_i$ as they also protect the chosen pseudo-randomness from modifications. Each such block $B_i$ of $t_i$  is encrypted with the corresponding router's public key and carries an ephemeral symmetric key \(k_i\) and a MAC tag \(t_i\) that authenticates the other \(N-1\) received blocks, as well as the next onion layer \(O_{i+1}\) as formed by \(\mathrm{FormOnion}\). 
 As the adversary should not be able to break Onion-Security due to our adaption, we use layered encryption on each block to assure that extensions cannot be linked. Algorithm \ref{alg:calcTags} of Appendix~\ref{app:OIExtension} contains the details.

\paragraph*{\(\ProcOnionOI\)}
The processing at each router is shown in Algorithm~\ref{alg:ProcOnionOI} of Appendix~\ref{app:OIExtension} for the overview and Algorithm~\ref{alg:ProcExtension} of Appendix~\ref{app:OIExtension} for the processing of the added extension. Basically, the added extension is split from the onion layer of the original protocol, which is processed as before. The extension is checked for the length and processing is aborted if the extension has an incorrect length. Otherwise, the first block of the extension is decrypted with the asymmetric secret key of the node and removed from the extension. The included ephemeral key \(k_i\) is retrieved and the included tag is verified. If the validation fails, the onion is discarded.
 Otherwise, after decryption of the other blocks, a new block $r_i$ is appended to the extension by generating a pseudo-random number based on \(k_i\) and the router's own identity \(P_i\) as \(  r_i \leftarrow PRNG(KDF(P_i,k_i))\).
  Notice, that we include the routers identity in the calculation of the pseudo-random padding to guarantee Wrap-Resistance.

Our described transformation supports any maximum path length \(N < \infty\).
In our protocol \(\PiOI\), the maximum path length \(N\) can thus be chosen to be any value supported by the protocol \(\Pi\).

\paragraph{Analysis}
We show that the extended protocol achieves Wrap-Resistance and Onion-Integrity, that Onion-Security and -Correctness are preserved, and  that an adversary learns  not less information about the communications as for the original protocol \(\Pi\) if  \(\Pi\) has Onion-Security.

\begin{theorem}
  The extended protocol \(\PiOI\) has Wrap-Resistance as defined in Definition~\ref{def:wrapResistance}.
\end{theorem}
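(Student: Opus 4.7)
The plan is to reduce Wrap-Resistance of $\PiOI$ to the collision resistance of $KDF$, the pseudo-randomness of $PRNG$, and the non-malleability of $Enc_{asym}$ together with the unforgeability of $MAC$. The central structural invariant, which I would first establish by a short induction over $\FormOnionOI$ and $\ProcOnionOI$, is that for every onion $O_i$ produced by honest processing up to router $P_{i-1}$, the last block of the appended extension equals $r_{i-1} = PRNG(KDF(P_{i-1}, k_{i-1}))$, where $k_{i-1}$ is the ephemeral symmetric key that was embedded in the (now stripped) first extension block addressed to $P_{i-1}$. Specialised to the challenge onion $O_j$, the last block of its extension is deterministically tied to the identity $P_{j-1}$.

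Next, I would examine the only way a winning wrapping tuple $(SK', P', O')$ with $P' \neq P_{j-1}$ can satisfy $O_j = \ProcOnionOI(SK', O', P')$. Since $\ProcOnionOI$ invariably strips the first extension block of $O'$ and appends exactly one freshly computed padding block $PRNG(KDF(P', k'))$, where $(k', t')$ is obtained by applying $Dec_{asym}(SK', \cdot)$ to that first block, the appended value must coincide with the last block of $O_j$'s extension. Hence the adversary must produce some $k'$ with $PRNG(KDF(P', k')) = PRNG(KDF(P_{j-1}, k_{j-1}))$. Because $P' \neq P_{j-1}$, the two inputs to $KDF$ are distinct, so the adversary either delivers a $KDF$ collision on two distinct inputs or a pair of distinct $PRNG$ seeds whose outputs agree; both events have negligible probability by the assumptions on the primitives.

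The remaining step is to package this observation into a clean reduction that faithfully simulates the Wrap-Resistance game. The simulator honestly sets up the public parameters, answers all pre- and post-challenge $\ProcOnionOI(SK(P), \cdot, P)$ queries using the real algorithm, and forwards $O_1$ to the adversary; upon receiving $(SK', P', O')$ it uses $SK'$ to recover $k'$ and either outputs the $KDF$ collision between $(P', k')$ and $(P_{j-1}, k_{j-1})$ or invokes the $PRNG$ distinguisher on the two seeds. Both branches of the game ($b=0$ and $b=1$) are handled uniformly because the extension at layer $j$ has the same deterministic last block in either case.

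The main obstacle I expect is controlling the post-challenge oracle interactions: the adversary can submit onions related to $O_j$ in the hope that $\ProcOnionOI$ leaks structural information about $k_{j-1}$ or enables a malleable reshaping of the first extension block so as to plant a chosen $k'$. Ruling this out requires carefully combining the IND-CCA2 security of $Enc_{asym}$ and $Enc_{sym}$ with the unforgeability of $MAC$, so that any tampered first extension block either fails MAC verification inside $\ProcOnionOI$ --- aborting before a padding block is appended --- or is already covered by the collision argument above.
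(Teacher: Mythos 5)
Your proposal is correct and follows essentially the same route as the paper's proof: the freshly appended padding block of the wrapped onion must equal $PRNG(KDF(P_{j-1},k_{j-1}))$, so a successful wrap with $P'\neq P_{j-1}$ yields either a $PRNG$ output collision on distinct seeds or a $KDF$ collision, both negligible. The additional scaffolding you add (the explicit induction on the last extension block and the treatment of post-challenge oracle queries via IND-CCA2 and MAC unforgeability) is sound but goes beyond what the paper's shorter contradiction argument spells out.
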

 We prove this theorem by contradiction.
  We assume that we can break Wrap-Resistance given the extended protocol and reduce it to finding a collision for the collision resistant KDF (thus breaking the KDF).

\begin{proof}
  Assume that the extended protocol \(\PiOI\) does not achieve Wrap-Resistance as given by Definition~\ref{def:wrapResistance}.
  As the adversary can break Wrap-Resistance, there exists an identity \(P' \neq P_{j-1}\) for which she can for a given onion \(O_j\) efficiently generate \(SK',O'\) such that \((P'', O_j) = \ProcOnionOI(SK',O',P')\).
  Therefore, the processing of $ext'$ must result in $ext_j$, which is the extension part of \(O_j\).
  In particular, the newly attached randomness $r_i'$ has to match the one of $ext_j$, i.e. \(PRNG(KDF(k_{j-1},P_{j-1}))=PRNG(KDF(k',P'))\) for the adversarial chosen $k',P'$.
  This can happen if there is a collision in the $PRNG$ output.
  Let the length of the $PRNG$ output be $l_{prng} = |c|, c \in \mathcal{C}_{asym}$.
  We have $l_{prng}> \lambda$ due to generating elements from the space of asymmetric ciphertexts.
  As we assume a cryptographically secure $PRNG$ constructed from a collision-resistant hash function (PRF), or a secure block cipher (PRP), the probability to output the same random sequence for two different inputs is negligible (PRF) or impossible (PRP).
  The adversary must therefore have  generated the same input to $PRNG$, which means that she can efficiently generate an input key ${k}'$ and identity $P'\neq P_{j-1}$ for \(KDF\), such that $KDF(k_{j-1},P_{j-1}) = KDF(k',P')$, thus create a collision on KDF.
  Finding a collision in \(KDF(\cdot,\cdot)\) however happens only with negligible probability~\cite{krawczyk_cryptographic_2010} due to our assumption of a secure KDF. With this contradiction it follows that no adversary can break Wrap-Resistance for \(\PiOI\) .
\end{proof}

\begin{theorem}
  The extended protocol \(\PiOI\) 
  has Onion-Integrity as defined in Definition~\ref{def:onionIntegrity}.
\end{theorem}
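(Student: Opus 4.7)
The plan is to prove the theorem by contradiction via a reduction to the existential unforgeability of the underlying MAC scheme (equivalently, to the PRF security of the MAC's compression function). Suppose a PPT adversary $\mathcal{A}$ with adaptive access to the $\ProcOnionOI$ oracles for all honest routers in $\mathcal{Q}$ produces an onion $O_1$ sent to some $P_1 \in \mathcal{Q}$ such that $|\mathcal{P}(O_1,P_1)| > N$ with non-negligible probability. The first structural observation to make is that the extension $ext_i$ has fixed length $N \cdot |c|$ and is parsed as exactly $N$ blocks; every honest router consumes the first block (which carries $k_i$ and a tag $t_i$), verifies $t_i$, and appends one deterministic pseudo-random block $r_i$ at the tail. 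Consequently, after $N$ successive honest processings no block carrying an adversarially-unknown MAC key remains, and the extension is fully determined by the chain of $r_i$ values derived from the honest routers' identities and keys.

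Next I would argue that, since the onion is claimed to traverse more than $N$ honest routers, there must exist some honest hop $P_{i^*}$ at which the first extension block is no longer one of the $N$ tagged blocks originally inserted by an honest $\FormOnion$ invocation, yet $Ver$ accepts. I would distinguish two subcases: either (i) the block decrypts to an $(\hat{k}, \hat{t})$ pair that was never output by any $\FormOnion$ call and whose tag is valid under the ephemeral key derived through the honest key agreement with $P_{i^*}$, in which case $\mathcal{A}$ has produced a MAC forgery, or (ii) the block is a re-used tagged block from a previous onion, but then because the MAC is computed over the entire remaining extension concatenated with the next onion layer $O_{i+1}$, either the contents match exactly (and by Onion-Correctness and the binding of $P_{i^*}$'s identity into the chain this collapses to an honest processing, contradicting that the path length exceeds $N$) or they differ and we again have a forgery on a chosen-message that was never queried.

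I would formalise case (i) and (ii) uniformly through a standard reduction: build a forger $\mathcal{B}$ that simulates $\PiOI$ towards $\mathcal{A}$, embedding a MAC challenge key into one uniformly guessed (key, router, layer) position, answering all $\ProcOnionOI$ queries by using the MAC oracle for the embedded position and running the honest algorithms elsewhere. With probability $1/\mathrm{poly}(\lambda)$, $\mathcal{B}$'s guess coincides with the position where $\mathcal{A}$'s excess-length onion first deviates from an honest tag, and then $\mathcal{B}$ outputs $(\hat{t}, \hat{m})$ as its forgery on a message never queried. Since the MAC (as a PRF) is existentially unforgeable, this probability is negligible, contradicting the non-negligible advantage of $\mathcal{A}$.

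The main obstacle I expect is case (ii): handling re-used tagged blocks without assuming the MAC's message space is prefix-free, and cleanly arguing that if $\mathcal{A}$ replays a previously-observed honest block at position $i^*$, then the rest of $ext_{i^*}$ plus the spliced $O_{i^*+1}$ cannot simultaneously (a) agree with what $\FormOnion$ originally produced and (b) let the onion keep traversing new honest routers beyond $N$; this requires being careful about the layered encryption of the blocks and the binding of $P_{i^*}$'s identity into the successor pseudo-random padding via $KDF$, both of which were introduced precisely to prevent splicing. Once this book-keeping is done, the reduction goes through and Onion-Integrity follows.
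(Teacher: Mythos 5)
Your overall skeleton (a fixed $N$-block extension, one tagged block consumed and one pseudo-random block appended per honest hop, so that after $N$ honest hops the head block is no longer one that the onion's creator tagged) matches the paper's. The reduction target, however, is wrong, and this is a genuine gap rather than a stylistic difference. In the Onion-Integrity game the adversary herself forms $O_1$, so she chooses every ephemeral MAC key $k_i$ that is embedded (asymmetrically encrypted) in the $N$ extension blocks. There is no ``honest key agreement'' and no secret MAC key held by $P_{i^*}$: the verification key is $extract(Dec_{asym}(SK_{i^*},B_1))$, i.e.\ whatever the head block says it is. A ``valid tag under a key the adversary chose and knows'' is trivially computable, so existential unforgeability of the MAC buys you nothing here, and your reduction $\mathcal{B}$ cannot ``embed a MAC challenge key into a uniformly guessed position'' --- that position is filled by the adversary's own key, encrypted by the adversary under a router's public key, so $\mathcal{B}$ has no handle on it. Your case (ii) similarly presupposes blocks ``originally inserted by an honest $\FormOnion$ invocation,'' but in this game there need not be any honest invocation at all.

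What actually carries the proof (and what the paper argues) is that the $(N{+}1)$-st head block is \emph{not} freely chosen: since the extension has exactly $N$ blocks and the adversary may not swap or modify the onion while it is in transit, the block that router $N{+}1$ asymmetrically decrypts is forced to equal $Dec_{sym}^{2\le i\le N}(k_i, PRNG(KDF(P_1,k_1)))$, namely the padding appended by the first honest router and pushed to the front by the subsequent shifts. For processing to continue, this PRNG-derived value must extract to a pair $(k_{N+1},t_{N+1})$ such that $t_{N+1}$ is a valid MAC under $k_{N+1}$ on the remaining blocks and $O_{N+2}$ --- a self-consistency (collision) event on the PRF-based MAC, which the adversary can force only with negligible probability over her choices of $k_1,\dots,k_N$ and $P_1$. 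If you want to keep a forgery-style framing, it must be phrased as steering a PRNG/KDF output onto a key--tag pair that verifies itself, i.e.\ a collision on the PRF, not a forgery under an unknown key. You should also invoke the ``no modification in transit'' restriction explicitly, since without it the theorem is false (form two onions and splice the second in after $N$ hops).
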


We prove this theorem by contradiction.
Keep in mind that for the Onion-Integrity game, the adversary has adaptive access to the \(\ProcOnionOI(SK, \cdot, P)\) oracle for any router identity \(P\) while creating the onion and that she is not allowed to modify an onion while it is being processed.
Otherwise, it would be trivial to build an onion that traverses more than \(N\) honest routers by forming two onions and replacing the first after \(N\) hops.

\begin{proof}
Assume that an adversary can form an onion that traverses more than \(N\) honest routers.
Recall that the extension \(ext_i\) appended to each onion \(O_i\) formed by \(\mathrm{FormOnion}\) in \(\PiOI\) is a sequence of \(N\) blocks of fixed length (otherwise processing is aborted).
Thus, the malicious sender cannot use more than \(N\) blocks as extension to an onion \(O_i\).
Further, the adversary for Onion-Integrity is not allowed to modify the onion. Hence, an onion that reaches router \(N+1\) will find as first block \(B_{N+1}\) the pseudo-random number generated by the first router using \(PRNG(KDF(P_1,k_1))\), decrypted by all the ephemeral keys $k_i$, $2\leq i \leq N$.
 This  multiple times decrypted pseudo-random number is interpreted as a key \(k_{N+1}\) and  tag \(t_{N+1}\). For successful processing at node $N+1$ the found tag $t_{N+1}$ must validate the other blocks, as well as the next onion \(O_{N+2}\) under key \(k_{N+1}\).
The adversary must thus choose \(k_1\) and \(P_1\) such that \(Dec_{sym}^{2\leq i \leq N}(k_i, PRNG(KDF(P_1,k_1)))\) yields such a valid tag. 
This is a collision on the tags, which contradicts the assumption of the use of a pseudo-random function within the MAC system. 
  \end{proof}

\begin{theorem}
  The extended protocol \(\PiOI\) has Onion-Correctness of~\cite{camenisch2005formal}
 \footnote{The Onion-Correctness property of~\cite{camenisch2005formal} requires a probability of \(1\), which cannot be fulfilled in negligibly many cases if Onion-Correctness depends on computationally secure algorithms.} 
  if the underlying \ac{OR} protocol \(\Pi\) has Onion-Correctness.
\end{theorem}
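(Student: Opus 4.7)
The plan is to decompose $\ProcOnionOI$ at each router into two independent pieces: (i) the processing of the appended extension $ext_i$, and (ii) the invocation of the original $\ProcOnion$ on the underlying layer $O_i$ of $\Pi$. If I can argue that, in the absence of an adversary, the extension-handling branch never aborts (except with negligible probability) and strips off exactly the extension portion without touching the underlying layer, then the three Onion-Correctness conditions for $\PiOI$ reduce directly to the three Onion-Correctness conditions for $\Pi$, which hold by hypothesis up to a negligible error. Combining the two negligible error terms by a union bound yields Onion-Correctness for $\PiOI$.

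The core of the argument is thus to show that $\ProcOnionOI$ completes successfully at each honest hop when the onion was formed by $\FormOnionOI$. I would proceed by induction on the hop index $i$. The inductive invariant would state that, upon arrival at $P_i$, the received onion decomposes as $O_i \| ext_i$ where $O_i$ is exactly the $i$-th layer produced by $\FormOnion(m,(P_1,\dots,P_{n+1}),(PK_1,\dots,PK_{n+1}))$, and $ext_i$ is exactly the extension the sender precomputed for $P_i$ via Algorithms~\ref{alg:CalcEncRandom} and~\ref{alg:calcTags}. The three things to verify in the inductive step are: the length of $ext_i$ equals $N\cdot|c|$, so the length check passes; asymmetric decryption of the leading block under $SK_i$ recovers the ephemeral key $k_i$ and the tag $t_i$ that the sender embedded (by correctness of $Enc_{asym}/Dec_{asym}$); and $Ver$ accepts $t_i$ on the remaining blocks and on $O_{i+1}$, because $\FormOnionOI$ computed $t_i$ as $MAC(k_i,\cdot)$ over exactly those values. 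Finally, the block $r_i \leftarrow PRNG(KDF(P_i,k_i))$ that $\ProcOnionOI$ appends is deterministic in $(P_i,k_i)$ and matches the value that $\FormOnionOI$ had already accounted for when computing the downstream padding and MACs, so the invariant is re-established for $i{+}1$.

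The subtle point, and the one I would treat most carefully, is this consistency between the padding that $\FormOnionOI$ precomputes for all hops at formation time and the padding that each router actually produces at processing time. This requires tracking that the layered symmetric decryptions of $rdm_i$ under the $k_j$'s unwind exactly into the fresh $r_j$ values generated by the honest routers, which is where the specific construction in Algorithm~\ref{alg:CalcEncRandom} is used. A potential mismatch could arise only from a $KDF$/$PRNG$ collision or a $Dec_{asym}\circ Enc_{asym}$ failure, all of which occur with negligible probability; adding the negligible failure probability inherited from Onion-Correctness of $\Pi$ keeps the total failure probability negligible. The last step is simply to observe that since the underlying layers $O_1,\dots,O_{n+1}$ flow through $\ProcOnion$ unchanged by the extension layer, $\mathcal{P}(O_1,P_1)$, $\mathcal{L}(O_1,P_1)$, and the final decryption at $P_{n+1}$ agree with those of $\Pi$, giving the three Onion-Correctness clauses with probability $1-\mathrm{negl}(\lambda)$ as required.
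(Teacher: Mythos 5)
Your proof is correct and takes essentially the same approach as the paper: reduce the three correctness clauses to those of $\Pi$ by showing that the underlying layers pass through $\ProcOnion$ untouched and that the extension-handling branch never aborts on honestly formed onions because the sender computes valid tags. The paper dispatches the padding-consistency point with ``follows by inspection from the use of the correct processing algorithm in the creation of the blocks,'' whereas you make the hop-by-hop induction and the matching of precomputed versus router-generated padding explicit; this is a more careful write-up of the same argument, not a different one.
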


\begin{proof}
  Property (1), correct path, follows from the fact that the same routing algorithms \(\mathrm{FormOnion}\) and \(\mathrm{ProcOnion}\) are used in both protocols.
  Routing may additionally abort in \(\ProcOnionOI\) if the received tag \(t_i\) is invalid for the received appended blocks and processed \(O_{i+1}\).
  However, as a honest sender correctly runs the \(\FormOnionOI\) protocol, it will create valid tags and routing will not abort.
  Property (2), correct layering, follows for the original onion \(O_i\) from the use of the same algorithms \(\mathrm{ProcOnion}\) and \(\mathrm{FormOnion}\) in both protocols.
  Correct processing of the extension follows by inspection from the use of the correct processing algorithm in the creation of the blocks in \(\FormOnionOI\). 
  %
  Property (3), correct decryption, follows from the fact that at the same time \(\mathrm{ProcOnion}\) returns \((\bot, m)\), also \(\ProcOnionOI\) returns \((\bot, m)\).
\end{proof}

\begin{theorem}
  The extended protocol \(\PiOI\)  has Onion-Security of~\cite{camenisch2005formal} if the underlying \ac{OR} protocol \(\Pi\) has Onion-Security.
\end{theorem}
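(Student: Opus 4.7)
The plan is a reduction from Onion-Security of $\PiOI$ to Onion-Security of $\Pi$: given a PPT adversary $A$ winning the extended Onion-Security game with non-negligible advantage, I build an adversary $B$ that interacts with the $\Pi$-challenger and internally simulates the $\PiOI$-game for $A$. Concretely, $B$ receives the challenge public key $PK$ and honest router identity $P$, forwards them to $A$, and whenever $A$ submits an onion $(O,ext)$ to the $\ProcOnionOI(SK,\cdot,P)$ oracle, $B$ queries its own $\ProcOnion(SK,\cdot,P)$ oracle on $O$ to obtain the underlying next layer $O'$ and separately runs the extension-processing logic to produce the corresponding output. When $A$ submits the challenge triple $(m,(P_1,\dots,P_{n+1}),(PK_1,\dots,PK_{n+1}))$, $B$ forwards it unchanged to its challenger, receives the challenge layer $O_1^\ast$, and returns $(O_1^\ast, ext_1^\ast)$ to $A$, where $ext_1^\ast$ is produced by a faithful (possibly simulated) run of the extension construction from Algorithm~\ref{alg:calcTags}. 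If $A$ guesses $b$, $B$ outputs $b$.

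The delicate part is generating a valid-looking $ext_1^\ast$ and correctly answering $\ProcOnionOI$-queries without the honest router's secret key $SK$. For the challenge extension, observe that the only component of $ext_1^\ast$ that $A$ inspects directly at the honest relay is the outermost block, which is asymmetrically encrypted under $PK$ and carries an ephemeral key $k_1$ together with a MAC tag $t_1$ computed over the remaining blocks and over $O_2$. For $A$'s oracle queries, $B$ can replay the extension construction/processing using the ephemeral keys that appear in the queried extension after decryption; when those keys are adversarially chosen, $B$ can recover them by passing the first block through the $\ProcOnion$ oracle on a harmless carrier onion, or alternatively by a standard hybrid that replaces asymmetric decryption under $PK$ with a lookup table against an IND-CCA reduction. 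Onion-Correctness of $\PiOI$ ensures that well-formed queries are processed exactly as in the real game.

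The argument that the challenge extension carries no additional message information proceeds by a short hybrid sequence over the appended blocks. First, I would replace the asymmetric encryption of $(k_1,t_1)$ under $PK$ with the encryption of a uniformly random pair; this hop is justified by the IND-CCA2 security of $Enc_{asym}$ and goes through because the $\ProcOnion$-oracle and the extension-oracle no longer need the underlying plaintext (as in the previous paragraph). Next, because $k_1$ is now information-theoretically hidden from $A$, I replace $t_1$ inside the block (conceptually) with a uniformly random tag, justified by the PRF property of $MAC$; and I replace the PRNG-derived padding values in the remaining blocks with uniformly random strings, justified by the PRNG's pseudorandomness under the unknown seed $KDF(P_i,k_i)$. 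After these hops, $ext_1^\ast$ is a deterministic function of the chosen path, the public keys, and uniformly random strings, and therefore statistically independent of the message $m$ and of which of the two world-bits $b$ was used to form $O_1^\ast$. At this point distinguishing the two worlds reduces exactly to distinguishing the two possibilities for $O_1^\ast$, which contradicts Onion-Security of $\Pi$.

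The main obstacle I expect is the faithful simulation of the $\ProcOnionOI$-oracle on adversarially malformed extensions without holding $SK$: the extension's first block is asymmetrically decrypted under $SK$ to recover $k_i$, and then used to verify $t_i$ and symmetrically decrypt the tail, so $B$ must emulate this reactively. The cleanest way out is to structure the reduction as an IND-CCA reduction against $Enc_{asym}$ first (getting a decryption oracle for free while the challenge asymmetric block is embedded in $ext_1^\ast$), and only afterwards reduce to Onion-Security of $\Pi$; the care needed to avoid circular use of the $\ProcOnion$-oracle on the challenge onion itself (which is forbidden by the Onion-Security game) is where the proof has to be written out in detail.
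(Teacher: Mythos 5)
Your proposal is correct in substance and rests on the same cryptographic ingredients as the paper's proof: the inner blocks of the extension are protected by the IND\nobreakdash-CCA security of \(Enc_{asym}\), the outermost block carries only an independent ephemeral key and a tag over data the adversary already knows, the padding is pseudorandom under an unknown seed, and modifications are caught by the MAC so that oracle queries on tampered extensions yield nothing. The difference is structural. The paper argues by contradiction with an informal case split (non-modifying versus modifying adversary), reasoning about what the adversary ``must have learned'' from each component in turn; it never constructs an explicit reduction and therefore never has to confront the simulation issues you identify. You instead build an explicit simulator \(B\) against the \(\Pi\)-challenger, which forces two real difficulties into the open: (i) \(B\) cannot run \(\ProcOnionOI\) on adversarial extensions without \(SK\), and (ii) \(B\) cannot faithfully form the tag \(t_j\) inside the block encrypted under the honest node's key, since that tag authenticates \(\ProcOnion(SK,O_j,P_j)\) and the \(\Pi\)-game forbids querying the oracle on \(O_j\) (and in the \(b=1\) world that layer does not exist). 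Your proposed resolution --- a preliminary IND\nobreakdash-CCA2 hybrid that replaces the honest node's asymmetric block with an encryption of garbage, after which neither the tag nor the ephemeral key need be genuine --- is the right fix for both, and is exactly the step the paper's proof silently assumes. One caution: your ``harmless carrier onion'' idea for recovering adversarial ephemeral keys does not work, since the \(\ProcOnion\) oracle of \(\Pi\) processes \(\Pi\)-onions and does not return raw asymmetric decryptions; stick with the IND\nobreakdash-CCA decryption-oracle route. Net effect: you prove the same theorem by a more disciplined game-hopping argument that would survive scrutiny the paper's prose version would not.
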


\begin{proof}
Assume there exists an adversary that can break Onion-Security of \(\PiOI\). The proof is split into two parts: non-modifying adversary and modifying adversary.
From the Onion-Security of \(\Pi\) we know that nothing about the input in the onion except the first part of the path until the honest relay can be learned from the original onion layers and that the original onion layers cannot be modified to learn something about the remaining path or message from oracle queries.

We start with the \emph{non-modifying} adversary and split onions into two parts: An original onion part $O$, as generated by \(\Pi\) an the extension \(ext\) as appended in \(\PiOI\).

The original protocol \(\Pi\) has Onion-Security. Thus, the adversary cannot learn enough information from the original onion part to break Onion-Security, as this implies an attack on Onion-Security in \(\Pi\).
As the adversary did not use the original onion part he must have used the extension to break Onion-Security.

The extension $ext$ is a sequence of $N$ blocks.
Assume the adversary gains enough information to break Onion-Security from the last $N-1$ blocks.
As these blocks are asymmetrically encrypted using the public key of following nodes, learning something about the input of \(\FormOnionOI\) from these blocks implies learning something about the plaintext of the asymmetrically encrypted values. The adversary can thus break the security of the asymmetric encryption scheme, which contradicts our assumption of a secure asymmetric encryption scheme.

The length of the extension is fixed, therefore, she must have learned something from the first block of \(ext\).
This block contains an independent randomly chosen ephemeral key (which carries no information) and a tag over the processed original onion $O_{i+1}$ (using $\mathrm{ProcOnion}$ from \(\Pi\)) and the remaining encrypted blocks. 
Everything authenticated by the tag is known to the adversarial router, and as such the tag does not include any new information, thus no relevant information to distinguish challenge bit $b$ can be extracted from it.
Hence, an adversary cannot gain an advantage based on the extension compared to \(\Pi\).

\emph{Modifying} adversary:
The adversary must therefore have gained sufficient information from modifying the challenge onion and the use of oracles.
In \(\PiOI\), similar to \(\Pi\), an onion generated by running \(\FormOnionOI\) does not depend on the input of previous \(\FormOnionOI\) calls.
Thus, using the first oracle in Onion-Security before deciding on the challenge input does not help.

Using the second oracle, after deciding on challenge input, with a newly generated onion using \(\FormOnionOI\) does not help deciding $b$ due to the same reason as for the first oracle above.
The adversary must have gained an advantage by modifying \(O_i\|ext\).
Any modification to \(O_i\) or \(ext\) will invalidate the tag contained in \(ext\) except with negligible probability.
Thus, she must be able to construct a valid tag as otherwise the next honest node stops processing when verification fails.
Constructing a valid tag without knowing the input or key, both of which is encrypted using the public key of the honest node or the encapsulated ephemeral symmetric key, breaks the security of the MAC algorithm.
This contradicts the assumption of a secure MAC.

Hence, she can only construct a valid tag by knowing the input to the secure MAC signing algorithm.
However, one part of the input is the original processed onion layer $O_{j+1}$, which is generated by the honest router using its private key.
If she learns this layer, she is able to break Onion-Security of $\Pi$ (by picking $P_j$ as receiver and comparing $O_{j+1}$ with the chosen message).
As this contradicts the precondition, it follows that the adversary is not able to construct a valid tag (unless with negligible probability) and thus cannot construct a valid extension \(ext\) for a modified packet.
\end{proof}

\paragraph*{Any adversary learns not less relevant information}

We consider only information about the communications relevant for privacy. As the adversary model of the ideal functionality excludes timing attacks, we ignore the timing dimension. This results in the sender, message and receiver and their combinations being the relevant information of a communication. Further, we need the following assumption:

\begin{assumption}\label{assumptionFormOnion}
Any output of \(\mathrm{FormOnion}\) contains at most negligible information about the inputs of other \(\mathrm{FormOnion}\) calls.
\end{assumption}

\begin{theorem}
If the underlying protocol \(\Pi\) has Onion-Security, the extended protocol \(\PiOI\) leaks not less information about the communications  to an adversary \(\mathcal{A}\) than \(\Pi\).
\end{theorem}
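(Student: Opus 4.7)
The plan is to give a straightforward black-box reduction: for any adversary $\mathcal{A}_\Pi$ that extracts some communication-relevant information from $\Pi$, I construct an adversary $\mathcal{A}_{\PiOI}$ that extracts at least as much from $\PiOI$. This suffices, because if every $\Pi$-attack can be reproduced in $\PiOI$, then $\PiOI$ cannot leak strictly less than $\Pi$. The reduction is almost definitional, since by construction every $\PiOI$ onion layer has the form $O_i \| ext_i$, where $O_i$ is exactly the layer that $\FormOnion$ of $\Pi$ would produce on the same $(m, \mathcal{P}, PK)$ inputs, and $\ProcOnionOI$ routes $O_i$ internally using the original $\ProcOnion$ of $\Pi$.

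Concretely, $\mathcal{A}_{\PiOI}$ would run $\mathcal{A}_\Pi$ internally in a black-box fashion. Whenever $\mathcal{A}_\Pi$ expects to observe an onion on a link or at a corrupted router, $\mathcal{A}_{\PiOI}$ intercepts the corresponding $\PiOI$ observation $O_i \| ext_i$, discards $ext_i$, and hands the bare $O_i$ to $\mathcal{A}_\Pi$. Oracle queries $\mathcal{A}_\Pi$ issues to $\ProcOnion$ are simulated by $\mathcal{A}_{\PiOI}$ through corresponding $\ProcOnionOI$ queries (supplying a consistent extension if necessary) and returning only the $\Pi$-part of the result. The joint distribution of the $O_i$'s observed by $\mathcal{A}_\Pi$ in this simulated run is identical to its view in a real $\Pi$ execution, because $\FormOnionOI$ feeds $\FormOnion$ on the unchanged inputs and the extension is generated from independently chosen ephemeral keys and fresh pseudorandomness. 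Hence $\mathcal{A}_{\PiOI}$ inherits the full advantage of $\mathcal{A}_\Pi$.

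The step that needs the most care is ensuring that the presence of the extensions does not smuggle in hidden correlations between different communications that would perturb the simulation and let $\mathcal{A}_\Pi$ misbehave via a side channel invisible in the $\Pi$ world. This is precisely where Assumption \ref{assumptionFormOnion} enters: each $\FormOnion$ invocation (and, by inspection of $\FormOnionOI$, each extended invocation as well) reveals at most negligible information about the inputs of other invocations, so the stripped view is statistically faithful to the $\Pi$ world up to a negligible term. Note that the Onion-Security hypothesis on $\Pi$ is not used by the reduction itself; it is needed in this section only to guarantee that the extension does not enable entirely new attacks, which is established by the preceding Onion-Security theorem for $\PiOI$. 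Combining these ingredients yields $\mathrm{Adv}(\mathcal{A}_{\PiOI}) \ge \mathrm{Adv}(\mathcal{A}_\Pi) - \mathrm{negl}(\lambda)$, which is the claimed leakage comparison.
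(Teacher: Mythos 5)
Your reduction works for the passive part of the claim, and there it matches the paper: the extension has a fixed, publicly known length, so stripping it from every observed layer recovers exactly the $\Pi$-view, and this step needs neither Onion-Security nor Assumption~\ref{assumptionFormOnion}. The gap is in the active case, concentrated in your parenthetical ``supplying a consistent extension if necessary.'' When $\mathcal{A}_\Pi$ injects a \emph{modified} onion $O'$ (one that is not an output of $\FormOnion$) toward an honest node, your $\mathcal{A}_{\PiOI}$ must attach an extension whose first block decrypts under the honest node's public key to a key $k$ and a MAC tag that verifies over the \emph{processed} layer $O_{j+1}$ and the remaining blocks. By the very Onion-Security/Wrap-Resistance arguments of the preceding theorems, producing such a valid extension for an adversarially modified $O'$ is infeasible except with negligible probability; the honest node in $\PiOI$ therefore drops $O'$, whereas in $\Pi$ it would have processed and forwarded it. Your simulation thus cannot return the oracle answer that $\mathcal{A}_\Pi$ expects, and the stripped view is \emph{not} identical to a real $\Pi$ execution. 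This is not a corner case: the whole point of the theorem is that $\PiOI$ adds integrity checks that block exactly these modification attacks, so one must argue that blocking them does not constitute a loss of adversarial information.

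This is also why your remark that ``the Onion-Security hypothesis on $\Pi$ is not used by the reduction itself'' is a red flag rather than a simplification. The paper's proof spends essentially all of its effort on this case: it argues that any modification attack in $\Pi$ that yields new information about a communication must reveal something about the message or the path after the honest node (case 2a), which would directly break Onion-Security of $\Pi$ --- a contradiction with the hypothesis; or (case 2b, invoking Assumption~\ref{assumptionFormOnion}) the modification only serves to re-identify an onion after the honest node, an effect the adversary can reproduce in $\PiOI$ by injecting a fresh onion of her own instead of modifying one. To repair your argument you would need to replace ``supply a consistent extension'' with exactly this case analysis: either the $\Pi$-adversary's modification gains nothing beyond what a non-modifying (and hence simulatable) adversary gets, or its gain contradicts the assumed Onion-Security of $\Pi$.
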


\begin{proof} 
We split the proof in (1) passive observation and (2) active modifications.
Assume the same input to \(\mathrm{FormOnion}\) and \(\FormOnionOI\). 
(1) Then \(\PiOI\) uses the same path as  \(\Pi\) because its routing decisions are not influenced by the extension. As the extensions  added in \(\PiOI\) have a constant length that depends on publicly known system parameters only, the adversary can remove the extensions and retrieve the exact same onion layers as would be sent in \(\Pi \). So, at each router and each link the information that is available in \(\Pi\) is also available in \(\PiOI\)\footnote{Notice that the passive observation is even independent of \(\Pi\) having Onion-Security and Assumption \ref{assumptionFormOnion}.}.

(2) Dropping and inserting valid onions as sender is possible as before. However, modifications or insertion of onions that deviate from outputs of FormOnion might be dropped at an honest node because of the added validity check for the tag. 
Assume the adversary can do such modifications to learn more information about communications in \(\Pi\). 
Gaining such advantage from modifications  implies that the adversary receives output from an honest node processing the modified onion. 
This output must be related to an onion from an honest sender (as otherwise the adversary already knows all about the communication beforehand). 
The onion about whose communication the adversary learns is either (2a) the onion that was modified or (2b) another onion from an honest sender. 

(2a) As the output of the honest node is an onion layer, the adversary can solely learn about the inputs to FormOnion from it or link it to the ingoing modified onion. The latter just helps to gain new information about the communication if (some part) of the path or message stays the same and is learned. The inputs to FormOnion are the path, which is only new information (compared to passive attacks) for the adversary after the honest node, and the message. Thus, the adversary learns in any case something about the message or the path after the honest node. 

Then she can construct an attack on Onion-Security based on this: 
She uses the same modifications to learn about the challenge onion. The oracles are used to query outputs of an honest node regarding a modified (inserted) onion. This allows inference of information about the output of an honest node processing the challenge onion. She thus learns about the message, or path after the honest node, i.e. she has an advantage in distinguishing the challenge bit, which contradicts that \(\PiOI\) has Onion-Security. But \(\PiOI\) has because \(\Pi\) has and the extension preserves Onion-Security.

(2b) As onion layers carry merely negligible information about other onions (Assumption \ref{assumptionFormOnion}), the modification cannot be used to learn about the inputs of the original onion directly. Instead, it can only be used to recognize this onion after the processing of the honest node and exclude it from the candidate onion layers that might belong to the original onion. However, this exclusion can also be achieved by creating and sending a new onion, which the adversary recognizes after the honest node because she knows all onion layers for her own onions.
Hence, this attack can be translated to another attack without modifications that is thus also possible in the extended protocol.
\end{proof}

\else
We analyzed  Wrap-Resistance and Onion-Integrity and proved that they do not contribute anything to the privacy of a protocol that achieves Onion-Security and -Correctness. 

We refer the interested reader to the extended version of this paper~\cite{extendedVersion} for details. 
In short, we provide a template to add Wrap-Resistance and Onion-Integrity to any \ac{OR} protocol that meets Onion-Security and Correctness, and prove that the template does not reduce what adversaries can learn.
 \fi

		\subsubsection{Onion-Security}\label{sec:onionSecurity}
		\emph{Onion-Security} states that an adversary on the path between an honest sender and the next honest node (relay or receiver) cannot distinguish an onion that was created with her inputs (except for the keys of the honest node) from another one that contains a different message and is destined for this next honest node.

\begin{figure}[htb]
  \centering
  \includegraphics[width=0.3\textwidth]{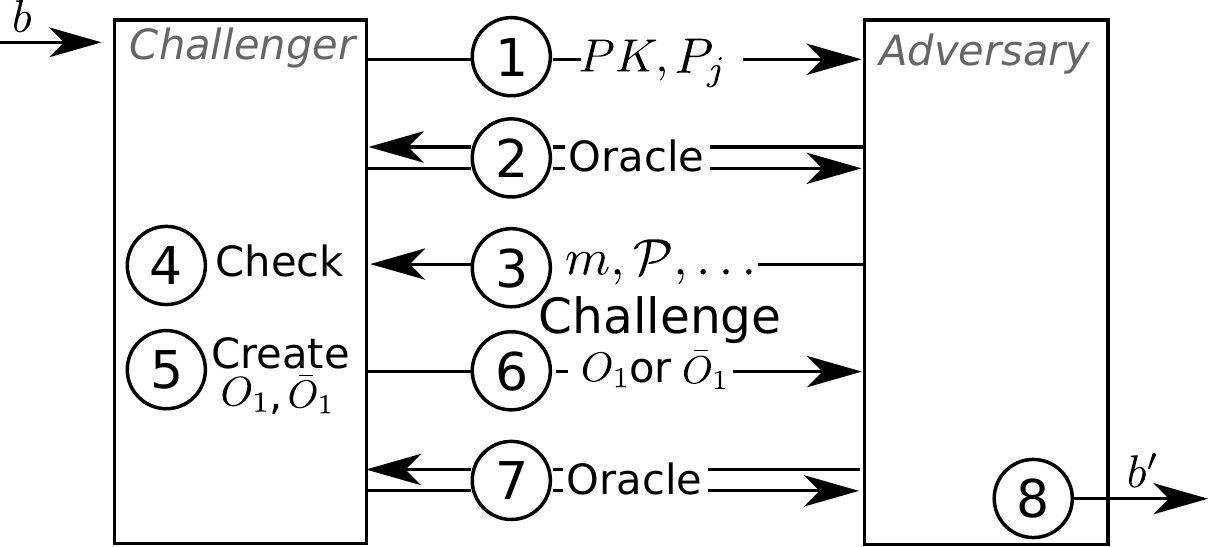}
  \caption{Onion-Security game illustrated: Circled numbers represent the steps, and the boxes the parties of the game. }
  \label{fig:OSOrigGame}
\end{figure}

\paragraph{Game Structure} \label{para:GameStructure}
We illustrate the  Onion-Security game  in Fig.~\ref{fig:OSOrigGame} and explain the steps informally first:


\paragraph*{Basic Game (Step 1, 3 - 6, 8)} Apart from an honest sender, Onion-Security assumes the existence of only a single honest relay ($P_j$). First in Step 1, the challenger chooses the name and public key of the honest node and sends it to the adversary. 
In the challenge starting in Step 3, the adversary is allowed to pick any combination of message and path as  input choice of the honest sender, to model the worst case. 
In Step 4-6 the challenger checks that the choice is valid and if so, creates two onions $O_1,\bar{O}_1$ and sends one of them to the adversary depending on the challenge bit $b$. 
Finally in Step 8, the adversary makes a guess $b'$ on which onion she received.

\paragraph*{Adaptive and Modification Attacks (Step 2 and 7)} So far the adversary only focused on one onion. However, a real adversary can act adaptively and observe and send modified onions to the honest node that she wants to bypass before  and after the actual attack. Therefore, Onion-Security includes two oracle steps.
To decide on her input and guess, the adversary is allowed to insert onions (other than the challenge onion) to the honest relay and observe the processed output as an oracle (Steps 2 and 7).

How the two onions $O_1, \bar{O}_1$ differ is illustrated in Fig.~\ref{fig:OSOrig}. $O_1$ is the first layer of the onion formed with the adversary chosen inputs, where the honest relay is at position $j$. In contrast, $\bar{O}_1$ is the first layer of the onion formed with the same path as $O_1$ except that the path ends at $P_j$ as the receiver and a random message. The adversary can calculate the onion layers up to the honest relay based on the first layer.  Onion-Security is achieved if the adversary is unable to distinguish whether the observed onion contains her chosen inputs or random content destined for the honest relay.

\begin{figure}[htb]
  \centering
  \includegraphics[width=0.3\textwidth]{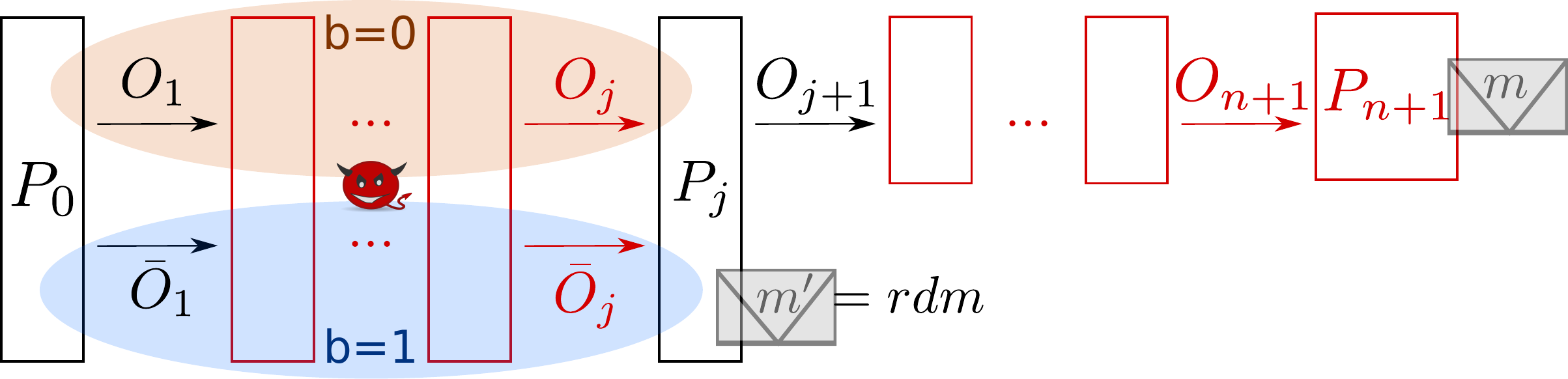}
  \caption{Cases of Onion-Security illustrated: Red boxes represented corrupted relays, black boxes honest. The upper row of arrows represents the path of the onion $O$ with inputs chosen by the adversary (message $m$ received by $P_{n+1}$); the lower an onion $\bar{O}$ containing a randomly chosen message $m'$ that takes the path to the honest relay $P_j$, only. For $b=0$ the onion layers in the orange ellipse are observed by the adversary, i.e. the layers processed at $P_1 .. P_{j-1}$ of onion $O$. For $b=1$ the layers in the blue ellipse are observed, i.e. the  corresponding layers of $\bar{O}$. Notice that the adversary does not observe any output of $P_j$ in this game.
    }
  \label{fig:OSOrig}
\end{figure}

 \paragraph{Definition} Formally, the Onion-Security game is defined as follows:

\begin{definition}[Original Onion-Security]\label{def:onionSecurity}
  Consider an adversary interacting with an \ac{OR} challenger as follows.
  \begin{enumerate} [leftmargin=1.25cm]
   \item The adversary receives as input a challenge public key \(PK\), chosen by the challenger who generates \((PK,SK) \leftarrow G(1^{\lambda}, p, P_j)\), and the router name \(P_j\).
    \item The adversary submits any number of onions \(O_i\) of her choice to the challenger (oracle queries), and obtains the output of \(\mathrm{ProcOnion}(SK, O_i, P_j)\).
    \item The adversary submits \(n\), a message \(m\), a set of router names \((P_1,\ldots,P_{n+1})\), an index \(j\), and \(n\) key pairs \(1 \leq i \leq n+1, i \neq j, (PK_i, SK_i)\).
      \item The challenger checks that the router names are valid, that the public keys correspond to the secret keys and if so, sets \(PK_j = PK\) and sets bit \(b\) at random.
      \item If the adversary input was valid, the challenger picks $m' \leftarrow^R \mathcal{M}$ randomly and calculates:\\
      \((O_1,\ldots,O_j,\ldots,O_{n+1}) \leftarrow\)\\ \( \mathrm{FormOnion} (m,( P_1, \ldots, P_{n+1}), (PK_1,\ldots,PK_{n+1}))\)
          \((\bar{O}_1,\ldots,\bar{O}_j) \leftarrow \) \\ \(\mathrm{FormOnion}(m',(P_1,\ldots,P_{j}),(PK_1,\ldots,PK_{j}))\)
      \item
         \begin{itemize}
         	\item  If \(b=0\),  the challenger returns \(O_1\) to the adversary.
         	\item Otherwise,  the challenger returns \(\bar{O}_1\) to the adversary.
         	\end{itemize}
          \item The adversary may again query the oracle and submit any number of onions \(O_i \neq O_j\), \(O_i \neq \bar{O}_j\)  of her choice to the challenger, to obtain the output of \(\mathrm{ProcOnion}(SK, O_i, P_j)\).
    \item The adversary then produces a guess $b'$.
  \end{enumerate}
        
  Onion-Security is achieved if any  \ac{PPT} adversary \(\mathcal{A}\), cannot guess $b'=b$ with a  probability non-negligibly better than \(\frac{1}{2}\). \smallskip
\end{definition}

Onion-Security hence aims at guaranteeing that an adversary observing an onion before it is processed by an honest relay cannot discover information about the message it contains, or the path it subsequently takes.
As the adversary controls all links, she could link message and receiver to the sender, otherwise. 
Further, step 7 provides protection against active modification attacks, as it allows processing of any modified onion.

The property however does not consider a malicious receiver or exit node, which hence might be able to discover information about the path or sender. Notice that this is exactly what happens in the attack on Sphinx; a malicious exit node learns information (the length) of the path.  

  	\subsection{Security against Malicous Receivers}\label{sec:Counter}
  	In this subsection, we show the first shortcoming, missing protection against a malicious receiver, by giving a simplified broken protocol that allows the receiver to learn the complete path and yet achieves all suggested properties. Based on this discovery we introduce an improved property.
  
 		\subsubsection{Insecurity: Signaling the Path} \label{sec:attacksCamenisch}
  		We first  generalize the  attack on Sphinx from Section \ref{sec:zeroAttack}, which only leaked the path length. As generalization we give a protocol that complies to the properties, but includes the complete path (including the sender) in the message. Thus, an adversarial receiver learns the complete path the onion took.

This weakness differs from the common assumption that one cannot protect senders that reveal their identity in their self-chosen message: independent of the message the sender chooses, the protocol always adds the complete sender-chosen path to it.
Thus,  an adversarial receiver always learns the sender and all relays independent of the sender's choice. 
Clearly, such an \ac{OR} scheme should not be considered secure and private and hence should not achieve the OR properties.

\paragraph{Insecure Protocol 1}
The main idea of this counterexample is to use a secure \ac{OR}  scheme and adapt it such that the path is part of the sent message. 

\iflong
We illustrate this idea in Fig. \ref{fig:messageExtended}.
\begin{figure}[thb]
  \centering
  \includegraphics[width=0.35\textwidth]{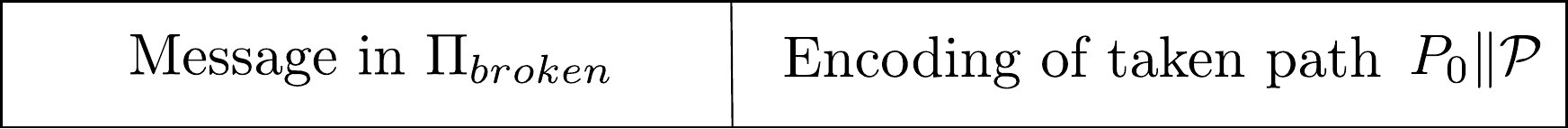}
  \caption{Illustration of the message extension used as message in $\Pi$}
  \label{fig:messageExtended}
\end{figure}
\fi

More formally, our extended protocol  \(\Pi_{broken1}\) using \(\mathrm{FormOnion}_{broken1}\) and \(\mathrm{ProcOnion}_{broken1}\) is created from the ``secure'' onion routing protocol~\(\Pi\) from~\cite{camenisch2005formal}.
\(\Pi\) transfers a message \(m\) from a sender \(P_0\) to a receiver \(P_{n+1}\) over \(n\) intermediate routers \(\{P_i\}\) for \(1 \leq i \leq n\) using $\FormOnion_\Pi$ and $\ProcOnion_\Pi$. 

\paragraph*{Sender \normalfont{[}\(\FormOnion_{broken1}\)\normalfont{]}}
The sender $P_0$ wants to send message $m\in \{0,1\}^{l_m - l_P}$  over path $\mathcal{P}$, where $l_m$ is the length of messages in \(\Pi\) and $l_P$ is the maximal length of the encoding of any valid path including the sender.  $\FormOnion_{broken1}$ creates a new message $m'=m\| e(P_0\| \mathcal{P})$, where $e$ encodes the path and is padded to length $l_P$.  $\FormOnion_{broken1}$ runs the original algorithm  \(\FormOnion_\Pi\) with the inputs chosen by the sender except that the message is replaced with $m'$.

\paragraph*{Intermediate Router \normalfont{[}\(\ProcOnion_{broken1}\)\normalfont{]}}
Any intermediate runs \(\ProcOnion_\Pi\) on \(O_i\) to create \(O_{i+1}\) and sends it to the next router.

\paragraph*{Receiver \normalfont{[}\(\mathrm{ProcOnion}_{broken1}\)\normalfont{]}}
The receiver getting \(O_{n+1}\) executes $\ProcOnion_\Pi$ on it to retrieve $m'$. It learns the path from the last $l_P$ bits and outputs the first $l_m-l_P$ bits as the received message.

\paragraph{Analysis regarding properties}\label{sec:simpleSchemeProperties}
The properties follow from the corresponding properties of the original protocol. As we only add and remove $e(P_0\| \mathcal{P})$ to and from the message, the same path is taken and the complete onion layers $O_i$ are calculated as before. Hence, \emph{Correctness}  and \emph{Onion-Integrity} hold, and \emph{re-wrapping} them is as difficult as before.
Only \emph{Onion-Security} remains. As $\Pi$ has Onion-Security, the adversary cannot learn enough about the message included in the first onion layers to distinguish it from a random message. Thus, she especially cannot distinguish the last $l_P$ bits from random ones in $\Pi$.  As in Onion-Security the adversary learns nothing else, the adversary in $\Pi_{broken1}$ cannot distinguish our adapted message bits from random ones. Thus, adapting does not introduce any advantage in breaking Onion-Security.

\subsubsection{Improved Property: \ExitSecurity\ $\OSNodeReceiver$ against a corrupted receiver }

We construct the new property \emph{\ExitSecurity} $\OSNodeReceiver$ to deal with malicious receivers. Therefore, the adversary has to get  access to the onion layers after the last honest relay has processed them because a malicious receiver learns those. Our property challenges the adversary behind the last honest relay to distinguish between the onion generated with her original inputs, and a second onion that carries the identical message and follows the identical path behind the honest relay but otherwise was constructed with randomly chosen input, i.e. the path before the honest node is chosen randomly.

Note that this new property indeed prevents the insecurity given in Section~\ref{sec:attacksCamenisch} and the attack on Sphinx: If the receiver is able to reconstruct any information of the path before the honest node, the adversary can compare this information with her input choice. In case the information does not match her choice, she knows that it must have been the second onion and thus is able to distinguish the onions. 

\begin{figure}[htb]
  \centering
  \includegraphics[width=0.35\textwidth]{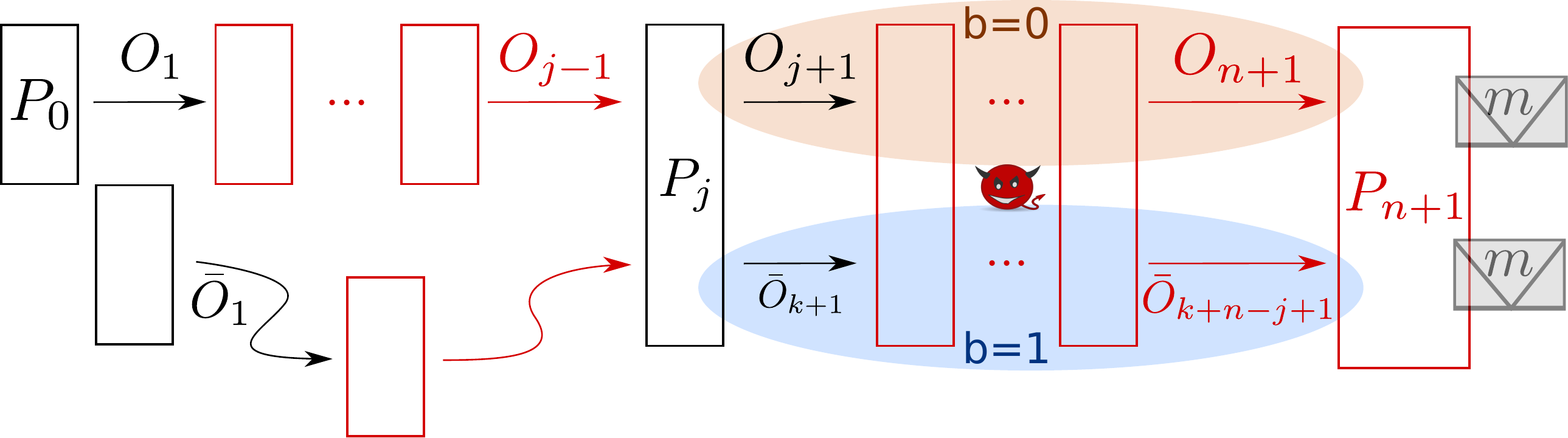}
  \caption{Cases of $\OSNodeReceiver$ illustrated: Red boxes are adversarial routers; black boxes honest and curvy arrows symbolize a random path possibly through many other adversarial routers. In case $b=0$ the adversary chosen onion is observed at the end of the path (orange ellipse). For $b=1$ onion layers that take the same path between $P_j$ and $P_{n+1}$ and include the same message (the blue ellipse), but differ otherwise, are observed instead. Earlier layers (before $P_j$) are in both cases not given to the adversary.}
  \label{fig:OSNR}
\end{figure}

Intuitively, the steps are the same as in Onion-Security described in Section~\ref{para:GameStructure}, except that we change the answer to the challenge. This time we protect the last part of the path and output those layers. Since the receiver is corrupted, the message is learned by the adversary anyways and hence we use the same message for the alternative layers ($b=1$). 
We illustrate the new outputs to the adversary in Fig. \ref{fig:OSNR} and formally define the new property in our Definition~\ref{def:OSNodeRec}.

Thus, our first new property $\OSNodeReceiver$ is defined as:

\begin{definition}[\ExitSecurity\ $\OSNodeReceiver$] \label{def:OSNodeRec}
\begin{enumerate}[leftmargin=1.25cm]
\item The adversary receives as input the challenge public key $PK$, chosen by the challenger by letting $(PK, SK) \gets G(1^\lambda , p,P_j)$, and the router name $P_j$.
\item The adversary may submit any number of onions $O_i$ of her choice to the challenger.
The challenger sends the output of $\mathrm{ProcOnion}(SK, O_i , P_j )$ to the adversary.
\item The adversary submits a message $m$, a path $\mathcal{P}=(P_1 , \dots, P_j, \dots, P_{n+1})$ with the honest node at position $j$, $1 \leq j \leq n + 1$ of her choice and key pairs for all nodes $(PK_i , SK_i )$ ($1\leq i \leq n+1$ for the nodes on the path and $n+1<i$ for the other relays). 
\item The challenger checks that the router names are valid, that the public keys correspond to the secret keys and that the same key pair is chosen if the router names are equal, and if so, sets $PK_{j} = PK$ and  sets bit $b$ at random.
    \item The challenger creates the onion with the adversary's input choice:
  \begin{align*}
    ({O}_1,  \dots, {O}_{n+1})\gets \mathrm{FormOnion}(&m, \mathcal{P}, (PK)_{\mathcal{P}})
  \end{align*}
    and  a random onion with a randomly chosen path $\bar{\mathcal{P}}=(\bar{P}_1, \dots,\bar{P}_{k}=P_j, \dots, \bar{P}_{\bar{n}+1}=P_{n+1})$, that includes the subpath from the honest relay to  the corrupted receiver  starting at position $k$ ending at $\bar{n}+1$:
     \begin{align*}
    ({\bar{O}}_1,  \dots, {\bar{O}}_{\bar{n}+1})\gets \mathrm{FormOnion}(&{m},\bar{\mathcal{P}}, (PK)_{\bar{\mathcal{P}}})
  \end{align*}
  \item
	\begin{itemize}
		\item If $b = 0$, the challenger gives $(O_{j+1}, P_{j+1})$ to the adversary
		\item Otherwise, the challenger gives $(\bar{O}_{k+1}, \bar{P}_{k+1})$ to the adversary
	\end{itemize}
\item The adversary may submit any number of onions $O_i$ of her choice to the challenger.  The challenger sends the output of $\mathrm{ProcOnion}(SK, O_i , P_j )$ to the adversary.
\item The adversary produces  guess $b'$ .
\end{enumerate}

    $\OSNodeReceiver$ is achieved if any  \ac{PPT} adversary \(\mathcal{A}\), cannot guess $b'=b$ with a  probability non-negligibly better than \(\frac{1}{2}\). \smallskip
\end{definition}

 \label{sec:insecureCounterExample1}
  
	\subsection{Linking Protection} \label{sec:improvedProp}
  	The flaw of the previous section is not the only one the proposed properties missed. 
Here, we introduce a second insecure protocol, which allows to bypass honest nodes by linking onions, and construct a new property against this weakness. 

  		\subsubsection{Insecurity:  Including Unique Identifiers}\label{sec:insecureCounterExample2} \label{sec:insecureCounterExample}
		
\iflong 
The main idea for this scheme is to take the secure onion routing scheme of~\cite{camenisch2005formal} and append the same identifier $ID$ and another extension to all onion layers of an onion.  Our new onion layer is $O_i\|ID\|ext_i$, where $O_i$ denotes the onion layer of the original protocol  and $\|$ denotes concatenation. The $ID$ makes the onion easily traceable, as it stays the same while processing the onion at a relay. To assure that the properties are still achieved, the additional extension $ext_i$ has  special characteristics, like to prohibit modification.
In this subsection, we first explain how $ext_i$ is built and processed. Then we  describe and analyze the scheme based on $ext_i$'s  characteristics.

\paragraph{Extension}\label{sec:firstExtension}
As mentioned, to achieve the properties although we attach an $ID$, we need an extension to protect the $ID$ from modification. More precisely, the extension needs to fulfill the following characteristics: It cannot leak information about the inputs to $\FormOnion$ as otherwise Onion-Security breaks. Further, we need to prohibit modification of the appended $ID$ and extension as otherwise the oracle in Onion-Security can be used with a modified $ID$  or extension to learn the next hop and break Onion-Security. Third, we require the extension to be of fixed length to easily determine where it starts.
Thus, we need an extension  $ext_i$ with the following characteristics to be created by the sender that knows all $O_i$'s (created by a protocol with Onion-Security) and picks $ID$ randomly: 
\begin{enumerate}[leftmargin=1.25cm]
\item  $ext_i$ does not leak more information about the onion that it is attached to than the onion already leaks.
\item  Except with negligible probability any change of  $ID\|ext_i$ is detected in $\ProcOnion$  of $P_i$ and the processing aborted. 
\item $ext_i$ has a known fixed length.
\end{enumerate}

\paragraph*{Extension Description}
The basic idea of our extension is to add one MAC for every router to protect  $ID\|ext_i$ from modification. Further, similar to the scheme of~\cite{camenisch2005formal} we use deterministic padding to ensure $ext_i$ has a fixed length.

To build this extension, we require a number of primitives:
\begin{itemize}
\item  \(Enc_{asym}:\mathcal{K}_{Pub}\times\mathcal{M}_{asym}\to\mathcal{C}_{asym}\):  a  non-malleable IND-CCA secure asymmetric encryption function for which public keys are shared in advance
\item \(Dec_{asym}:\mathcal{K}_{Priv}\times\mathcal{C}_{asym}\to\mathcal{M}_{asym}\): decryption function to \(Enc_{asym}\)
\item \(MAC: \mathcal{K}_{sym} \times \mathcal{M}_{MAC} \mapsto \mathcal{T}\): secure symmetric MAC scheme; modelled as a \ac{PRF}\footnote{This holds for constructions like HMAC using a \ac{PRF} as compression function (like SHA256).}
\item  \(Ver: \mathcal{K}_{sym} \times \mathcal{T} \times \mathcal{M}_{MAC} \mapsto \{0,1\}\): verification algorithm to $MAC$
\item  \(PRNG:\mathcal{K}_{sym}\times\mathbb{N}\to\mathcal{C}_{asym}\): secure \ac{PRNG} 
\item \(embed:\mathcal{K}_{sym}\times\mathcal{T}\to\mathcal{M}_{asym}\):  bijective map
\item \(extract:\mathcal{M}_{asym}\to\mathcal{K}_{sym}\times\mathcal{T}\): inverse operation \(extract = embed^{-1}\)
\end{itemize}

\begin{figure}[tb]
  \centering
  \includegraphics[width=0.48\textwidth]{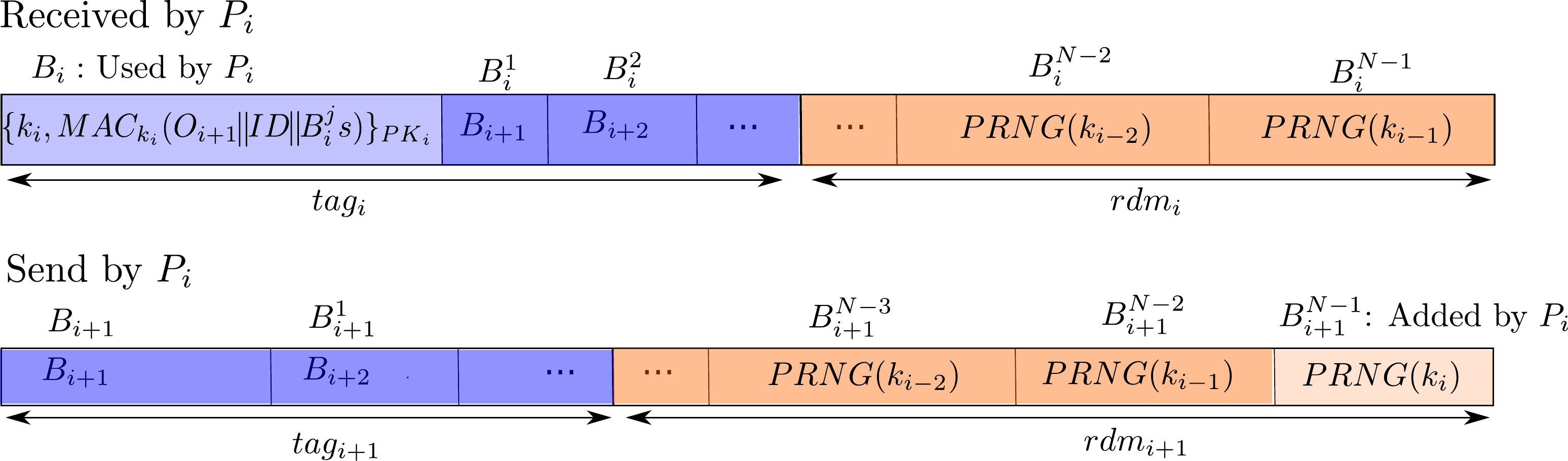}
  \caption{$ext_i$ (resp. $ext_{i+1}$). The current MAC block $B_i$ used by $P_i$ is depicted in light blue. Dark blue blocks are MAC blocks for upcoming routers, orange blocks depict the deterministic padding. $\{X\}_k$ is used to denote asymmetric encryption under key $k$. $PK_i$ is the public key of $P_i$. $B_i^j$ is short for all concatenated blocks that follow in the extension, i.e. $j \in \{1, \dots, N-1\}$.}
  \label{fig:extensionCounter}
\end{figure}

Recall, that we do assume that a \ac{PKI} is in place and that the maximum path length is $N$.

\emph{Forming Extensions: }
We treat \(ext_i\) as a concatenated vector of \(N\) blocks (see Fig. \ref{fig:extensionCounter}), each being an element of the asymmetric ciphertextspace \(c \in \mathcal{C}_{asym}\).
We split this vector in the blocks that contain MACs $tag_i$ and the padding blocks that contain only random numbers  $rdm_i$ to guarantee a constant length $N \cdot |c|$, i.e. for all $i$: $| (tag_i\|rdm_i) | =N \cdot |c|$.

First the random numbers are chosen. The last block is a pseudo-random number chosen by the previous relay (or the sender if no previous relay exists and the path is shorter than $N$). The other blocks of $rdm_i$ are the result of the same choice at earlier nodes. We use pseudo-random numbers to be able to calculate them in advance at the sender.  
Then the MAC blocks $tag_i$  are calculated. Each such block $B_i$ of $tag_i$  is encrypted with the corresponding router's public key and has an ephemeral symmetric key \(k_i\) and a MAC \(t_i\) embedded. The MAC $t_i$ authenticates the other \(N-1\) received blocks of $ext_i$, as well as the chosen $ID$ and next onion layer $O_{i+1}$. 

\emph{Processing Extensions: }
The extension is checked for the length and processing is aborted if the extension length does not match $N \cdot |c|$ . Otherwise, the first block of the extension is removed, decrypted with the secret key, the included ephemeral key \(k_i\) is extracted and the included MAC is verified. If the validation fails, the onion is discarded.
Otherwise,  a new block $r_i$ is added by generating a pseudo-random number based on \(k_i\).

\paragraph*{Analysis for Characteristics}
  
As the extension only consists of encrypted MACs and pseudo-random numbers it does not leak more information about the onion than the adversary had before (characteristic 1). If $ID\|ext_i$ is changed, it might be modified or completely replaced. If it is modified either the MAC itself or the input to the MAC is modified. Thus, except with negligible probability the verification of the MAC fails and the onion is discarded. To replace it completely the adversary would need to know the next onion layer after the honest node $O_{i+1}$, which she cannot as the original scheme has Onion-Security. Thus, a changed $ID\|ext_i$ is discarded (characteristic 2). The attached padding blocks assure a fixed length  of $N \cdot |c|$ (characteristic 3).

\paragraph{Scheme Description}\label{sec:simpleSchemeCamenisch}
With the above extension, we can now create the extended protocol  \(\Pi_{broken}\) using \(\mathrm{FormOnion}_{broken}\) and \(\mathrm{ProcOnion}_{broken}\) from~\(\Pi\):

Let \(\Pi\) be the onion routing protocol from~\cite{camenisch2005formal} that transfers a message \(m\) from a sender \(P_0\) to a receiver \(P_{n+1}\) over \(n\) intermediate routers \(\{P_i\}\) for \(1 \leq i \leq n\) using $\FormOnion_\Pi$ and $\ProcOnion_\Pi$. 
Let $O_i$ be the $i$-th onion layer of an onion of \(\Pi\). In our new scheme the $i$-th onion layer is $O_i\|ID\|ext_i$.

\paragraph*{Sender \normalfont{[}\(\FormOnion_{broken}\)\normalfont{]}}
The sender runs the original algorithm  \(\FormOnion_\Pi\). Additionally, as part of the new $\FormOnion_{broken}$ a random number \(ID\) of fixed length is picked. The sender creates an extension $ext_i$ for all layers $i$ and appends the $ID$ and extension to the onion layers to generate the output of $\FormOnion_{broken}$. The resulting onion layer \(O'_1 = O_1\|ID\|ext_1\) is sent to \(P_1\).

\paragraph*{Intermediate Router \normalfont{[}\(\ProcOnion_{broken}\)\normalfont{]}}
Any intermediate router receiving \(O'_i=O_i\|ID\|ext_i\) uses the fixed length of $ID$ and $ext_i$ (characteristic 3) to split the parts of the onion layer. This way it retrieves the original onion layer \(O_i\). Then it runs \(\ProcOnion_\Pi\) on \(O_i\) to create \(O_{i+1}\). Afterwards it processes $ext_i$ (aborts if the extension was modified before) to generate $ext_{i+1}$. Finally, it sends the layer \(O_{i+1}' = O_{i+1}\|ID\|ext_{i+1}\)  to \(P_{i+1}\).

\paragraph*{Receiver\normalfont{[}Identical to first part of \(\mathrm{ProcOnion}_{broken}\)\normalfont{]}}
The receiver getting \(O'_{n+1}=O_{n+1}\|ID\|ext_{n+1}\) splits the parts of the onion layer. Thereby it retrieves the original layer \(O_{n+1}\) and executes $\ProcOnion_\Pi$ on it.

\paragraph{Analysis regarding properties}\label{sec:simpleSchemeProperties}
The properties follow from the corresponding properties of the original protocol. As we only add and remove $ID\|ext$, the same path is taken and the onion layers $O_i$ are calculated as before. Hence, \emph{Correctness}  and \emph{Integrity} hold. The first part of the onion is not changed in our modification of the protocol and hence, \emph{re-wrapping} this part is as difficult as before.  Only \emph{Onion-Security} remains. As $\Pi$ has Onion-Security, it can only be broken in $\Pi_{broken}$  by learning information from $ID\|ext$ (with or without modification). The $ID$ is chosen randomly and independent from $\FormOnion$'s input and security parameters. Hence, it  does not depend on any information the adversary has to distinguish in the Onion-Security game, i.e. the path and message.  By characteristic 1  the extension $ext_i$ is not leaking more information than $O_i$ was leaking before, which is not enough to win the Onion-Security game as $\Pi$ has Onion-Security.
Only learning by modifying remains. As any modification of $ID\|ext$ is detected (characteristic 2), the adversary cannot create an onion layer with modified $ID\|ext$ that would be processed by the oracle. Hence,  adding the extension does not introduce any advantage in breaking Onion-Security.

\paragraph{Practical Insecurity}\label{sec:simpleSchemeInsecurity}
$ID$ is never changed and hence the onion layers can be linked based on it. 
If the adversary observes the link from the sender and corrupts the receiver, the message and sender can be linked.
If the adversary observes the link from the sender and the link to the receiver, the sender and receiver can be linked. 
Instead of observing the link, it is  also sufficient to  observe as the relay connected to this link\footnote{We assume the sender is not spoofing its address.}. In this case, the relay learns the $ID$ as part of the onion and hence additional hop-to-hop encryption is not sufficient from a practical standpoint either. 
%

\else
We show that reidentifying the same onion after processing at a honest node is not prevented by the original properties with the following obviously insecure protocol.
\paragraph{Insecure Protocol 2}
The protocol $\Pi_{broken2}$ when creating an onion independently draws a random identifier $ID$ and appends it to each layer of the created onion. 
The $ID$ makes the onion easily traceable, as it remains identical throughout the processing of the onion at any relay. 
For the proof of the properties we need to construct an extension that prevents modification of the $ID$. 
We provide the details of the extension and broken scheme in the extended version of this paper~\cite{extendedVersion}. 

\paragraph{Analysis regarding Properties}
Without going into the details here, we note that none of the properties protects from embedded identifiers, which are identical for all onion layers of the same onion, but different for other onions: \emph{Integrity}, \emph{Correctness} and \emph{Wrap-Resistance} are not influenced by this adaptation as the same path is taken, the onion layers (without appended $ID$) are constructed as before and thus are as hard to re-wrap as before. 
\emph{Onion-Security} is not broken as the extension protects against modification of both the appended $ID$ and extension, and thus calling the oracle with a modified $ID$ or extension is useless. 
Finally, the appended $ID$ does not include any information about the input used to form the onion and hence does not help to distinguish the onion with inputs of the adversary from another onion.

Note that also \ExitSecurity\ cannot protect against linking as only one onion layer is given to the adversary.
\fi

  		\subsubsection{Improved Property: \LinkingSecurity\ $\OSSenderNode$ against bypassing honest nodes}
  		To explicitly model that output onions shall not be linkable to the corresponding inputs of the relays, 
the adversary has to get onion layers both before and after they are processed at the honest relay.
Our property challenges the adversary observing an onion going from the sender to an honest relay, to distinguish between the onion generated with her original inputs $O$, and a second onion $\bar{O}$. The path of the alternative onion $\bar{O}$ includes the original path from the sender to the honest node, but all other parameters are chosen randomly. Thus, there might be a path before the sender node and both the path after the honest node and the message can differ.
Additionally, the adversary always observes the onion generated by processing $O$ at the honest relay.
We illustrate the new challenge outputs in Fig.~\ref{fig:OSSN}.

Note that this new property indeed prevents the insecurity given in Section~\ref{sec:insecureCounterExample}: If the adversary can decide that the provided onions belong together, she knows that the original onion has been sent and thus she is able to distinguish the onions. 

We again adapt  the original Onion-Security explained in Section~\ref{para:GameStructure} with the difference that 
the adversary now gets the layers of $O$ after the honest relay and either $O$'s layers between the honest sender and relay or $\bar{O}$'s layers  in Step 6. 
 This is our new property $\OSSenderNode$, which is formally defined in Def.~\ref{def:OSSenderNode}.

\begin{figure}[htb]
  \centering
  \includegraphics[width=0.37\textwidth]{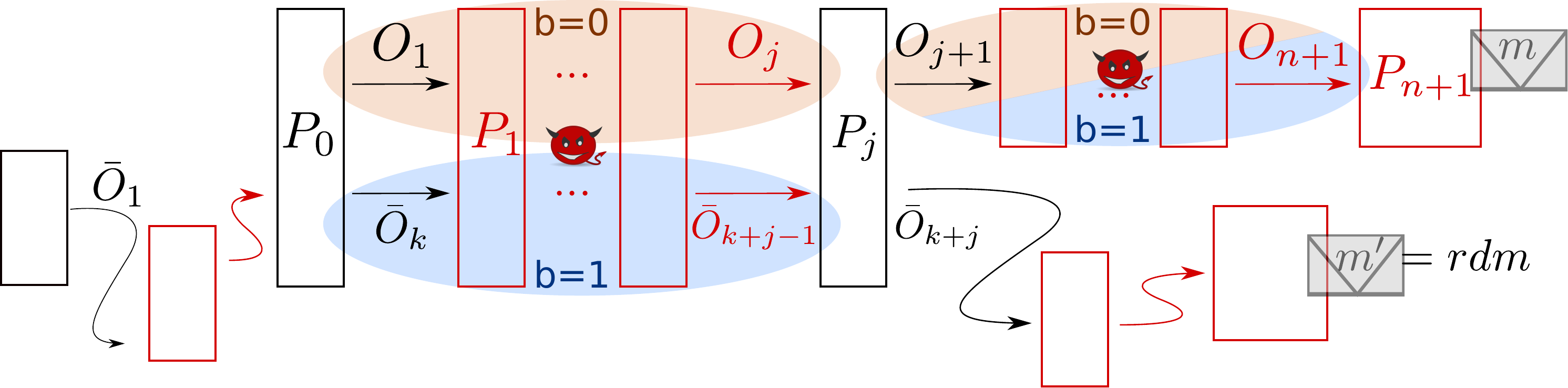}
  \caption{Cases of $\OSSenderNode$ illustrated: Red boxes are corrupted routers, black honest routers and curved arrows represent randomly chosen paths. In case $b=0$ the adversary chosen onion, sent from $P_0$, is observed on the complete path $\mathcal{P} $ starting from when it arrives at the first relay $P_1$. For $b=1$ the onion layers in the first orange ellipse are replaced with those of a randomly drawn onion, that take the same path between $P_0$ and $P_j$ (the blue ellipse), but differ otherwise and might have traveled from another honest sender to $P_0$ earlier.}
  \label{fig:OSSN}
\end{figure}


\begin{definition}[\LinkingSecurity\ $\OSSenderNode$] \label{def:OSSenderNode}
\begin{enumerate}[leftmargin=1.25cm]
\item -- 4) as in Def.~\ref{def:OSNodeRec}
\addtocounter{enumi}{3}
  \item   The challenger creates the onion with the adversary's input choice:
  \begin{align*}
    ({O}_1,  \dots, {O}_{n+1})\gets \mathrm{FormOnion}(&m, \mathcal{P}, (PK)_{\mathcal{P}})
  \end{align*}
    and  a random onion with a randomly chosen path $\bar{\mathcal{P}}=(\bar{P}_1, \dots, \bar{P}_k=P_1 , \dots,\bar{P}_{k+j}=P_j, \bar{P}_{k+j+1} ,\dots, \bar{P}_{\bar{n}+1})$, that includes the subpath from the honest sender to  honest node of $\mathcal{P}$  starting at position $k$ ending at $k+j$  (with $1 \leq j+k \leq \bar{n} + 1 \leq N$), and a random message $m'\in \mathcal{M}$:
     \begin{align*}
    ({\bar{O}}_1,  \dots, {\bar{O}}_{\bar{n}+1})\gets \mathrm{FormOnion}(&{m'},\bar{\mathcal{P}}, (PK)_{\bar{\mathcal{P}}})
  \end{align*}
\item 
	\begin{itemize}
		\item \begin{flushleft} If $b = 0$, the challenger gives $(O_1, \ProcOnion(O_{j}))$ to the adversary.\end{flushleft}
		\item \begin{flushleft}Otherwise, the challenger gives $(\bar{O}_k, \ProcOnion(O_{j}))$ to the adversary.\end{flushleft}
	\end{itemize}
\item The adversary may submit any number of onions $O_i$, $O_i \neq O_j$, $O_i \neq \bar{O}_{k+j}$ of her choice to the challenger.  The challenger sends the output of $\mathrm{ProcOnion}(SK, O_i , P_j )$ to the adversary.
\item The adversary produces guess $b'$ .
\end{enumerate}
    $\OSSenderNode$ is achieved if any  \ac{PPT} adversary \(\mathcal{A}\), cannot guess $b'=b$ with a  probability non-negligibly better than \(\frac{1}{2}\). 
\end{definition}

  	\subsection{Improved Properties imply Ideal Functionality} \label{sec:newProof}
  	In this section we first informally argue and then formally prove that our two new properties, together with Onion-Correctness, are sufficient for the ideal functionality.
For easier discussion, we summarize the different outputs of the security properties in Fig.~\ref{fig:OSOutputs}.
\begin{figure}[htb]
  \centering
  \includegraphics[width=0.45\textwidth]{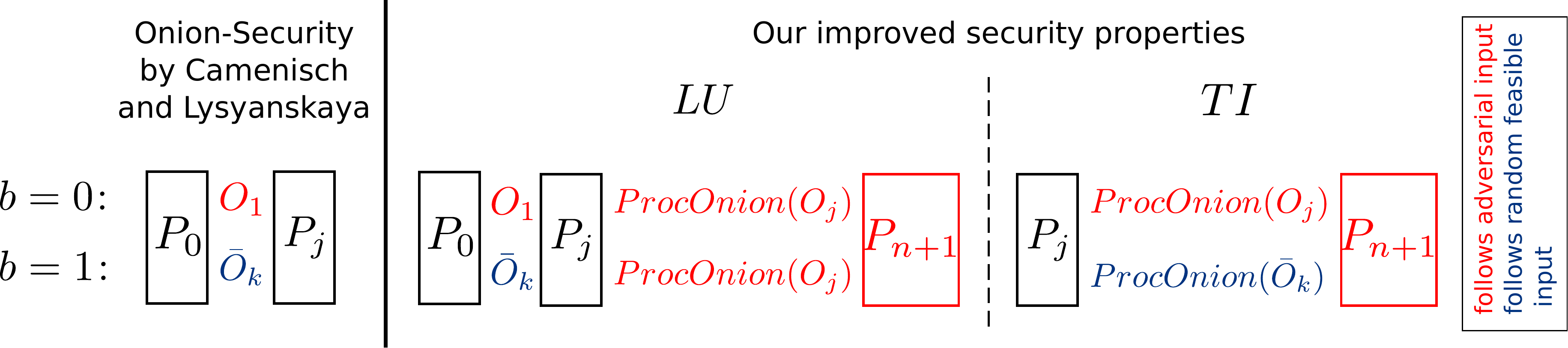}
  \caption{Difference in security properties illustrated: While in original Onion-Security no processed onion after $P_j$ is output, $\OSSenderNode$ outputs the processing and $\OSNodeReceiver$ challenges to distinguish it from randomness.  }
  \label{fig:OSOutputs}
\end{figure}

\paragraph*{Informally}
In case of sender corruption, the ideal functionality outputs all information given as input to \(\mathrm{FormOnion}\), and hence we do not need to provide any protection in this case.

Considering honest senders, the ideal functionality outputs only the path sections introduced by cutting at honest nodes together with random onion IDs or if the receiver is compromised, additionally the message. These IDs are independently drawn for each such section and thus cannot be linked.

The idea to show the same privacy for communications with honest senders is simple: for every path section instead of the original onion layers we give the adversary without her noticing it layers of a random replacement onion. The replacement onion only corresponds with the original onion in characteristics she also learns about the onion in the ideal functionality. Namely those characteristics are the path sections and if the receiver is corrupted, the message. The replacements can obviously not be linked or leak any other information as all their (other) parameters have been chosen randomly.

Our properties are sufficient to show that the adversary cannot notice the replacement: $\OSSenderNode$ allows to replace any onion layers on a path section between two honest nodes with onion layers that are (except for the fact that they use the same path section) chosen completely at random. The adversary is not able to notice the replacement as she cannot distinguish the onion layers in the $\OSSenderNode$ game. This allows us to replace all layers in communications between honest senders and receivers, and all except the last section in communications between honest senders and corrupted receivers.

For replacement on the last part of a path with a corrupted receiver we need our other property $\OSNodeReceiver$.
$\OSNodeReceiver$ allows to replace any onion layers  on a path section between an honest node and a corrupted receiver with onion layers that are (except for the fact that they use the same path section and carry the same message) chosen  completely  random. The adversary is not able to notice the replacement as she cannot distinguish the onion layers in the $\OSNodeReceiver$ game. This completes our informal argument.

\paragraph*{Formally}

Similar to Camenisch and Lysyanskaya we assume a secure protocol to distribute public keys.
We consider key distribution outside of the scope of this paper.

We now show that our  new security properties are indeed sufficient to realize the ideal functionality. Therefore, we define a secure \ac{OR} scheme to fulfill all our properties:

\begin{definition}\label{def:secureOR}
  A secure \ac{OR} scheme is a triple of polynomial-time algorithms $(G,FormOnion,ProcOnion)$ as described in Section~\ref{model:onionRoutingScheme} that achieves Onion-Correctness (Definition~\ref{def:onionCorrectnessOrig}), \ExitSecurity\  (Definition~\ref{def:OSNodeRec}), as well as \LinkingSecurity\  (Definition~\ref{def:OSSenderNode}).\smallskip
\end{definition}

\iflong 
 
Following Camenisch and Lysyanskaya, we build a protocol from any secure \ac{OR} scheme by using another ideal functionality for the distribution of the public keys. Let therefore $\mathcal{F}_{RKR}$ be an ideal functionality that allows to register the public keys. 

\begin{definition}\label{def:secureORProtocol}
\ac{OR} protocol $\Pi$ is a secure \ac{OR} protocol (in the $\mathcal{F}_{RKR}$-hybrid model), iff it is based on a secure \ac{OR} scheme $(G, \FormOnion, \ProcOnion)$ and works as follows:

\emph{Setup:} Each node $P_i$ generates a key pair $(SK_i, PK_i) \gets G(1^\Lambda)$ and publishes $PK_i$ by using $\mathcal{F}_{RKR}$.

\else 
Following Camenisch and Lysyanskaya, we build a protocol from any secure \ac{OR} scheme:
\begin{definition}\label{def:secureORProtocol}
	\ac{OR} protocol $\Pi$ is a secure \ac{OR} protocol  if it is based on a secure \ac{OR} scheme $(G, \FormOnion, \ProcOnion)$ and works as follows:
	
	\emph{Setup:} Each node $P_i$ generates a key pair\footnote{We assume   the $PK_i$ are checked to be well formed as part of the key distribution mechanism that is outside the scope of this work.} $(SK_i, PK_i) \gets G(1^\lambda, p, P_i)$ and publishes $PK_i$.
\fi

\emph{Sending a Message:}  If $P_S$ wants to send $m \in \mathcal{M}$ to $P_r$ over path $P_1,\dots, P_n$ with $n<N$,  he calculates $(O_1,\ldots,O_{n+1})$ $\gets\FormOnion(m, (P_1,\dots, P_n, P_r), (PK_1, \dots, PK_n,PK_r))$ and sends $O_1$ to $P_1$.

\emph{Processing an Onion:} $P_i$ received $O_i$ and runs $(O_j,P_j)\gets \ProcOnion(SK_i,O_i,P_i)$. If $P_j=\perp$, $P_i$ outputs ``Received $m= O_j$'' in case $O_j\neq \perp$ and reports a fail if $O_j=\perp$. Otherwise $P_j$ is a valid relay name and $P_i$ generates a random $temp$ and stores $(temp, (O_j, P_j))$ in its outgoing buffer and notifies the environment about $temp$.

\emph{Sending an Onion:}  When the environment instructs $P_i$ to forward $temp$, $P_i$ looks up $temp$ in its buffer. If $P_i$ does not find such an entry, it aborts. Otherwise, it found $(temp, (O_j,P_j))$ and sends $O_j$ to $P_j$.\medskip
\end{definition}




To show that any secure \ac{OR} protocol $\Pi$ realizes the ideal functionality,  we prove that any attack on the secure \ac{OR} protocol can be simulated in the ideal functionality. As the simulator only gets the outputs of the ideal functionality and thus no real onions, it simulates them with the closest match it can create: replacement onions that take the same path (and, if sent to corrupted receivers, include the same message). 
Due to our new security properties, we know that such a replacement cannot be distinguished. 
The full proof is included in Appendix \ref{app:proofImplyIdeal}.

\iflong 
\begin{theorem}\label{theo:implyideal}
  A secure onion routing protocol following Definition~\ref{def:secureORProtocol} UC-realizes $\mathcal{F}$ in the $(\mathcal{F}_{RKR}$)-hybrid model.
\end{theorem}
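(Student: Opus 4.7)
The plan is to construct a simulator $\mathcal{S}$ that interacts with $\mathcal{F}$ and then show, via a hybrid argument, that its transcript is indistinguishable from the real execution of $\Pi$ for every environment $\mathcal{Z}$. The simulator's knowledge, inherited entirely from $\mathcal{F}$'s leakage, consists of (i) the maximal sub-paths of every onion between consecutive honest routers, (ii) the associated random $temp$-IDs, (iii) the plaintext message whenever the receiver is corrupted, and (iv) the full onion specification whenever the sender is corrupted. Everything that goes out on the simulated wire will be forged by $\mathcal{S}$ using only these inputs together with calls to $\FormOnion$.

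Concretely, $\mathcal{S}$ maintains a table mapping each announced $temp$ to a \emph{replacement onion layer}. On an announcement of a sub-path $(P_{o_{i+1}},\dots,P_{o_{j-1}})$ that ends at an honest node $P_{o_j}$, $\mathcal{S}$ samples a fresh random message $m'$ and computes $\FormOnion(m',(P_{o_{i+1}},\dots,P_{o_j}),(PK_\cdot))$, handing the first layer to the preceding corrupted router (if any) and recording the layer that would arrive at $P_{o_j}$. On an announcement whose sub-path terminates at a corrupted receiver with revealed message $m$, $\mathcal{S}$ instead forms the onion along the announced sub-path with message $m$. For onions originating from corrupted senders, $\mathcal{F}$ reveals all inputs, so $\mathcal{S}$ runs the protocol honestly. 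Adversarial $\mathtt{Deliver}$ and forward commands are mirrored into $\mathcal{F}$, and processing of adversary-injected onions at corrupted routers is carried out by $\mathcal{S}$ with the keys it controls.

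Indistinguishability is proved by a hybrid argument over onions and their segments. Let $H_0$ be the real execution and $H^\star$ the simulated one. Intermediate hybrids replace, one at a time, the layers observable by $\mathcal{Z}$ for a single honest-sender onion on a single maximal segment between consecutive honest nodes (or a terminal segment ending at a corrupted receiver) by the layers of a replacement onion that preserves only what $\mathcal{F}$ leaks. A replacement on a segment flanked by two honest nodes reduces to a distinguisher against $\OSSenderNode$: the challenger's output $(O_1,\ProcOnion(O_j))$ versus $(\bar O_k,\ProcOnion(O_j))$ is exactly the pair that $\mathcal{Z}$ observes on the segment together with the first layer handed to the next honest node. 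A replacement on a terminal segment to a corrupted receiver reduces to $\OSNodeReceiver$: the challenger returns the first layer emitted past the honest node, which matches what the corrupted receiver ingests. Oracle $\ProcOnion$ queries that the environment triggers at the challenge honest node through \emph{other} concurrent onions and through adversarial modifications or replays are answered using the oracle steps of the respective games; Onion-Correctness ensures that $\mathcal{S}$'s internal layer pointers agree with the honest processing up to negligible error, and the replay-protection assumption together with $\mathcal{F}$'s consumption of $L$-entries justifies identically dropping duplicate deliveries in both worlds.

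The main obstacle will be the bookkeeping required to stitch together arbitrarily many concurrent onions and segments while reducing each step to a single-challenge game. I would handle this with a standard polynomial hybrid, taking a union bound over at most $q\cdot N$ segments, where $q$ bounds the number of onions produced by honest senders in a given execution. A subtler point is guaranteeing that replacement onions behave like the real ones under active attacks on the adversarial prefix or suffix of their path: since every wire layer observable on such portions is produced by $\mathcal{S}$ through honest $\FormOnion$ calls on the paths announced by $\mathcal{F}$, any detectable discrepancy would directly yield a distinguisher against $\OSSenderNode$ or $\OSNodeReceiver$, contradicting Definition~\ref{def:secureOR}. Summing the negligible advantages across hybrids gives negligible distinguishing advantage for $\mathcal{Z}$ and establishes UC-emulation of $\mathcal{F}$.
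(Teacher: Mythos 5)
Your proposal matches the paper's own proof in both structure and substance: the same simulator (replacement onions formed via $\FormOnion$ on the leaked sub-paths, with a random message between honest nodes and the real message on the terminal segment to a corrupted receiver, and honest-protocol handling for corrupted senders) and the same segment-by-segment hybrid argument reducing to $\OSSenderNode$ and $\OSNodeReceiver$ with oracle calls absorbing concurrent and modified onions. The only details you elide are bookkeeping ones the paper makes explicit — seeding the replacement pseudo-randomly from $temp$ so replays are consistent, and the $\bar{O}$-list lookup that substitutes the true processing when the unmodified replacement arrives at the next honest node — but these do not change the approach.
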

\else
\begin{theorem}\label{theo:implyideal}
	A secure \ac{OR} protocol following Definition~\ref{def:secureORProtocol} UC-realizes the  ideal functionality $\mathcal{F}$.
\end{theorem}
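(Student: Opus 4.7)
The plan is to prove the theorem by the standard UC route: exhibit a simulator $\mathcal{S}$ that, interacting only with the ideal functionality $\mathcal{F}$ and the environment $\mathcal{Z}$, produces a view indistinguishable from the real protocol $\Pi$, and then argue indistinguishability by a hybrid argument driven by the two new security properties together with Onion-Correctness.

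First I would build the simulator. Upon the Setup phase, $\mathcal{S}$ runs $G$ for each honest router and registers the resulting public keys, so it knows all honest secret keys. Whenever $\mathcal{F}$ emits a message of the form ``onion $temp$ from $P_{o_i}$ routed through $(P_{o_{i+1}},\dots,P_{o_{j-1}})$ to $P_{o_j}$'' for a communication whose sender is honest, $\mathcal{S}$ fabricates a \emph{replacement onion} by calling $\FormOnion$ on (i) the observed path segment $(P_{o_{i+1}},\dots,P_{o_{j-1}},P_{o_j})$ extended with a freshly sampled random tail ending at a random router, (ii) a freshly sampled message if the receiver is honest, or the true message (which $\mathcal{F}$ already reveals in that case) if the receiver is corrupted; for the final segment whose receiver is corrupt, $\mathcal{S}$ instead uses the ``node-to-receiver'' template required by $\OSNodeReceiver$, preserving the true message and the tail $P_{o_j}\dots P_{n+1}$. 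For corrupted senders, $\mathcal{F}$ leaks all inputs via $\texttt{Output\_Corrupt\_Sender}$, so $\mathcal{S}$ simply runs $\FormOnion$ honestly. Deliver/Forward requests from $\mathcal{A}$ are dispatched to $\mathcal{F}$ by looking up the temp identifier, and onions injected by $\mathcal{A}$ into honest relays are processed by $\mathcal{S}$ using the real $\ProcOnion$ with the stored secret keys, with the output fed back into the simulated $\mathcal{A}$'s view and (via Onion-Correctness) into a corresponding instruction to $\mathcal{F}$.

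The hybrid argument then walks, one communication and one honest-relay ``cut'' at a time, from the real execution to the ideal one. In hybrid $H_{k,\ell}$ the first $k-1$ communications are already simulated with replacement onions; in the $k$-th communication the first $\ell-1$ honest-node segments have been replaced and the remaining ones are still real. Moving from $H_{k,\ell}$ to $H_{k,\ell+1}$ replaces exactly one segment between two consecutive honest cuts (or between the honest sender and the first honest relay, or between the last honest relay and a corrupted receiver). For any intermediate segment the change is a swap between the onion generated with the environment's true inputs and one generated from a random path/message that shares the segment in question; by $\OSSenderNode$ the adversary's view (including the processed layer leaving the honest cut, which the property explicitly gives to the attacker) is indistinguishable, so the hybrids are statistically close. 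For the final segment to a corrupted receiver the same swap, now preserving the message, is indistinguishable by $\OSNodeReceiver$. Oracle queries in the games model exactly the other onions that $\mathcal{A}$ injects at the honest relay and whose processing $\mathcal{S}$ must reveal; Onion-Correctness ensures the ``honest delivery'' behavior matches what $\mathcal{F}$ bookkeeps.

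The main obstacle I expect is the careful coupling between the simulated adversary's view and $\mathcal{F}$'s internal bookkeeping under active attacks, i.e.\ when $\mathcal{A}$ modifies, duplicates, or replays an onion at an honest node. The reduction must choose the right challenge onion in each hybrid step while still answering every other injected onion consistently via the game oracle, and it must translate the honest node's eventual output into a matching \texttt{Forward\_Onion} instruction to $\mathcal{F}$ without creating a new distinguishable branch. Handling the combinatorics of multiple overlapping paths---and in particular rerouted or duplicated segments that land at the same honest relay as the current challenge---requires a guessing step in the reduction or a union bound over at most $N\cdot q$ segments (for $q$ communications and max length $N$), absorbing a polynomial loss that is acceptable since both properties give only negligible advantage. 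Once this accounting is done, chaining the hybrids yields the UC-indistinguishability claim, establishing the theorem.
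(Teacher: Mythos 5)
Your proposal matches the paper's proof in all essentials: the same simulator (replacement onions formed on the observed path segment with random remaining parameters, the true message preserved only for corrupted receivers, real $\ProcOnion$ with stored honest keys for adversary-injected onions), the same segment-wise hybrid argument cutting at honest relays, with $\OSSenderNode$ handling segments that end at an honest node and $\OSNodeReceiver$ handling the final segment to a corrupted receiver, and the same use of the game oracles to answer modified or injected onions during each reduction. The only differences are cosmetic (you interleave the hybrids per communication and per cut, whereas the paper first replaces all sender-to-first-honest segments, then all intermediate ones, then all final ones), so this is essentially the paper's argument.
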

\fi

\section{Second Pitfall: Undervalued Oracles} \label{sec:Oracles}
We discovered a new attack on HORNET whose existence cannot be explained with the shortcomings of the properties of Camenisch and Lysyanskaya. The reason for this attack is not in the properties used for the proof, but in how the properties are proven. It turns out that on many occasions the properties have not been proven correctly; more precisely the oracles have been wrongly adapted or ignored.

We start this section by describing our attack, then explain how the oracles have been modified and how the correct use of oracles detects the attack.

	\subsection{Attacking HORNET's Data Transmission} \label{sec:attackHornet}\label{sec:HornetAttack}
	HORNET was proposed as a network level anonymity system for the anonymized transmission of arbitrary higher layer packets.
The latter can be expected to match specific formats or contain interpretable content, e.g. natural language.
Hence the receiver can very likely distinguish real messages from random bit strings of the same length in HORNET's transmission phase. 

HORNET uses a \ac{PRP} in CBC mode\footnote{Note, that the paper is not entirely clear about this point, as it states that HORNET uses a ``stream-cipher'', which would make our attack stronger, ``in CBC mode'', suggesting that instead they actually use a \ac{PRP}.} to form layered encryption of its payload, but does not implement integrity checks at the processing relays for it.

An attacker that controls the first relay\footnote{Technically, controlling the link from the sender to the first relay is enough. However, whether the adversary controls links is not explicitly stated in~\cite{chen_hornet:_2015}.} and the receiver can link sender and receiver (thus break  relationship anonymity $\PairSRL$) and this adversary can also link large parts of the message to its sender (break sender anonymity $\SML$) with the following attack:
\begin{enumerate}
	\item The adversary flips bits in the last $k$ blocks of the data payload of the HORNET packet sent from the sender.
	\item The packet is sent through the chain of relays as intended because the header was not modified and the payload's integrity is not protected. The payload is decrypted using the block cipher in CBC mode.
	\item The receiver checks the last $k$ blocks. They either contain random bits  (i.e. the sender was communicating with this receiver and the preceding decrypted blocks contain parts of the original message) or it conforms to a real message (i.e. the sender was not communicating with this receiver).
\end{enumerate}

Varying $k$ the success of distinguishing real messages from some ending with random bit strings can be adjusted at the cost of learning less about the real message.

\iflong 
\subsubsection{Varying the Attack for Related Systems}
The attack can be used on related works and the setup phase of HORNET as well. 
See Table \ref{tab:AttackVariation} for a summary.

 \paragraph{Sphinx}
 
Sphinx specifies that headers and payload are encrypted independently of each other.
The payload is encrypted using  a bidirectional error propagating block cipher and protected with an integrity check for the receiver, but not for processing relays. Further, Sphinx model considers the receiver to not be part of the anonymization protocol. The receiver's address is included in the payload.

Because of the block cipher choice, our attack destroys the payload completely, and the sender cannot be linked to the original message content and, if used in the intended model, the sender can only be linked to the exit node.

However, if Sphinx is used with the receiver as the last relay (no address needs to be coded in the payload),  it remains possible for a corrupt receiver to determine, who attempted to communicate with her by using our attack. The receiver does not even need to be able to distinguish a real message from random bits, but just has to notice that Sphinx's integrity check at the receiver failed.

 \paragraph{HORNET's Setup Phase}

HORNET's setup phase uses Sphinx in the setting that the last relay is the receiver,
and hence the linking of sender and receiver is possible.
 
 \paragraph{TARANET}
 
 TARANET's adversary model prevents our attack. For the sake of completeness, we argue the effect of our attack when their trust assumption is violated: 
 TARANET uses HORNET's Sphinx-based setup phase and thus is prone to the attack in this phase.
Its data transmission phase however protects the integrity of the complete onion at each hop. Thus, in this phase the attack fails.
  
 \paragraph{Improved Minx}
 
As the original Minx, the improved one excludes integrity checks for onions; Minx nodes process all onions they receive.
Thus, our attack to link sender and receiver works under the assumption that the receiver can distinguish valid messages from random bits.  
Similar to Sphinx, both Minx and the improved version employ bidirectional error propagating block ciphers, so recovering (parts of) the message after modification is impossible. In contrast to Sphinx, the improved Minx's network model already allows for our attack.

\subsubsection{Ramifications}
The described attack lies well within the adversary model of HORNET: it allows a fraction of nodes to be actively controlled by the adversary and aims at sender anonymity, even if the receiver is compromised, and relationship anonymity, even if one of the end hosts is compromised.
It also lies within the adversary model of the improved Minx as it is an active tagging attack on the first link that allows to discover the destination. 
For Sphinx and Taranet the deviation from their network resp. adversary model seems to explain the existence of this attack.

The existence of this attack in HORNET and the improved Minx however contradicts their security proofs. The reason for this are not the flaws in the properties of~\cite{camenisch2005formal}, but a common mistake in proving the properties.
\else
The described attack lies well within the adversary model of HORNET: it allows a fraction of nodes to be actively controlled by the adversary and aims at sender anonymity, even if the receiver is compromised, and relationship anonymity, even if one of the end hosts is compromised.

Further, the attack can be varied to link senders and receivers in the improved Minx or, if it is used with an unintended addressing model  like in HORNET's or TARANET's setup phase\footnote{Although this attack works for TARANET, it is outside TARANET's attacker model as the receiver needs to be corrupted.}, in Sphinx 
(See Table \ref{tab:AttackVariation} for a summary. For more information on how to vary the attack, we refer the interested reader to our extended version \cite{extendedVersion}).

\fi

  \begin{table}
\center
 \caption{Observable linkings on different systems; ($\checkmark$) if attack works only under violation of the adversary model}
  \label{tab:AttackVariation}
\resizebox{0.48\textwidth}{!}{%
  \begin{tabular}{ p{4cm} c c  c}

System &Sender-Message&Sender-Receiver & Sender-Exit node \\ \hline
Improved Minx& & $\checkmark$& $\checkmark$\\
Sphinx (receiver $\neq$ exit node)& & & $\checkmark$\\
Sphinx (receiver $=$ exit node)\footnotemark& & $\checkmark$ & $\checkmark$\\
HORNET (Setup)& & $\checkmark$& $\checkmark$\\
TARANET (Setup)& & $(\checkmark)$& $(\checkmark)$\\
HORNET (Data) &$\checkmark$& $\checkmark$& $\checkmark$\\
TARANET (Data)& & & \\
\end{tabular}}
\end{table}

\footnotetext{We stress that this model was never intended by Sphinx, but other works used Sphinx that way.}

	\subsection{Mistake in the Proofs} \label{sec:Ramifications}
	The shared pitfall are the oracles. 
In HORNET's analysis this attack was excluded as the oracles were not taken into account. 
The proof of TARANET ignores the oracles as well, yet its transmission phase incidentally protects against our attack.
Sphinx, the improved Minx and even an extension in~\cite{camenisch2005formal}  restrict the oracle in our Step 7 to only allow non-duplicate onions, i.e. those with a changed header. 
This weakens the properties too much, as the limited oracle incidentally loses protection from modification attacks, where the onion is modified before it ever reached the honest node.

Note, that our property $\OSSenderNode$  (and even the insecure original Onion-Security) indeed cannot be fulfilled if the before mentioned attack (Section \ref{sec:HornetAttack}) works: The adversary alters only the payload of the challenge onion and queries the oracle with the modified onion. As processing at the honest node is not aborted for modified onions, the adversary learns the next relay after the honest node. She can thus decide whether the next relay  corresponds to her choice ($b=0$) or not ($b=1$). 

We want to stress that this is not the only attack that prevents HORNET from achieving $\OSSenderNode$.  Another exploits the usage of sessions (more in Section~\ref{disc:session}).


\section{Proving the Adapted Sphinx secure} \label{sec:adaptedProof}
Sphinx specifies to use a header and a payload. 
The original Sphinx~\cite{danezis_sphinx:_2009} suggests per-hop integrity protection only for the header as an integrity check for the payload conflicts with their support for replies. Thus, as mentioned in Section~\ref{sec:HornetAttack} Sphinx allows to link sender and exit node. As this linking is not possible in the ideal functionality, Sphinx, even with the flaw from Section~\ref{sec:zeroAttack} fixed, cannot realize the ideal functionality.

Beato et al. however proposed an adaptation to Sphinx, to simplify the protocol and improve security and performance at the cost of losing support for replies~\cite{improvedSphinx}.
Thereby, they introduce integrity checks of the payload at each hop.
As this prevents the linking attack, we decided to analyze this version of Sphinx, adapted with the small fix to the attack from Section~\ref{sec:zeroAttack} known from the Sphinx implementation, for compliance with our properties for secure \ac{OR} protocols.
Note, that in compliance to Beato et al. this variation covers only the forward phase and no replies.

The proof for Onion-Correctness follows the ideas in~\cite{danezis_sphinx:_2009}. 
To analyze $\OSSenderNode$ and $\OSNodeReceiver$, we successively define games with marginally weaker adversary models.
Arguing how each step follows from reasonable assumptions, we terminally reduce it to the security of an authenticated encryption scheme and the DDH assumption.
We provide the detailed proof in Appendix~\ref{app:SphinxProof}, and it leads to the following theorem:

\begin{theorem}
Beato's Sphinx variation, adapted with the fix to the attack from Section~\ref{sec:zeroAttack},
is a secure \ac{OR} scheme. 
\end{theorem}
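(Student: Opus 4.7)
To establish the theorem I would verify each of the three requirements in Definition~\ref{def:secureOR} in turn. Onion-Correctness is largely inherited from the analysis of Sphinx in~\cite{danezis_sphinx:_2009}: each relay extracts the next hop deterministically from the received header, and Beato's per-hop authenticated encryption of the payload replaces the original payload processing without altering the routing logic. Combined with the padding fix from Section~\ref{sec:zeroAttack}, the filler and blinding terms telescope correctly along the path, so the only reason the three correctness conditions could fail is a collision in the replay-protection tag, which occurs with negligible probability. A straightforward induction along the path then yields correct path, layering, and decryption up to negligible error.

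For $\OSSenderNode$ and $\OSNodeReceiver$ my plan is a sequence of game hops, each parameterised by the position $j$ of the honest relay and each justified by a precise cryptographic assumption. Starting from the real challenge, the first hop replaces $P_j$'s real $\ProcOnion$ on the challenge onion by a simulated processing that short-circuits the MAC check and returns $\bot$ on every oracle query that touches the protected region; by authenticated-encryption security this is indistinguishable, because any oracle query differing from the challenge in a segment protected by $P_j$'s key either fails verification or cannot be constructed without that key. The second hop substitutes the Diffie--Hellman value $g^{x y_j}$ shared between the sender and $P_j$ by a uniformly random group element; since $SK_j = y_j$ is used in the challenge only through this single shared secret, indistinguishability follows from DDH. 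A third hop then replaces the symmetric key fed into the header stream cipher and the payload AE scheme by a uniformly random key, using the PRF security of the key-derivation function.

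After these hops the layers flanking $P_j$ are encryptions under a key that is information-theoretically hidden from the adversary. For $\OSSenderNode$, the joint distribution of $(O_1,\dots,O_{j-1},\ProcOnion(O_j))$ then depends only on the common prefix $(P_1,\dots,P_j)$ and on fresh randomness, so it is identically distributed to the corresponding portion of the replacement onion $\bar O$, which by construction shares exactly this prefix. The same reasoning applied on the other side of $P_j$ shows for $\OSNodeReceiver$ that $(O_{j+1},\dots,O_{n+1})$ depends only on the suffix $(P_{j+1},\dots,P_{n+1})$ and on the message $m$, both of which are preserved in $\bar{\mathcal P}$; in both cases the two branches of the challenge become indistinguishable and the adversary's advantage is bounded by a sum of the AE, DDH and PRF distinguishing advantages.

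The main obstacle I expect is handling the Step~7 oracle cleanly during the game hops. A modified onion may reuse header blocks or payload blocks from the challenge while altering others, and the reduction must rule out that such splicing reveals a bit of the challenge. The cleanest route I see is to argue, using authenticated encryption together with the unique per-onion blinding of the DH element, that any oracle query which is not identical to the challenge on $P_j$'s layer either fails the MAC check (so the oracle returns $\bot$) or yields output that is statistically independent of the challenge randomness. Once this invariant is established and maintained across the hops, the remaining steps are routine hybrid arguments and the theorem follows.
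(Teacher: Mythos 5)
Your plan matches the paper's proof in all essentials: correctness is inherited from the original Sphinx analysis, and both security properties are shown by a sequence of game hops that (i) neutralize the post-challenge oracle via the unforgeability of the per-hop AE tag, (ii) replace the shared Diffie--Hellman secret and derived keys using DDH and the PRF/random-oracle modelling of the primitives, and (iii) conclude that the challenge layers are then distributed identically to the replacement onion's layers sharing the same subpath (and message, for $\OSNodeReceiver$). The only cosmetic difference is that the paper also explicitly discharges the pre-challenge oracle via independence of $\FormOnion$ calls and invokes the padding fix specifically to make the last-hop values random-looking in the $\OSNodeReceiver$ argument, both of which your "fresh randomness" reasoning subsumes.
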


As this implies that it realizes the ideal functionality, we can conclude that it achieves confidentiality ($\conf$) for honest senders with honest receivers, and sender ($\SML$) and relationship anonymity ($\PairSRL$) for honest senders with corrupted receivers.
This holds for a restricted adversary model, which does not allow timing attacks or attacks that lead to the dropping of onions. 
This limitation conforms to the adversary model of the original Sphinx, which is used in the adapted version as well.

\section{Discussion}\label{sec:discussion}
In this section, we relate our properties to known attacks and give further comments about the limitations of using them.
\subsection{Onion-Security Properties vs. Existing OR Attacks}
\label{sec:popularAttacks}


Our new properties prevent well-known attacks  on \ac{OR} if they comply to the adversary model of the ideal functionality. 
Passive \emph{linking attacks} e.g. based on length of the onion layer, or the length of the included message are prevented (attacks on $\OSSenderNode$ would otherwise be possible).
Additionally, our properties imply non-deterministic  encryption in $\FormOnion$, as the adversary could use $\FormOnion$ on its chosen parameters and compare the results, otherwise.

In \emph{tagging attacks} the attacker modifies an onion and recognizes it later based on the modification. 
To be useful, the tagging has to preserve some information of the original communication, e.g. a part of the path or the message. 
This translates to an attack on $\OSSenderNode$ that uses an oracle to learn the output of a  tagged challenge onion after processing at an honest relay, and deciding if it relates to the chosen input ($b=0$), or not.


\emph{Duplicate attacks} assume an adversary that is able to create an onion that equals an intercepted onion in parts of the input, e.g. the message, that can later be observed, but is not bit-identical.
Such onions appear different at the relays and hence may not be detected by duplicate protection.
They still jeopardize anonymity, as the adversary may notice their repeated delivery to the receiver.
Our properties protect from duplicate attacks, as an adversary that was able to create a duplicate onion breaks $\OSSenderNode$ by learning the message or path contained in the challenge onion by  using the oracle.

\emph{Replay attacks} (duplicate attacks with bit-identical onion) are possible in the ideal functionality and consequently not necessarily prevented.

The \emph{n-1 Attack}, where all but one onion is known to the adversary, and hence the remaining one can be traced, is possible in the ideal functionality and thus not mitigated by the properties.


\subsection{Adapting Our Properties} \label{sec:adaptionsProp}
There are cases, in which our properties need adaptation:

\emph{Correctness}: Due to practical reasons, space-efficient data structures like Bloom filters are frequently used for duplicate detection. 
Bloom filters exhibit false-positive detections (that is non-duplicate packets are detected as duplicates with a certain probability), but no false-negatives (duplicates are always detected). 
However, the false-positive probability of a Bloom filter depends on its configuration and is usually not negligible. 
This can be covered by extending our Onion-Correctness to \(\delta\)-Onion-Correctness,  thus accepting a correctness failure at a probability of at most \(\delta\). 

\iflong 
The false-positive probability of a Bloom filter depends on its configuration; that is its size \(m\), the number of hash functions \(k\) and the number of already stored inputs \(n\).
The false-positive probability \(p_{k,n,m}\) is:
\[p_{k,n,m}=\frac{1}{m^{k(n+1)}}\sum_{i=1}^{m}i^ki!\binom{m}{i}\stirling{kn}{i}\]

As a protocol is used, the duplicate store is growing, which means that for a Bloom filter the number of stored elements \(n\) grows.
It follows that the false-positive rate increases and thus the probability for a correctness failure.

We thus extend Onion-Correctness to \(\delta\)-Onion-Correctness such that the probability for a correctness failure is at most \(\delta\).
Practically, this can be achived by computing the number of maximum elements that can be stored for a given Bloom filter configuration \(m,k\) such that \(p_{k,n,m} \leq \delta\).
Once the maximum number is achieved, the system would need to switch keys to restart duplicate detection using an empty Bloom filter.

\begin{definition}($\delta$-OnionCorrectness)\label{def:deltaOnionCorrectness}
[as in Definition \ref{def:onionCorrectnessOrig}]...
the following is true: 
\begin{enumerate}[leftmargin=1.25cm]
\item  correct path:\\ \(Pr[\mathcal{P}(O_1, P_1) = (P_1, \ldots , P_{n+1} )] \geq 1\textcolor{blue}{-\delta}\),
\item correct layering:\\ \(Pr[\mathcal{L}(O_1 , P_1 ) = (O_1 , \ldots , O_{n+1})] \geq 1\textcolor{blue}{-\delta}\),
\item correct decryption:\\ \(Pr[(m, \perp) = \ProcOnion(SK(P_{n+1} ), O_{n+1}, P_{n+1})] \\\geq 1\textcolor{blue}{-\delta}\).
\end{enumerate}
\end{definition}
\fi

\emph{Security properties and Cascades}:
So far we assumed that the replacement onion is any onion that shares the observed part of the path. 
This naturally applies for free routing protocols, in which the sender randomly picks any path, and which is considered by the ideal functionality.
When analyzing \ac{OR} with fixed cascades, some adaptations are necessary.
Adaptation and changes in the analysis for the adapted ideal functionality, however, are straightforward: senders can only choose a cascade instead of a path.
This results in a different path choice in the adversary class and thus in a slightly different anonymity set.
In the game, the path of the replacement onion finally has to match the cascade of the challenge onion (this can be assured in Step 5 of both $\OSSenderNode$ and $\OSNodeReceiver$).
\subsection{Limitations}
As limitations of this paper, we recall the adversary model, the anonymity set, and discuss the limits inherited from  the ideal functionality.

\subsubsection{Adversary Model and Anonymity Set}
We fully assumed the adversary model of Camenisch and Lysyanskaya.
This adversary model does not allow for traffic analysis as timing information is removed and no delaying or dropping is allowed by the adversary. 
Although this adversary model does not seem very realistic, the analysis is useful to split the proof. 
Upon showing  the protocol's privacy for the restricted adversary model of the ideal functionality by proving the properties, only the privacy for the remaining attacks has to be shown.

We restrict the paths in the adversary class to include at least one honest relay to achieve the notions. This means that the anonymity set consists only of the users whose onions share an honest relay and are processed together. 

\subsubsection{Reply Channels and Sessions} \label{disc:session}

All systems that proved privacy with the properties consider a reply channel, for example to respond to an anonymous sender.
None, however, analyzes the backward phase separately. They only show indistinguishability to the forward onions (if at all), implying that the same security properties are used for the reply channel.
However, our analysis showed that the privacy goals except confidentiality ($\conf$) are only guaranteed for an honest sender. 
In a reply phase this sender is the original receiver, which cannot ultimately be considered honest. 
Thus, proving the properties does not guarantee the anonymity of the initial sender for a corrupted receiver in the reply phase. 

HORNET and TARANET additionally introduce sessions.
Their data transmission phase reuses the same path and header to efficiently send multiple onions. 
The ideal functionality does not cover sessions.
As for a corrupted relay it is always possible to link onions of the same session, neither the properties, nor ultimately the ideal functionality can be shown in this case.

Besides noticing this insufficiency, sending replies to the sender or using sessions is outside of the scope of this paper. 
We conjecture that both issues can be solved in future work by changing the ideal functionality and introducing additional properties. 
For this paper, we deemed it however more important to explain and correct all mistakes related to the simple sending with \ac{OR} in detail.

\subsection{Some Thoughts about Mix Networks}
Mix networks in addition to onion processing include reordering of onions (usually by delaying them for some time), to conceal timing information and prevent linking outgoing to incoming onions based on their order and timing. 
The ideal functionality, as well as both the original and our properties all do not consider timing attacks.
Although none of the widely deployed anonymization systems considers this, a real anonymous communication network of course should prevent linking based on timings. 
From the perspective of this work we consider this an extension, as all properties presented here need to be met by mix networks, as well, to prevent linking based on the onions and their processing at honest nodes.

\iflong 
\else
\subsection{Extended Version}
Our extended version of this paper~\cite{extendedVersion} contains technical details we excluded for this version due to space limitations. 
These comprise the technical proofs of the notions the ideal functionality does and does not achieve (including attacks for stronger notions), a scheme and the corresponding proofs illustrating the second insecurity (Section~\ref{sec:insecureCounterExample}) of the properties from~\cite{camenisch2005formal} and the proof that Wrap-Resistance and Onion-Integrity of~\cite{camenisch2005formal} do not need to be proven for privacy reasons.
\fi

\section{Conclusion and Future Work}\label{sec:summary}
Camenisch and Lysyanskaya have made a seminal attempt to formally analyze the predominant anonymization approach of \ac{OR} in \cite{camenisch2005formal}:
They design an ideal functionality for \ac{OR} in the UC model and suggest properties to analyze protocols and real-world systems.
A whole family of subsequent \ac{OR} schemes based their security analyses on this work.

Analyzing approaches from this family, we discovered a new, severe vulnerability and explained one that was known.
We presented a new attack to completely break sender and relationship anonymity in HORNET.
Further as known and corrected in the implementation, in Sphinx as in~\cite{danezis_sphinx:_2009} the anonymity set can be reduced by discovering the used path length. 

As these attacks contradict the proofs in the respective papers, we set out to formally analyze the used proof strategy proposed in~\cite{camenisch2005formal}.
First, we confirmed that the foundation of the proof, the ideal functionality, indeed guarantees privacy. 

Second, we explained the reason for the attack on Sphinx: the properties as originally suggested by Camenisch and Lysyanskaya are insufficient. To resolve this situation, we fixed one property, developed two new properties, and proved that achieving these three properties implies the privacy of the ideal functionality: sender anonymity and relationship anonymity against corrupted receivers in an adversary model that limits onion dropping and timing-based attacks.

Third, we explained the reason for the attack on HORNET: the original Onion-Security property would have prevented it, but has been proven incorrectly.
Proving a variation of Sphinx secure, we demonstrated how systems can be analyzed using our new properties.

We wish to point out that several of the published systems consider reply channels as well as sessions -- which indeed are not covered by the ideal functionality of \cite{camenisch2005formal}. Therefore, much is left to be done: while we repaired the anonymization for the simple delivery of a message from a sender to a receiver,
modeling reply channels and sessions is left for future work.
 Further, analyses and proofs for the security and privacy of other onion routing protocols beyond the variation of Sphinx need to be conducted, by using our or adapted properties. 
\section*{Acknowledgment}
We thank our shepherd Ian Goldberg and  the anonymous reviewers for their very valuable feedback. This work in part was funded by DFG  EXC 2050/1 -- ID 390696704.

\bibliographystyle{abbrv}
\bibliography{articles.bib}


%

\appendix
\iflong
\subsection{Adapted Ideal Functionality} \label{app:adaptIdeal}
\label{app:idealFunctionality}
The following description stems from~\cite{camenisch2005formal}. \textcolor{blue}{Adapted parts} are highlighted.

Let us define the ideal onion routing process. Let us assume that the adversary is static,
i.e., each player is either honest or corrupted from the beginning, and the trusted party implementing the
ideal process knows which parties are honest and which ones are corrupted.

\paragraph{Ideal Onion Routing Functionality: Data Structure}
\begin{itemize}
	\item The set $Bad$ of parties controlled by the adversary.
	\item  An onion $O$ is stored in the form of $(sid , P_s , P_r , m, n, P, i)$ where: $sid$ is the identifier, $P_s$ is the sender, $P_r$
is the recipient, $m$ is the message sent through the onion routers, $n < N$ is the length of the onion path,
$P = (P_{o_1} , . . . , P_{o_n} )$ is the path over which the message is sent (by convention,$ P_{o_0} = P_s$ , and $P_{o_{n+1}} = P_r$ ),
i indicates how much of the path the message has already traversed (initially, $i = 0$). An onion has reached
its destination when $i = n + 1$.
	\item  A list $L$ of onions that are being processed by the adversarial routers. Each entry of the list consists of
$(temp, O, j)$, where temp is the temporary id that the adversary needs to know to process the onion, while
$O = (sid , P_s, P_r , m, n, P, i)$ is the onion itself, and $j$ is the entry in $P$ where the onion should be sent next
(the adversary does not get to see $O$ and $j$). Remark: Note that entries are never removed from $L$. This
models the replay attack: the ideal adversary is allowed to resend an onion.
	\item  For each honest party $Pi$ , a buffer $Bi$ of onions that are currently being held by $Pi$ . Each entry consists of
$(temp' , O)$, where temp is the temporary id that an honest party needs to know to process the onion and
$O = (sid , Ps , Pr , m, n, P, i)$ is the onion itself (the honest party does not get to see $O$). Entries from this
buffer are removed if an honest party tells the functionality that she wants to send an onion to the next
party.
\end{itemize}

\paragraph{Ideal Onion Routing Functionality: Instructions}
The ideal process is activated by a message from router $P$ ,
from the adversary $S$, or from itself. There are four types of messages, as follows:

\inlineheading{$(Process\_New\_Onion, P_r, m, n, \mathcal{P})$}
Upon receiving such a message from $P_s$, where $m \in \{0, 1\} \cup \{\perp\}$, do:
\begin {enumerate}
	\item If $|\mathcal{P}| \geq N$ , reject.
	\item Otherwise, create a new session id $sid$ \textcolor{blue}{randomly} , and let $O = (sid , \textcolor{blue}{P_s}, P_r , m, n, \mathcal{P}, 0)$.
	\item \textcolor{blue}{ If $P_s$ is corrupted, send ``$\mathrm{start}$ belongs to onion from $P_s$ with $sid, P_r,m,n,\mathcal{P}$'' to the adversary $S$.}
Send itself message $(Process\_Next\_Step, O)$.
\end{enumerate}

\inlineheading{$(Process\_Next\_Step, O)$}
This is the core of the ideal protocol. Suppose $O = (sid , P_s , P_r , m, n, P, i)$. The ideal
functionality looks at the next part of the path. The router $P_{o_i}$ just processed\footnote{In case $i=0$, processed means having originated the onion and submitted it to the ideal process.} the onion and now it is being
passed to $P_{o_{i+1}}$ . Corresponding to which routers are honest, and which ones are adversarial, there are two
possibilities for the next part of the path:

\textbf{I) Honest next}
Suppose that the next node, $P_{o_{i+1}}$ , is honest. Here, the ideal functionality makes up a
random temporary id temp for this onion and sends to $S$ (recall that S controls the network so it decides
which messages get delivered): “Onion $temp$ from $P_{o_i}$ to $P_{o_{i+1}}$ .”\textcolor{blue}{If $P_s$ is corrupted it further adds ``$temp$ belongs to onion from $P_s$ with $sid, P_r,m,n,\mathcal{P}$'' to the message for $S$.} It adds the entry $(temp, O, i + 1)$ to list $L$.
(See $(Deliver\_Message, temp)$ for what happens next.)

\textbf{II) Adversary next.} Suppose that $P_{o_{i+1}}$ is adversarial. Then there are two cases:
\begin{itemize}
	\item There is an honest router remaining on the path to the recipient. Let $P_{o_j}$ be the next honest router.
(I.e., $j > i$ is the smallest integer such that $P_{o_j}$ is honest.) In this case, the ideal functionality creates a random temporary id $temp$ for this onion, and sends the message ``Onion temp from $P_{o_i}$ , routed through
$(P_{o_{i+1}} , . . . , P_{o_{j-1}} )$ to $P_{o_j}$'' to the ideal adversary $\mathcal{S}$, and stores $(temp, O, j)$ on the list $L$. \textcolor{blue}{If $P_s$ is corrupted it further adds ``$temp$ belongs to onion from $P_s$ with $sid, P_r,m,n,\mathcal{P}$'' to the message for $S$.}
	\item $ P_{o_i}$ is the last honest router on the path; in particular, this means that $P_r$ is adversarial as well. In that
case, the ideal functionality sends the message ``Onion from $P_{o_i}$ with message $m$ for $P_r$ routed through
$(P_{o_{i+1}} , . . . , P_{o_n} )$'' to the adversary $\mathcal{S}$. \textcolor{blue}{If $P_s$ is corrupted it further adds ``$\mathrm{end}$ belongs to onion from $P_s$ with $sid, P_r,m,n,\mathcal{P}$'' to the message for $S$.} (Note that if $P_{o_{i+1}}$ = $P_r$ , the list $(P_{o_{i+1}} , . . . , P_{o_n} )$ will be empty.)
\end{itemize}

\inlineheading{ ($Deliver\_Message, temp$)}
This is a message that $\mathcal{S}$ sends to the ideal process to notify it that it agrees that  
the onion with temporary id $temp$ should be delivered to its current destination. To process this message,
the functionality checks if the temporary identifier $temp$ corresponds to any onion $O$ on the list $L$. If it does,
it retrieves the corresponding record $(temp, O, j)$ and updates the onion: if $O = (sid , P_s , P_r , m, n, \mathcal{P}, i)$,replaces $i$ with $j$ to indicate that we have reached the $j$’th router on the path of this onion. If $j < n + 1$,generates a temporary identifier $temp'$  , sends “Onion $temp'$ received” to party $P_{o_j}$ , and stores the resulting
pair $(temp' , O = (sid , P_s , P_r , m, n, \mathcal{P}, j)$ in the buffer $B_{o_j}$ of party $P_{o_j}$ . Otherwise, $j = n + 1$, so the onion
has reached its destination: if $m \neq \perp$ it sends “Message $m$ received” to router $P_r$ ; otherwise it does not
deliver anything\footnote{ This is needed to account for the fact that the adversary inserts onions into the network that at some point do not decrypt correctly.} .

\inlineheading{$(Forward\_Onion, temp' )$}
This is a message from an honest ideal router $P_i$ notifying the ideal process that it is ready to send the onion with id $temp'$  to the next hop. In response, the ideal functionality
 \begin{itemize}
 	\item Checks if the temporary identifier $temp'$ corresponds to any entry in $B_i$ . If it does, it retrieves the corresponding record $(temp' , O)$.
 	\item Sends itself the message $(Process\_Next\_Step, O)$.
 	\item  Removes $(temp' , O)$ from $B_i$ .
 \end{itemize}

This concludes the description of the ideal functionality. We must now explain how the ideal honest
routers work. When an honest router receives a message of the form “Onion $temp'$ received” from the ideal
functionality, it notifies environment $\mathcal{Z}$ about it and awaits instructions for when to forward the onion $temp'$
to its next destination. When instructed by $\mathcal{Z}$, it sends the message “Forward Onion $temp'$” to the ideal
functionality.

\fi

\iflong
\subsection{Analysis of $\mathcal{F}$}
\subsubsection{Formal Definitions} \label{app:formalHierarchy}
\paragraph{Game}
The model uses $r=(u,u',m,aux)$ to denote the communication of message $m$ from sender $u$ to receiver $u'$ with auxiliary information $aux$. Communications that are processed together are grouped in batches $\underline{r}$. For the game the adversary has to decide on two scenarios. Those are a sequence of pairs of batches. The challenger verifies for every pair of  batches $\underline{r}_0, \underline{r}_1$  that they comply to the currently analyzed privacy notion, i.e. differ only in private information. If the pair of batches is valid regarding the notion, the challenger picks a random $b$ and simulates the protocol for the corresponding batch $\underline{r}_b$. Then the adversary can issue more valid batches and finally has to make a guess $g$ for $b$. If any adversary cannot achieve a better probability of guessing $g=b$ correctly as negligibly bigger than $\frac{1}{2}$, the notion is achieved as nothing about the private information can be learned. 

\paragraph{Important Notions}
 
 To keep the formal definitions of the notions short, we always consider the following:
 Let the checked batches be $\underline{r_0},\underline{r_1}$,  which  for $b \in \{0,1\}$ include the communications \mbox{${r_b}_j = (u_{b_j},u'_{b_j},m_{b_j},aux_{b_j})$} with $ j \in \{1, \dots l\} $. 

 For $\conf$ two batches may only  differ in the messages:
\begin{definition}[$\conf$ i.a. w. \cite{ownFramework}]
The batches are valid for $\conf$, iff for all $j \in \{1, \dots l\}$:  ${r_1}_j =(u_{0_j},u'_{0_j},\mathbf{m_{1_j}},aux_{0_j}) $. 
\end{definition}

For $\SML$ only the senders may differ and further each sender has to send the same number of messages in the two batches. To define this, \cite{ownFramework} formally defines $Q_b$. Here we use a less formal description: \(Q_b:=\{(u,n)\mid u \text{ sends } n \text{ messages in } \underline{r}_b\}\).
\begin{definition}[$\SML$ i.a. w. \cite{ownFramework}]
The batches are valid for $\SML$, iff for all $j \in \{1, \dots l\}:{r_1}_j =(\mathbf{u_{1_j}},u'_{0_j},m_{0_j},aux_{0_j})$ and $Q_0=Q_1$.
\end{definition}

$\RML$ is analogous to $\SML$, but for receivers:  \(Q_b':=\{(u',n)\mid u' \text{ receives } n \text{ messages in } \underline{r}_b\}\).
\begin{definition}[$\RML$ i.a. w. \cite{ownFramework}]
The batches are valid for $\RML$, iff for all $j \in \{1, \dots l\}:{r_1}_j =(u_{0_j},\mathbf{u'_{1_j}},m_{0_j},aux_{0_j})$ and $Q_0'=Q_1'$.
\end{definition}

\begin{figure}[tb]
  \centering
  \includegraphics[width=0.17\textwidth]{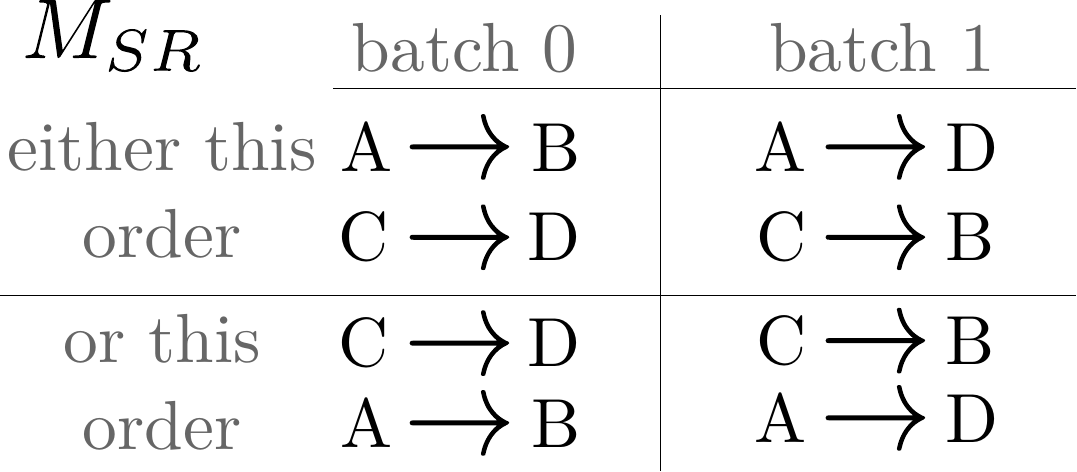}
  \caption{Batches in $M_{SR}$ illustrated}
  \label{fig:MSR}
\end{figure}

$\PairSRL$ allows only sender and receiver to differ and has the complex requirement $M_{SR}$. $M_{SR}$ is defined formally in~\cite{ownFramework} and requires that two senders and receivers in both batches are mixed in the following way: The batches only differ in two senders ($A,C$) and two receivers ($B, D$). In the case $b=0$: $A$ must communicate with $B$, and $C$ with $D$; in the case $b=1$: $A$ with $D$, and $C$ with $B$. The order of those two communications in the batch is chosen randomly by the challenger. Before, between and after those communications multiple communications that are equal in both batches can occur. The possible communications are depicted in Fig. \ref{fig:MSR}.
\begin{definition}[$\PairSRL$ i.a. w. \cite{ownFramework}]
The batches are valid for $\PairSRL$, iff for all $j \in \{1, \dots l\}:{r_1}_j =(\mathbf{u_{1_j}},\mathbf{u'_{1_j}},m_{0_j},aux_{0_j})$ and $M_{SR}$.
\end{definition}

\paragraph{Corruption}

Corruption is realized with special corrupt-queries. They return internal information of the corrupted user (keys, current state etc.). $X_{c^0}$ ensures that the adversary is not allowed to send any corrupt-query. The other corruption options add requirements for the two batches of the adversary to be considered valid.
\begin{definition}[Corruption]\label{def:corruption}
Let $\hat{U}$ be the set of all users corrupted via corrupt-queries.
The following corruption options are met, iff for all $a \in \{0,1\}$:
{\footnotesize
\begin{align*}
  \corrOnlyPartnerSender{X}&: \forall (u,u',m,aux) \in \underline{r}_0\cup \underline{r}_1: u \not \in \hat{U} \\
  \corrStandard{X}&: \forall \hat{u} \in \hat{U}:r_{0_i}=(\hat{u},\_,m,\_) \implies r_{1_i}=(\hat{u},\_,m,\_)\\
   &\quad \quad \quad \quad \land r_{0_i}=(\_,\hat{u},m,\_) \implies r_{1_i}=(\_,\hat{u},m,\_) \\
\end{align*}
}
\end{definition}

\else
\subsection{Definition Privacy Notions} \label{app:notionsShort}
\subsubsection{Game}
The model uses $r=(u,u',m,aux)$ to denote the communication of message $m$ from sender $u$ to receiver $u'$ with auxiliary information $aux$. Communications that are processed together are grouped in batches $\underline{r}$. The adversary decides on two scenarios. Those are a sequence of pairs of batches. The challenger verifies  every pair of  batches $\underline{r}_0, \underline{r}_1$  regarding the analyzed privacy notion, i.e. they differ only in private information. If the check succeeds, the challenger picks a random $b$ and simulates the protocol for the corresponding batch $\underline{r}_b$. The adversary can issue more batches and finally makes a guess $g$ for $b$. If the adversary cannot  guess $g=b$ correctly with a more than negligibly better probability than  $\frac{1}{2}$, the notion is achieved as nothing private can be learned. 

\subsubsection{Important Notions}
 
 We always consider the checked batches $\underline{r_0},\underline{r_1}$, which  for $b \in \{0,1\}$ include the communications \mbox{${r_b}_j = (u_{b_j},u'_{b_j},m_{b_j},aux_{b_j})$} with $ j \in \{1, \dots l\} $. 

 For $\conf$ two batches may only  differ in the messages:
\begin{definition}[$\conf$ i. a. w. \cite{ownFramework}]
The batches are valid for $\conf$, iff for all $j \in \{1, \dots l\}$:  ${r_1}_j =(u_{0_j},u'_{0_j},\mathbf{m_{1_j}},aux_{0_j}) $. 
\end{definition}

For $\SML$ only the senders may differ and further each sender has to send the same number of messages in the two batches. To define this, \cite{ownFramework} formally defines $Q_b$. Here we use a less formal description: \(Q_b:=\{(u,n)\mid u \text{ sends } n \text{ messages in } \underline{r}_b\}\).
\begin{definition}[$\SML$ i. a. w. \cite{ownFramework}]
The batches are valid for $\SML$, iff for all $j \in \{1, \dots l\}:{r_1}_j =(\mathbf{u_{1_j}},u'_{0_j},m_{0_j},aux_{0_j})$ and $Q_0=Q_1$.
\end{definition}

$\RML$ is similar, but for receivers:  \(Q_b':=\{(u',n)\mid u' \text{ receives } n \text{ messages in } \underline{r}_b\}\).
\begin{definition}[$\RML$ i. a. w. \cite{ownFramework}]
The batches are valid for $\RML$, iff for all $j \in \{1, \dots l\}:{r_1}_j =(u_{0_j},\mathbf{u'_{1_j}},m_{0_j},aux_{0_j})$ and $Q_0'=Q_1'$.
\end{definition}

\begin{figure}[tbh]
  \centering
  \includegraphics[width=0.17\textwidth]{Images/MSR.pdf}
  \caption{Batches in $M_{SR}$ illustrated}
  \label{fig:MSR}
\end{figure}

$\PairSRL$ allows only sender and receiver to differ and has the complex requirement $M_{SR}$. $M_{SR}$ requires that the batches only differ in two senders ($A,C$) and two receivers ($B, D$). In the case $b=0$: $A$ must communicate with $B$, and $C$ with $D$; in the case $b=1$: $A$ with $D$, and $C$ with $B$. The order of the two communications in the batch is chosen randomly by the challenger. Before, between and after those communications multiple communications that are equal in both batches can occur. The possible communications are depicted in Fig. \ref{fig:MSR}.
\begin{definition}[$\PairSRL$ i.a. w. \cite{ownFramework}]
The batches are valid for $\PairSRL$, iff for all $j \in \{1, \dots l\}:{r_1}_j =(\mathbf{u_{1_j}},\mathbf{u'_{1_j}},m_{0_j},aux_{0_j})$ and $M_{SR}$.
\end{definition}

\subsubsection{Corruption}

User corruption is realized by returning internal information of the user (keys, current state etc.). $X_{c^0}$ ensures that the adversary is not allowed corruption. The other corruption options add requirements for the two batches:
\begin{definition}[Corruption]\label{def:corruption}
Let $\hat{U}$ be the set of all corrupted users.
The following options are met, iff:
{\footnotesize
\begin{align*}
  \corrOnlyPartnerSender{X}&: \forall (u,u',m,aux) \in \underline{r}_0\cup \underline{r}_1: u \not \in \hat{U} \\
  \corrStandard{X}&: \forall \hat{u} \in \hat{U}:r_{0_i}=(\hat{u},\_,m,\_) \implies r_{1_i}=(\hat{u},\_,m,\_)\\
   &\quad \quad \quad \quad \land r_{0_i}=(\_,\hat{u},m,\_) \implies r_{1_i}=(\_,\hat{u},m,\_) \\
\end{align*}
}
\end{definition}

\fi

\iflong  
\paragraph{More Notions}
Here we introduce the notions not achieved by $\mathcal{F}$ for some adversary models. We need them to prove that the shown achieved notions are indeed the strongest $\ac{OR}$, i.e. $\mathcal{F}$ achieves for this adversary model.

$\MOm$ (Message Unobservability leaking Message Length) is defined as $\conf$ except that also the length of the messages needs to be equal for the messages of both batches.

$\PairSML$ (Pair Sender-Message Unlinkability) is defined as $\PairSRL$ except that instead of the combination of two self-chosen senders and receivers, the combination of two self-chosen senders and messages needs to be guessed. $\PairRML$ is similar but for receivers and messages.

$\SRO$ (Sender-Receiver Unobservability) is defined similar to $\PairSRL$ except that instead of both communications ($b=0:$ A-B, C-D; $b=1:$ A-D, C-B), only one of the two self-chosen senders and one of the two self-chosen receivers is randomly chosen. The adversary has to decide whether one of the $b=0$ communications (A-B or C-D) or one of the $b=1$ communications (A-D or C-B) was simulated. 
$\SMO$ is similar, but for senders and messages. $\RMO$ is similar, but for receivers and messages.

$\PairSL$ (Twice Sender Unlinkability) allows batches that only differ in the senders of two messages. For $b=0$ those two messages are sent by the same sender, for $b=1$ from different senders. Thus, the adversary has to be able decide that two messages are from the same sender to break it. 
$\PairRL$ is similar, but for receivers.

$\RMLP$ allows batches to only differ in the receivers and requires that the messages partitioned into the sets that are received by the same receiver are equal in both batches. For example, if there is a certain receiver $B$ in the batch 0 that receives $m_1,m_3, m_{17}$, then there has to be a receiver $B'$, e.g. $D$, in batch 1 that receives the same set of messages $m_1, m_3, m_{17}$.

$\SFLP$ like $\SML$ allows the batches to differ only in the senders. However, it allows the number of times a sender sends to differ in the batches as well. Only the messages partitioned into the sets that are send by the same sender have to be equal in both batches. For example, $A$ might send only $m_1$ in batch 0, but $m_2,m_3,m_5$ in batch 1, as long as another user $A'$ sends only  $m_1$ in batch 1 and another user than A, e.g. $A''$, sends only $m_2,m_3,m_5$ in batch 0.

 \label{app:notionDef}


\subsubsection{Ideal Functionality against Restricted Adversary}
\begin{lemma}
$\originalIdeal$ achieves $\corrStandard{\conf}$ for $\TAAdvClass$.
\end{lemma}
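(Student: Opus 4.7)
\begin{proofsketch}
The plan is to show that the view of any PPT adversary interacting with $\mathcal{F}$ through the adversary class $\TAAdvClass$ is identically distributed in the two scenarios presented in the $\corrStandard{\conf}$ game. Since both scenarios by definition agree on senders, receivers, paths and auxiliary information and differ at most in messages, I only need to argue that the messages are either equal across the two scenarios whenever $\mathcal{F}$ would reveal them to the adversary, or else never revealed at all.

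First I would enumerate the places in Algorithm~\ref{algo} where any information about a message $m$ reaches the adversary $\mathcal{S}$. Inspecting the pseudocode, the message $m$ appears in the adversary's view only in two events: (i) inside \texttt{Output\_Corrupt\_Sender}, which is triggered exactly when $P_s \in \mathrm{Bad}$; and (ii) in \texttt{Process\_Next\_Step}, in the branch taken when $P_{o_j}\in\mathrm{Bad}$ for all $j>i$, which in particular requires the receiver $P_r$ to be corrupted. In every other branch the only information sent to $\mathcal{S}$ consists of freshly drawn temporary IDs together with path segments among routers, neither of which depends on $m$. Consequently, $\mathcal{F}$ leaks information about the plaintext of a communication $(u,u',m,\mathit{aux})$ only when $u\in\hat U$ or $u'\in\hat U$.

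Next I would invoke the corruption constraint of $\corrStandard{\conf}$ from Definition~\ref{def:corruption}: whenever a corrupted user appears as the sender or receiver in one scenario, the communication (including the message) must be replicated identically in the other scenario. Combined with the $\conf$ requirement that batches differ only in messages, this means that the communications in which the two scenarios differ are precisely those whose sender and receiver are both honest; and, by the previous paragraph, $\mathcal{F}$ never outputs the message of such communications to $\mathcal{S}$. Hence the two scenarios induce the same marginal distribution over the message-dependent outputs.

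Finally I would argue that the rest of the view is independent of $b$. Temporary IDs in $\mathcal{F}$ are drawn uniformly at random and independently for each hop, so their joint distribution is identical across scenarios. The adversary class $\TAAdvClass$ shuffles the per-batch outputs before delivery and drops duplicate \texttt{Deliver\_Message} requests, so no order- or timing-based side channel is available either; path replacement by $\TAAdvClass$ is performed identically in both scenarios because senders, receivers, and auxiliary data coincide. Putting the two parts together yields that $\mathcal{A}$'s view is statistically independent of the challenge bit $b$, so the advantage is $0$ and $\mathcal{F}$ achieves $\corrStandard{\conf}$ under $\TAAdvClass$. I expect the only subtle step to be verifying that the random choices of temporary IDs and of replacement paths in $\TAAdvClass$ can indeed be coupled identically across the two scenarios; this is immediate because both choices depend only on the non-message components that $\corrStandard{\conf}$ forces to be equal.
\end{proofsketch}
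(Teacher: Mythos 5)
Your proposal is correct and follows essentially the same route as the paper's proof: a case analysis of the functionality's outputs showing that the message only ever reaches the adversary for corrupted senders or corrupted receivers, where the $\corrStandard{\conf}$ constraint forces it to be identical in both scenarios, while all remaining outputs (random temporary IDs, path segments, and delivery timing as neutralized by $\TAAdvClass$) are message-independent. The paper merely organizes the same argument by iterating over the four message types of Algorithm~\ref{algo} rather than by first isolating the leakage points.
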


\begin{proof}
We go through the messages of the ideal functionality $\originalIdeal$ and check whether they help to distinguish two scenarios differing only in the messages of honest users.

\emph{Process\_New\_Onion:} Does only output information to the adversary for corrupted senders, which will be equal ($m, \mathcal{P}$ etc.) or randomly generated ($sid$) in both scenarios because of $\corrStandard{X}$.
  
\emph{Process\_Next\_Step:} Information output for corrupted senders is equal or random because of $\corrStandard{X}$. Hence, we can focus on honest senders.
As corrupted receivers receive the same messages and everything else is equal in both scenarios, the adversary gets identical output for corrupted receivers. For honest receivers the adversary only gets messages ``Onion temp from $P_{o_i}$ routed through ($P_{o_{i+1}}, \dots, P_{o_{j-1}}$ to $ P_{o_j}$)'' or  ``Onion temp from $P_{o_i}$ to $P_{o_{i+1}}$''. Since everything except the messages is equal in both scenarios, the path is equal in both scenarios and does not help the adversary distinguish. Further $temp$ is generated randomly  by $\originalIdeal$  and hence does not help the adversary distinguish.

\emph{Deliver\_Message:} Because of the adversary class $\mathcal{C}$ the attacker cannot exploit sending such messages.

\emph{Forward\_Onion:}  Is a message between honest routers and the ideal functionality. Hence, the adversary cannot influence it or get information from it.
\end{proof}

\begin{lemma}\label{lemma:SML}
$\originalIdeal$ achieves  $\corrOnlyPartnerSender\SML$ for  $\TAAdvClass$.
\end{lemma}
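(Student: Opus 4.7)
The plan is to follow the template of the preceding proof for $\corrStandard{\conf}$: walk through the outward-facing messages of $\originalIdeal$ and argue that under $\corrOnlyPartnerSender$-corruption and the adversary class $\TAAdvClass$ none of them distinguishes the two $\SML$-valid batches. The corruption restriction helps substantially, because every sender is honest, so the Process\_New\_Onion branch that leaks $(sid, P_r, m, n, \mathcal{P})$ to $\mathcal{S}$ is never triggered. Deliver\_Message and Forward\_Onion can be handled as in the previous lemma: the former is constrained by $\TAAdvClass$ (no delaying, no duplicates), the latter is an internal handshake between honest routers.

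The substantive work lies in Process\_Next\_Step, and I would split on which branch fires. The ``all-remaining-corrupt'' branch reveals $m$, but by Assumption~\ref{as:honestOnPath}, which $\TAAdvClass$ enforces, it never fires at the sender's position $i=0$; it can only fire after the last honest relay, at which point the ``from'' field names that relay rather than the sender. The ``honest next'' branch outputs a fresh random $temp$, the preceding node, the adversarial sub-path, and the next honest node. So the adversary's view of a single batch decomposes into (i) one ``leaving sender'' event per onion that reveals the sender's identity, (ii) a sequence of ``honest-to-honest'' events carrying fresh random $temp$s and adversarial path segments, and (iii) at most one ``delivered message'' event per onion that reveals $m$ and $P_r$.

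The main obstacle, and the step I expect to occupy most of the proof, is showing that the joint distribution of (i) and (iii) coincides across the two batches, even though $\SML$ leaves a per-index coupling between sender and message in place. The multisets of labels trivially match: $Q_0 = Q_1$ equates the multiset of sender labels emitted by (i)-events, and the shape ${r_1}_j = (u_{1_j}, u'_{0_j}, m_{0_j}, aux_{0_j})$ equates the multiset of $(m, P_r)$ pairs emitted by (iii)-events. What remains is to rule out any correlation by which the adversary could test whether a fixed sender is linked to a fixed revealed message. I would argue this from features already built into the model: every $temp$ is drawn uniformly and independently by $\originalIdeal$; at each honest hop the $temp$ is refreshed, severing any chain back to the sender; the common honest relay promised by $\TAAdvClass$ forces all onions of the batch to meet at a shared mixing point; and $\TAAdvClass$ additionally shuffles simultaneously-processed outputs and blocks the reordering and duplication tricks that could otherwise restore timing information. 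These facts together let me exhibit an explicit relabelling of the fresh $temp$-chains that turns a simulation of batch~$0$ into an identically distributed simulation of batch~$1$, which yields the lemma.
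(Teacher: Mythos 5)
Your proposal is correct and follows essentially the same route as the paper's proof: eliminate corrupted-sender outputs via $\corrOnlyPartnerSender{X}$, use $\TAAdvClass$ to neutralize Deliver\_Message, reduce the view to path segments before and after the common honest relay plus the $(m,P_r)$ disclosure at a corrupted receiver, and argue these segments cannot be recombined because temps are fresh, outputs are shuffled, and every cross-combination of segments is an admissible path. Your decomposition into event types and the closing relabelling of temp-chains is a somewhat more structured phrasing of the paper's ``no path combination can be excluded, so the adversary can only guess'' conclusion, but it is the same argument.
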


\begin{proof}
$\corrOnlyPartnerSender{X}$ excludes corrupted senders and hence, we can ignore outputs that happen for corrupted senders. $\TAAdvClass$ forbids the misuse of ``Deliver\_Message'' and hence all onions went through the network by honest processing of the routers and no onions can be replayed or delayed.

Then the ideal functionality only outputs every part on the path between honest routers once, and if the receiver is corrupted the message once, for all communications that the adversary picked for the chosen scenario.  $\TAAdvClass$ also guarantees that all paths share a common honest router. Let $\mathcal{P}_{to honest}$  be the set of  paths that lead to the honest router and $\mathcal{P}_{from honest}$ the set with paths that start from the honest router. Since $\TAAdvClass$ chooses at least one honest router, such that the maximum path length is met, i.e. any of the possible path combinations $\mathcal{P}_{to honest} \times \mathcal{P}_{from honest}$ are shorter than $N$. Because of $\TAAdvClass$, the outputs will additionally be in mixed order and not linkable because of the order in which she observes them.  No path combination can be excluded by the adversary, as all are valid paths,  and hence she has no information that helps her deciding on the total path. Further, she only learns which receiver receives which message. Since, this is the only information she has and she cannot exclude any path, she cannot do better than  to randomly guess the sender-receiver and hence sender-message pairs.  
\end{proof}

\begin{lemma}
$\originalIdeal$ achieves  $\noCorr\RML$ for  $\TAAdvClass$.
\end{lemma}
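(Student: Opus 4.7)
The plan is to proceed analogously to the preceding Lemma for $\corrOnlyPartnerSender\SML$, walking through the messages that $\originalIdeal$ outputs to the adversary for two $\RML$-valid batches and arguing indistinguishability under $\TAAdvClass$. First I would observe that $\noCorr{X}$ excludes all user corruption. In particular, since every receiver is honest, the delivery notification ``Message $m$ received'' is sent to the honest receiver and never to $\mathcal{S}$; likewise the ``all remaining routers corrupted'' branch of Process\_Next\_Step, which would otherwise leak $m$ together with the tail of the path, is never triggered (the receiver being honest always leaves at least one honest successor). Consequently the adversary's view consists only of random temporary IDs on path segments between honest routers, plus the identities of corrupted relays and of the two endpoints of each such segment.

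Next I would use $\TAAdvClass$ to restrict the adversary to non-timing, non-replay, non-dropping behavior, so that every onion is processed in order and each honest relay's outputs are shuffled before being handed to the adversary. Combined with the fact that the temporary IDs are freshly and independently drawn at every honest hop, this prevents linking an incoming ID at an honest relay to its outgoing counterpart. Because $\TAAdvClass$ additionally guarantees at least one honest relay on every path from an honest sender, every communication contributes at least one such ``cut'' to the adversary's view.

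I would then invoke the $\RML$ validity condition, which requires the two batches to agree on senders, messages and auxiliary information (up to the path rewriting performed by $\TAAdvClass$) and on the per-receiver multiplicity $Q_0'=Q_1'$, differing only in which receiver gets which message. Since the adversary never observes any message content, and since the per-receiver counts of observed arrival events coincide across the two scenarios, the two induced distributions on her view are identical up to the relabeling of which (unobserved) message each receiver gets. Hence she gains no more than a negligible advantage over $\tfrac{1}{2}$.

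The step I expect to be the main obstacle is the careful bookkeeping around the final path segment ending at each honest receiver: one must argue that observing ``Onion $\mathit{temp}'$ from $P_\mathrm{prev}$ to $P_r$'' for each onion does not accidentally bind a particular sender-side observation to a particular receiver. This is handled exactly as in the $\SML$ case: the honest cut in the path together with the $\TAAdvClass$ shuffling decorrelate the fresh post-honest-hop IDs from the pre-honest-hop IDs, so the multiset of (receiver, arrival-count) observations is identical across the two batches by $Q_0'=Q_1'$, and no further message-to-receiver linkage can be extracted. Any remaining strictly stronger notion (e.g., one in which the sender may be corrupted or in which message-length information is usable) must be excluded separately, as it is not covered by $\noCorr{\RML}$.
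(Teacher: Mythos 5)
Your proposal is correct and follows essentially the same route as the paper's proof: under $\noCorr{X}$ no user is corrupted, so the message is never output to the adversary, and the only remaining distinguishing strategy---exploiting her knowledge of which sender sent which message by linking sender-side and receiver-side path segments---is ruled out by the same honest-relay/fresh-ID/shuffling argument used for $\corrOnlyPartnerSender{\SML}$. Your version merely spells out in more detail (the never-triggered corrupted-tail branch, the role of $Q_0'=Q_1'$) what the paper compresses into two sentences.
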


\begin{proof}
$\noCorr{X}$ excludes that the adversary learns different receiver-message combinations as outputs of $\mathcal{F}$ as the message is never output in this case. The only other option to distinguish the scenarios is to exploit that the adversary knows which message is sent by which sender. However, as argued in the proof of Lemma \ref{lemma:SML}, it is not possible to link the  parts of the path for  $\TAAdvClass$.
\end{proof}

\begin{lemma}
$\originalIdeal$ does not achieve any notion (of the hierarchy in \cite{ownFramework}) not implied by $\corrStandard\conf$, $\corrOnlyPartnerSender\SML$  or $\noCorr\RML$ for $\TAAdvClass$.
\end{lemma}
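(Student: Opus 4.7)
The plan is to rule out, by explicit distinguishing attacks, every notion of the hierarchy of~\cite{ownFramework} that is not implied by $\corrStandard{\conf}$, $\corrOnlyPartnerSender{\SML}$, or $\noCorr{\RML}$. Since implication in the hierarchy is transitive, it suffices to attack a finite set of ``just-above'' notions and let implication close the rest.

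First I would pin down precisely what $\originalIdeal$ still reveals to the environment even after filtering through $\TAAdvClass$: (a) for every honest sender, the number of onions it originates in a processing round, because each Process\_New\_Onion triggers at least one adversary-visible Process\_Next\_Step output; (b) symmetrically, the number of onions delivered to each honest receiver; (c) the plaintext, and hence the length, of every message delivered to a corrupted receiver; and (d) at every corrupted sender, the full input tuple $(P_r,m,n,\mathcal{P})$.

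I would then instantiate these leaks as concrete attacks. The length leak in~(c) rules out $\MOm$: pick two batches differing only in a single message to a corrupted receiver, where the candidate messages have different lengths. The same leak, read as identity rather than as length, rules out $\corrOnlyPartnerSender{\RML}$: corrupt a receiver and read off which of the two candidate message-to-receiver assignments was simulated. Leaks~(a) and~(b) dispose of the linkability and unobservability notions whose validity predicates permit different per-user activity counts, in particular $\SFLP$, $\RMLP$, $\PairSL$, $\PairRL$, $\SMO$, $\RMO$, and $\SRO$: for each, construct a scenario pair in which the number of onions originated or received by some honest user differs between $b=0$ and $b=1$, and observe that count through the adversary-visible outputs. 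Finally, leak~(d) takes care of any residual corruption-inflated variant by placing the distinguishing communication at a corrupted sender, within the bounds allowed by the $\corrStandard{\cdot}$ side condition of Definition~\ref{def:corruption}.

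The main obstacle is the bookkeeping: for each notion targeted, the attack scenario must simultaneously satisfy the notion's matching condition on unprotected fields, the corruption constraint of Definition~\ref{def:corruption}, and the requirement that the distinguishing signal survives the shuffling of $\TAAdvClass$. The subtlest step concerns the count-based attacks, because $\TAAdvClass$ permutes outputs of communications processed together; here I would route the challenge onions through disjoint honest first hops, so the asymmetric activity pattern shows up as a per-sender or per-receiver discrepancy across processing rounds that no intra-round shuffle can mask.
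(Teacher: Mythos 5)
Your overall strategy—attack only the minimal notions not implied by the three achieved ones and let the hierarchy close the rest—is exactly the paper's, and your count-based attacks on $\SFLP$, $\PairSL$, $\PairRL$ and your use of the corrupted-sender/receiver leaks match the paper's attacks. However, two of your concrete attacks would fail as stated. First, the attack on $\corrOnlyPartnerSender{\MOm}$ via message length is invalid: $\MOm$ is by definition $\conf$ \emph{with the added requirement that the messages in the two batches have equal length}, so a scenario pair with different-length messages is rejected by the challenger. The leak you need is the one you already listed, namely that the corrupted receiver sees the plaintext itself; the paper's attack simply reads off which of two equal-length messages was delivered.

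Second, the count-based treatment of the unobservability notions $\SRO$ and $\SMO$ does not work. In $\SRO$ the challenger picks, uniformly, \emph{one} of A--B / C--D (for $b=0$) or one of A--D / C--B (for $b=1$); hence the marginal distribution of "who sends" and "who receives" is the same fair coin in both scenarios, and no per-user activity count can distinguish $b$ — only the \emph{joint} sender-receiver association differs. The same holds for the sender-message association in $\SMO$. The paper's attack instead exploits that the challenge communication can be made the only onion in the network: the adversary issues $Deliver\_Message$ hop by hop and traces the single $temp$ chain from the observed sender to the observed receiver, so the $\TAAdvClass$ shuffle has nothing to hide it among; for $\SMO$ the receiver is additionally corrupted to obtain the message. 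Your "disjoint honest first hops" device addresses shuffling for the count attacks but is irrelevant here. A smaller bookkeeping issue: breaking $\corrOnlyPartnerSender{\RML}$ does not propagate to the strictly weaker receiver notions ($\PairRML$, $\RMLP$, $\RMO$), so those must each be attacked directly (the paper does so via the corrupted receiver's message output); make sure your final list really consists of the minimal non-implied notions.
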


\begin{proof}
We need to show, that $\mathcal{F}$ does not achieve any of the lowest notions in the hierarchy that are not already implied by $\corrStandard\conf$, $\corrOnlyPartnerSender\SML$  or $\noCorr\RML$ for  $\TAAdvClass$:  $\noCorr{\SRO}$, $\noCorr{\PairSL}$, $\noCorr{\PairRL}$, $\noCorr\SFLP$, $\noCorr\RFLP$. This implies that also no stronger notions can be achieved, even without user corruption.
Further, we show, that with differing behavior at corrupted receivers allowed it does not achieve $\corrOnlyPartnerSender{\SMO}$,$\corrOnlyPartnerSender{\PairRML}$,$\corrOnlyPartnerSender{\RMO}$,$\corrOnlyPartnerSender\MOm$,$\corrOnlyPartnerSender\RMLP$,  $\corrStandard\PairSRL$. This implies that no $\corrOnlyPartnerSender{\conf}$ can be achieved.
Obviously, as for corrupted senders all information about the communication is sent to the adversary, no notion can be achieved against differing behavior at corrupted senders allowed.

\begin{figure}[htb]
  \centering
  \includegraphics[width=0.48\textwidth]{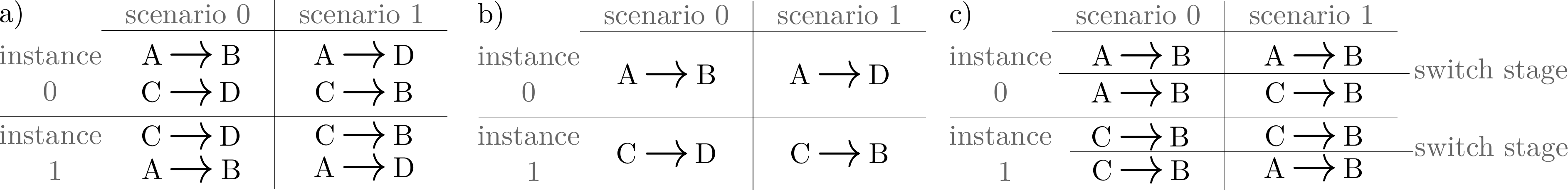}
  \caption{From \cite{ownFramework} Depicting example inputs for notions: a) $\PairSRL$, b)$\SRO$, c)$\PairSL$}
  \label{fig:senderReceiver}
\end{figure}

\paragraph{ $\noCorr\SRO$}
 The attack works as follows: We use the communication of users $A, B$, $C, D$ according to the definition of $\SRO$ (see Figure \ref{fig:senderReceiver} b).
 
 Now, the ideal functionality $\mathcal{F}$ will output ``Onion temp from $S$ to $X_1$'' with $S$ being $A$ or $C$. We will use $Deliver\_Message$ for $temp$ and continue getting messages ``Onion $\overline{temp}$ from $X_1$ to $X_2$'' and using $Deliver\_Message$ for $\overline{temp}$ until we get a message ``Onion temp from $X_1$ to $R$'' with $R$ being $B$ or $D$. We guess the scenario that includes the linking between $S$ and $R$.

\paragraph{ $\noCorr\PairSL$ ($\noCorr\PairRL$ analogous)}
We use senders $A$ and $C$ for the two instances of the two scenarios according to Figure \ref{fig:senderReceiver} c). In this case, we do not need to use $Deliver\_Message$ even once; we just wait for the first messages the ideal functionality sends in $Process\_Next\_Step$ ``Onion temp from $S_1$ to $X$'' and ``Onion temp from $S_2$ to $X$'', if those two senders are the same, we guess $g=0$, otherwise $g=1$.

\paragraph{ $\corrOnlyPartnerSender\SMO$}
Analogous to $\noCorr\SRO$, except that we exploit to pick a corrupted receiver and hence get the delivered message as output from the ideal functionality.

\paragraph{ $\corrOnlyPartnerSender\PairRML$,$\corrOnlyPartnerSender\RMO$, $\corrOnlyPartnerSender\MOm$, $\corrOnlyPartnerSender\RMLP$}
We are allowed to pick corrupted receivers, hence we do and get the receiver-message linking output from the ideal functionality in the  ``Onion from $x$ with message $m$ for $r$ routed through ...''- message after we used $Deliver\_Message$ whenever possible.
 
 \paragraph{$\corrStandard\PairSRL$}We choose a corrupted sender that sends to different receivers. Thus, we learn to which receiver the corrupted sender sends it message and hence learn the linking of the message and the receiver and win the game with certainty.

\paragraph{ $\noCorr\SFLP$($\noCorr\RFLP$ analogous)} We pick scenarios that differ in how often $A$ sends, e.g. $b=0$: $A$ sends once, $C$ $k$-times; $b=1:$ $A$ sends $k$-times, $C$ once. The ideal functionality will output the parts of the path of all communications. If $A$ occurs more often in those parts of the path, we guess $g=1$,  if $C$ occurs more often $g=0$, otherwise we guess randomly.

Thus, we win if $A$ (resp. $C$) is picked at most $k-2$ times as a random relay in $b=0$ (resp. $b=1$). This happens if it is not chosen as the common honest relay ($\frac{1}{\# honest relays}$) and not chosen more often randomly as relay ($(\frac{N-1}{\#P-1})^k+(\frac{N-1}{\#P-1})^{k-1}\cdot (\frac{1}{\#P-1})$). Thus, we win with probablity of at least $1-(\frac{1}{\# honest relays}+(\frac{N-1}{\#P-1})^k+(\frac{N-1}{\#P-1})^{k-1}\cdot (\frac{1}{\#P-1}))$, which is a non negligible advantage if $\#P>N$ for an appropriately chosen $k$.
\end{proof}\label{app:moreThanConf}

\subsection{Protocol Extension} \label{app:OIExtension}
\label{sec:extension}

\begin{algorithm}\label{alg:formOnionOI}
  \SetAlgoLined
    \emph{/** on input: path \((P)=(P_1,\ldots,P_{n+1})\), public router keys  \((PK)=(PK_1,\ldots,PK_n)\)**/}\\
  \emph{generate symmetric keys for all on-path routers excluding receiver}\\
\((k_1, \dots, k_n) \xleftarrow{R} (\{0,1\}^\lambda)^n\)\\

  \emph{use original FormOnion}\\
  \((O_1,\ldots,O_{n+1}) \leftarrow \mathrm{FormOnion}(m,(P_1,\ldots,P_{n+1}),$ $(PK_1,\ldots,PK_{n+1}))\)\\

  \emph{generate encrypted dummy paddings}\\
  \((rdm_1,\ldots,rdm_n) \leftarrow CalculateEncryptedRandom(n,(k_1,\ldots,k_n))\)\\
  
    \emph{generate encrypted tag paddings}\\
  \((tags_1,\ldots,tags_n) \leftarrow CalculateTags(n,(PK_1,\ldots,PK_n),(O_1,\ldots,O_{n+1}),\)\\
  \hspace{2em}\((rdm_1, \dots, rdm_n))\)\\

  \emph{Combine extensions to create new onions}\\

  \((O_1',\ldots,O_{n+1}') \leftarrow (O_1\|tags_1\|rdm_1,\ldots,O_{n+1}\|tags_{n+1}\|rdm_{n+1})\)\\
  \Return{\((O_1',\ldots,O_{n+1}')\)}
  \caption{\(\FormOnionOI(m,(P),(PK))\)}
  \label{alg:formOnionOI}
\end{algorithm}

\begin{algorithm}\label{alg:formOnionOIProcessPadding}
  \SetAlgoLined
  \emph{/**on input: path length \(n\), router identities on the path \((P)=(P_1, \dots, P_n)\), router keys \((k)=(k_1,\ldots,k_n)\)**/}\\

  \emph{pick random padding blocks}\\
  \((B_1, \dots, B_N)\xleftarrow{R}(\mathcal{C}_{asym})^N\)\\

  \emph{simulate processing on routers}\\
\For{$i\leftarrow 1$ \KwTo $n$}{
  \emph{Save parts that will not be replaced by tags}\\
  \(rdm_i=(B_{n-(i-1)}, \dots,B_N)\)\\
  \emph{Process as processed by router \(i\)}\\
  \((B_1, \dots, B_N)\gets \mathrm{ProcPadding}(k_i, P_i,  B_1,\dots, B_N)\)\\
}

  \Return{\((rdm_1, \dots, rdm_n)\)}
  \caption{\(CalculateEncryptedRandom(n,(P),(k))\)}
  \label{alg:CalcEncRandom}
\end{algorithm}

\begin{algorithm}\label{alg:formOnionOIGenExtension}
  \SetAlgoLined
  \emph{/** on input: path length \(n\), public router keys  \((PK)=(PK_1,\ldots,PK_n)\), symmetric router keys \((k)=(k_1,\ldots,k_n)\), onion layers \((O)=(O_1,\ldots,O_{n+1})\) and padding blocks \((B)=(B_1,\dots, B_N)\) **/}\\

  \emph{Calculate tags including tags of later routers for all routers}\\
\For{$i\leftarrow n$ \KwTo $1$}{
  \emph{Save parts that include tags}\\
  \(tags_i=(B_1, \dots, B_{n-i})\)\\
  
  \emph{Calculate and include tag needed for router \(i\):}\\
  \emph{1)encrypt all previous blocks}\\
  \((B_1',\ldots,B_{N-1}') \leftarrow (Enc_{sym}(k_i,B_1),\ldots,Enc_{sym}(k_i,B_{N-1}))\)\\

  \emph{2)generate tag and embed together with key}\\
  \(t_{i} \leftarrow Sig(k_{i},(O_{i+1}\|B_1'\|\ldots\|B_{N-1}'))\)\\
  \(B_{new} \leftarrow Enc_{asym}(PK_{i}, embed(k_{i},t_{i}))\)\\
    \emph{3)Shift blocks and include new block containing the tag}\\
  \((B_1, \dots, B_N) \leftarrow (B_{new},B_1',\ldots,B_{N-1}')\)\\
}

  \Return{\((tags_1, \dots, tags_n)\)}
  \caption{\(CalculateTags(n,(PK),(k),(O),(B))\)}
  \label{alg:calcTags}
\end{algorithm}

\begin{algorithm}\label{alg:procOnionOI}
  \SetAlgoLined
  \emph{\(O_i\) has the form \(O_i'\|ext_i\), length of \(ext_i\) is according to \(N\) and \(\lambda\) otherwise splitOnion aborts processing}\\
  \((O_i',ext_i) \leftarrow splitOnion(O_i)\)\\
  
  \((P_{i+1},O_{i+1}) \leftarrow \mathrm{ProcOnion}(SK_i,O_i',P_i)\)\\
  
  \((B_{1},\ldots,B_{{N}}) \leftarrow splitExtension(ext_i)\)\\

  \emph{Return if receiver}\\
  \If{\(P_{i+1} = \bot\)}{
    \Return{\((\bot,O_{i+1})\)}
  }

  \emph{Extract symmetric key and tag}\\
  \((k_i, t_i) \leftarrow extract(Dec_{asym}(SK_i,B_1))\)\\

  \emph{Abort if invalid tag received}\\
  \If{\(V\left(k_i, t_i, \left(O_{i+1}\|B_{2}\|\ldots\|B_{{N}}\right)\right) \neq 1\)}{
    \Return{\((\bot,\bot)\)}
  }
  \emph{Generate pseudo-random cipthertext for padding}\\
\(pad_{i+1} \gets \mathrm{ProcPadding}(k_i, B_1,\dots, B_N)\)\\
  \(O_{i+1}' \leftarrow O_{i+1}\|pad_{i+1}\)\\
  \Return{\((P_{i+1},O_{i+1}')\)}
  \caption{\(\ProcOnionOI(SK_i,O_i,P_i)\)}
  \label{alg:ProcOnionOI}
\end{algorithm}

\begin{algorithm}
  \(r_i \leftarrow PRNG(KDF(P_i,k_i))\)\\
  \(pad_{i+1} \leftarrow (Dec_{sym}(k_i,B_{2}),\ldots,Dec_{sym}(k_i,B_{{N}}),r_i)\)\\
  \Return{\(pad_{i+1}\)}
  \caption{\(\mathrm{ProcPadding}(k_i, P_i, B_1,\dots, B_N)\)}
  \label{alg:ProcExtension}
\end{algorithm}

\paragraph*{Example on extension for \(N=3\) and \(n=2\)}
Let \(E_{pk}^{asym}(\cdot)\) (\(D_{sk}^{asym}(\cdot)\)) denote asymmetric encryption (decryption) under key \(pk\) (\(sk\)) and  \(E_{k}^{sym}(\cdot)\) (\(D_{k}^{sym}(\cdot)\)) denote symmetric encryption (decryption).
Further, let \(t \leftarrow S_k(\cdot)\) (\(V_k(t,\cdot)\)) denote symmetric signing (verification) for a given MAC system under key \(k\). 
The sender will generate \(N-n\) random blocks (\(1\) in this case denoted \(d_1 \xleftarrow{R} \mathcal{C}_{asym}\)).
Onion \(O_1\) will carry an extension with \(N\) blocks, denoted \(B_1,B_2,B_3\).
Block \(B_1\) will be the block processed by router \(P_1\) and \(B_i\) (recoded) by router \(P_i\).
Let \(R_1,R_2\) be pseudo-random blocks generated using \(PRNG(k_i)\).

\begin{itemize}
  \item Router \(P_1\) will receive \(O_1' = O_1\;\|\;E_{PK_1}^{asym}(k_1,t_1)\;\|\;E_{k_1}^{sym}(E_{PK_2}^{asym}(k_2,t_2))\;\|\;d_1\).
  \item Router \(P_2\) will receive \(O_2' = O_2\;\|\;E_{PK_2}^{asym}(k_2,t_2)\;\|\;D_{k_1}^{sym}(d_1)\;\|\;PRNG(k_1)\).
  \item Receiver \(P_3\) will receive \(O_3' = O_3\;\|\;D_{k_2}^{sym}(D_{k_1}^{sym}(d_1))\;\|\;D_{k_2}^{sym}(PRNG(k_1))\;\|\;PRNG(k_2)\).
\end{itemize}

Tags are calculated as

\begin{itemize}
  \item \(t_1 \leftarrow S_{k_1}(O_{2}\;\|\;E_{k_1}^{sym}(E_{PK_2}^{asym}(k_2,t_2))\;\|\;d_1)\)
  \item \(t_2 \leftarrow S_{k_2}(O_3\;\|\;D_{k_1}^{sym}(d_1)\;\|\;PRNG(k_1))\)
\end{itemize}

\paragraph*{Adaption for Counterexample from Section \ref{sec:attacksCamenisch}}
In $CalculateTags$ the tag in Step 2) is generated over $ID\|O_{i+1}\|B_1'\|\dots\|B_{N-1}'$ instead of  $O_{i+1}\|B_1'\|\dots\|B_{N-1}'$. The validity check in $ProcOnion$ is adapted accordingly. Therefore, $ID$ cannot the modified or the check fails: assumed characteristic (2). As the rest of the extension is calculated as before, assumed property (1) follows from the adversary not learning more.
Additionally, to the constructed extension the random $ID$ is attached, which implies characteristic (3).

\fi

\subsection{Proof of new Properties}
\label{app:proofImplyIdeal}
Our proof follows in large parts the argumentation from~\cite{camenisch2005formal}. For UC-realization, we show that every attack on the real world protocol $\Pi$ can be simulated by an ideal world attack without the environment being able to distinguish those. 
\iflong 
We first describe the simulator $\mathcal{S}$. Then we show indistinguishability of the environment's view in the real and ideal world.
\fi

\subsubsection{Constructing $\mathcal{S}$}
$\mathcal{S}$ interacts with the ideal functionality $\mathcal{F}$ as the ideal world
adversary, and simulates the real-world honest parties for the real world adversary $\mathcal{A}$. 
All outputs $\mathcal{A}$ does are forwarded to the environment by $\mathcal{S}$.

First, $\mathcal{S}$ carries out the trusted set-up stage: it generates
public and private key pairs for all the real-world honest parties. $\mathcal{S}$
then sends the respective public keys to $\mathcal{A}$ and receives the real
world corrupted parties’ public keys from $\mathcal{A}$. 

\iflong 
There are two challenges for the simulator: First, it has to convincingly mimic the communications of honest senders for $\mathcal{A}$. As the environment initiates those communications in the ideal world, $\mathcal{S}$ has to use the limited information the ideal world gives about those communications to build onions in the simulated real world. Therefore, $\mathcal{S}$ needs to store the translation of the $temp$ ID that was used in the ideal world with the onion $\mathcal{S}$ replaced it with. $\mathcal{S}$ stores those mappings on the $r$-list. Each entry $(onion_{r},nextRelay, temp)$ represents the onion $onion_r$ that $\mathcal{S}$ expects to receive as honest party $nextRelay$ from $\mathcal{A}$ (either from the link between honest parties or from an adversarial relay) and its corresponding $temp$ ID. This $temp$ ID is used to allow the onion to continue its path in $\mathcal{F}$ if the corresponding onion is sent to $nextRelay$.
Secondly, $\mathcal{S}$ has to convincingly mimic the communications of adversarial senders in $\mathcal{F}$, such that $\mathcal{Z}$ does not notice a difference. 
In the case of an adversarial sender starting to communicate, $\mathcal{S}$ (as the honest relay) receives an onion from $\mathcal{A}$. $\mathcal{S}$ stores the processing of this onion together with the hop receiving the processing and all information on the $O$-list. As in $\mathcal{F}$ all information to communications with adversarial senders is output on every step of the path, $\mathcal{S}$ can easily map the correct onion to the communication once it occurs in the ideal functionality.
\fi

The simulator $\mathcal{S}$ maintains two internal data structures:
\begin{itemize}
	\item  The $r$-list consisting of tuples of the form $(r_{temp},  nextRelay, temp)$. Each entry in this
list corresponds to a stage in processing an onion that belongs to a communication of an honest sender. By “stage,” we mean that the next action to this onion is adversarial (i.e. it is sent over a link or processed by an adversarial router). 
	\item The $O$-list containing onions sent by corrupted senders together with the information about the communication $(onion, nextRelay, information)$. 
\end{itemize}

\iflong 
We now describe what the simulator does when it receives a message from
the ideal functionality and then describe what it does when it receives a message
from the adversary. 
\fi

 \paragraph{$\mathcal{S}$'s behavior on a message from $\mathcal{F}$}
  In case the received output belongs to  an adversarial sender's communication\footnote{$\mathcal{S}$ knows whether they belong to an adversarial sender from the output it gets}:

\textbf{Case I:} ``$\mathrm{start}$ belongs to onion from $P_S$ with $sid, P_r,m,n, \mathcal{P}$''. This is just the result of $\mathcal{S}$s reaction to an onion from $\mathcal{A}$ that was not the protocol-conform processing of an honest sender's communication (Case VIII). $\mathcal{S}$ does nothing.

\textbf{Case II:} any output together with ``$temp$ belongs to onion from $P_S$ with $sid, P_r,m,n, \mathcal{P}$'' for $temp \not \in \{\mathrm{start}, \mathrm{end}\}$. 
This means an honest relay is done processing an onion received from  $\mathcal{A}$ that was not the protocol-conform processing of an honest sender's communication (processing that follows Case VII).
$\mathcal{S}$  finds $(onion, nextRelay, information)$ with this inputs as $information$ in the $O$-list (notice that there has to be such an entry) and sends the onion $onion$ to $nextRelay$ if it is an adversarial one, or it sends $onion$, as if it is transmitted, to the $\mathcal{A}$'s party representing the link between the currently processing honest relay and the honest $nextRelay$.

\textbf{Case III:} any output together with ``$\mathrm{end}$ belongs to onion from $P_S$ with $sid, P_r,m,n, \mathcal{P}$''. This is just the result of $\mathcal{S}$'s reaction to an onion from $\mathcal{A}$. $\mathcal{S}$ does nothing.

In  case the received output  belongs to an honest sender's communication:

\textbf{Case IV:}
``Onion $temp$ from $P_{o_i}$ routed through $()$ to $ P_{o_{i+1}}$''.
In this case $\mathcal{S}$ needs to make it look as though an
onion was passed from the honest party $P_{o_i}$ to the honest party $P_{o_{i+1}}$: $\mathcal{S}$ picks pseudo-randomly (with $temp$ as seed) a path $\mathcal{P}_{rdm}$, of valid length that includes the sequence of $P_{o_i}$ to $P_{o_{i+1}}$ starting at node $j$, and a message $m_{rdm}$. $\mathcal{S}$ calculates $(O_1, \ldots, O_n)\gets \FormOnion(m_{rdm}, \mathcal{P}_{rdm}, (PK)_{\mathcal{P}_{rdm}})$ and sends the onion $O_{j+1}$  to $\mathcal{A}$'s party representing the link between the honest relays as if it was sent from $P_{o_i}$ to $P_{o_{i+1}}$. $\mathcal{S}$ stores ($O_{j+1}$,$ P_{o_{i+1}}$,$temp$) on the $r$-list. 
\iflong 
Processing is continued once $O_{j+1}$ is sent by $\mathcal{A}$. \fi

\textbf{Case V:}
``Onion $temp$ from $P_{o_i}$ routed through ($P_{o_{i+1}} , \ldots , P_{o_{j-1}})$ to $P_{o_j}$''.
\iflong 
In this case both $P_{o_i}$ and $P_{o_j}$ are honest, while the intermediate ($P_{o_{i+1}} , \ldots , P_{o_{j-1}}$ ) are adversarial.
\fi
 $\mathcal{S}$ picks pseudo-randomly (with $temp$ as seed) a path $\mathcal{P}_{rdm}$ of valid length that includes the sequence of $P_{o_{i}}$ to $P_{o_{j}}$ starting at the $k$-th node and a message $m_{rdm}$ and calculates $(O_1, \ldots, O_n)\gets \FormOnion(m_{rdm}, \mathcal{P}_{rdm}, (PK)_{\mathcal{P}_{rdm}})$ and sends the onion $O_{k+1}$  to $P_{o_{i+1}}$, as if it came from $P_{o_i}$.
$\mathcal{S}$ stores $(O_{k+j-i} , P_{o_j}, temp)$ on the $r$-list.

\textbf{Case VI:} ``Onion from $P_{o_i}$ with message $m$ for $P_r$ routed
through $(P_{o_{i+1}} , \ldots , P_{o_n} )$''.
\iflong 
 In this case, $P_{o_i}$ is honest while everyone else is adversarial, including the recipient $P_r$.  This means that some honest party sent a message to the dishonest party $P_r$. 
\fi
$\mathcal{S}$ picks randomly a path $\mathcal{P}_{rdm}$ of valid length that includes the sequence of $P_{o_{i}}$ to $P_r$ at the end (staring at the $k$-th node) and calculates $(O_1, \ldots, O_n)\gets \FormOnion(m_{t}, \mathcal{P}_{rdm}, (PK)_{\mathcal{P}_{rdm}})$ and sends the onion $O_{k+1}$  to $P_{o_{i+1}}$, as if it came from $P_{o_i}$.

\paragraph{$\mathcal{S}$'s behavior on a message from $\mathcal{A}$}
\iflong 
Let us now describe what the simulator $\mathcal{S}$ does upon receipt of a message
from the adversary. Suppose the simulator $\mathcal{S}$, as real world honest party $P_i$,
received an onion $O$ from the adversary $\mathcal{A}$ as adversarial player $P_a$. Notice that this onion can be the protocol-conform processing of an onion from a communication of an honest sender, the non-protocol-conform processing of such an onion or the begin of a communication of an adversarial sender.
\else
$\mathcal{S}$, as real world honest party $P_i$,
received an onion $O$ from $\mathcal{A}$ as adversarial player $P_a$. 
\fi

\textbf{Case VII:} $(O, P_i, temp)$ is on the $r$-list for some $temp$. In this case $O$ is the protocol-conform processing of an onion from a communication of an honest sender. 
$\mathcal{S}$ calculates $\ProcOnion(SK(P_i ), O , P_i )$. If it returns a fail ($O$ is a replay that is detected and dropped by $\Pi$), $S$ does nothing. Otherwise,  $\mathcal{S}$ sends the message (Deliver Message, $temp$) to $\mathcal{F}$.

\textbf{Case VIII.} $(O, P_i, temp)$ is not on the $r$-list for any $temp$. 
$\mathcal{S}$  calculates $\ProcOnion(SK(P_i ), O , P_i ) = (O', P')$. 
\iflong 
We distinguish the case where a next hop exists and not.
\fi

(a) $P'= \perp$:
$P_{o_j}$ is the recipient and $O'$ is a message or a fail symbol. 
\iflong 
This means that in the real-world protocol, this onion gets
to real-life $P_i$ , and $P_i$ receives the message or outputs the fail report. 
\fi
$S$ thus sends the message $(Process New Onion, P_i , O', n, ())$ to $\mathcal{F}$ on $P_a$'s behalf and as $\mathcal{A}$ already delivered this message to the honest party sends (Deliver Message, $temp$) for the belonging $temp$ (Notice that $\mathcal{S}$ knows which $temp$ belongs to this communication as it is started at an adversarial party $P_a$). 

(b) $P'\neq \perp$:
$S$ picks a message $m \in \mathcal{M}$. $S$ sends  on $P_a$'s behalf  the message, $Process\_ New \_Onion (P', m, n, ())$ from $P_i$ and  $Deliver\_ Message (temp)$ for the belonging $temp$ (Notice that $\mathcal{S}$ knows the $temp$ as in case (a))  to $\mathcal{F}$.  $\mathcal{S}$ adds the entry $(O',P', (P_a, sid, P',m,n,  ()))$ to the $O$-list.

 \iflong 
 This concludes the description of the simulator.
 \fi
 
 \subsubsection{Indistinguishability}
 
 \iflong 
 Let us now argue that the simulator actually works, i.e., that the distribution
of the player's outputs in the real world and in the ideal world are the same. We
proceed by a more or less standard hybrid argument. Consider the following set
of hybrid machines:
\fi

\textbf{Hybrid $\mathcal{H}_0$}. This machine sets up the keys for the honest parties
(so it has their secret keys). Then it interacts with the environment and $\mathcal{A}$ on behalf of the honest parties. It invokes the real protocol for the
honest parties in interacting with $\mathcal{A}$.

\textbf{Hybrid $\mathcal{H}_1^1$}. In this hybrid, for one communication the onion layers from its honest sender to the next honest node (relay or receiver) are replaced with random onion layers embedding the same path. More precisely, this machine acts like $\mathcal{H}_0$ except that the consecutive onion layers $O_1,O_2,\ldots,O_j$ from an honest sender $P_0$ to the next honest node $P_j$ are replaced with $\bar{O}_1, \ldots, \bar{O}_j$ where $\bar{O}_i=O'_{k+i}$ with $(O'_1,\ldots, O'_n) \gets \FormOnion(m_{rdm}, \mathcal{P}_rdm, (PK)_{\mathcal{P}_{rdm}})$ where $m_{rdm}$ is a random message, $\mathcal{P}$ a random path that includes the sequence from $P_0$ to $P_j$ starting at the $k$-th node. $\mathcal{H}_1^1$ keeps a $\bar{O}$-list and stores $(\bar{O}_j, P_j, \ProcOnion(SK_{P_j}, O_j, P_j))$ on it. If an onion $\tilde{O}$ is sent to $P_j$,  the machine tests if processing results in a fail (replay detected and dropped). If it does not, $\mathcal{H}_1^1$ compares  $\tilde{O}$ to all $\bar{O}_j$ on its $\bar{O}$-list where the second entry is $P_j$. If it finds a match, the belonging $ \ProcOnion(SK_{P_j}, O_j, P_j)$ is used as processing result of $P_j$. Otherwise, $\ProcOnion(SK_{P_j}, \tilde{O}, P_j)$ is used.

$\mathbf{\mathcal{H}_0\approx_I \mathcal{H}_1^1}$. 
The environment gets notified when an honest party receives an onion layer and inputs when this party is done. As we just exchange onion layers by others, the behavior to the environment is indistinguishable for both machines. 
\iflong 
We argue indistinguishability in the outputs to $\mathcal{A}$ as well:\fi
 
 $\mathcal{A}$ observes the onion layers after $P_0$ and  if it sends an onion to $P_j$ the result of the processing after the honest node. Depending on the behavior of $\mathcal{A}$ three cases occur: $\mathcal{A}$ drops the onion belonging to this communication before $P_j$, $\mathcal{A}$ behaves protocol-conform and sends the expected onion to $P_j$ or $\mathcal{A}$ modifies the expected onion before sending it to $P_j$.
  Notice that dropping the onion leaves the adversary with less output. 
  \iflong 
 Hence, if the case of more outputs cannot be distinguish, neither the case with less outputs can. 
 \fi
 Thus, we can focus on the other cases.
 
 We assume there exists a distinguisher $\mathcal{D}$ between $\mathcal{H}_0$ and $\mathcal{H}_1^1$ and construct a successful attack on $\OSSenderNode$:
 
 The attack receives key and name of the honest relay and uses the input of the replaced communication as choice for the challenge, where it replaces the name of the first honest relay with the one that it got from the challenger\footnote{As both honest nodes are randomly drawn this does not change the success}. For the other relays the attack decides on the keys as $\mathcal{A}$ (for corrupted) and the protocol (for honest) does. It receives $(\tilde{O}, \ProcOnion(O_j))$ from the challenger. The attack uses $\mathcal{D}$. For $\mathcal{D}$ it simulates all communications except the one chosen for the challenge, with the oracles and knowledge of the protocol and keys. (This includes that for bit-identical onions for which the oracle cannot be used, depending on whether the protocol has replay protection $\ProcOnion(O_j)$ is reused or the onion is dropped.)
 For simulating the challenge communication the attack hands $\tilde{O}$ to $\mathcal{A}$ as soon as $\mathcal{D}$ instructs to do so. To simulate further for $\mathcal{D}$ it uses $\tilde{O}$ to calculate the later layers and does any actions $\mathcal{A}$ does on the onion.
  
$\mathcal{A}$ either sends the honest processing of $\tilde{O}$ to the challenge router or $\mathcal{A}$ modifies it to $f(\tilde{O})$. In the first case, the attack simulates corresponding to $\ProcOnion(O_j)$. In the second case, $f(\tilde{O})$ is given to the oracle and the simulation is done for the returned $\ProcOnion(f(\tilde{O}))$.
    
   Thus, either the challenger chose $b=0$ and the attack behaves like $\mathcal{H}_0$ under $\mathcal{D}$; or the challenger chose $b=1$ and the attack behaves like $\mathcal{H}_1^1$ under $\mathcal{D}$. The attack outputs the same bit as $\mathcal{D}$ does for its simulation to win with the same advantage as $\mathcal{D}$ can distinguish the hybrids.

\textbf{Hybrid $\mathcal{H}_1^*$}. In this hybrid,  for one communication, for which they had not been replaced, onion layers from an honest sender to the next honest node are replaced with a random onion sharing this path.  

$\mathbf{\mathcal{H}_1^1\approx_I \mathcal{H}_1^*}$. Analogous above. Apply argumentation of indistinguishability ($\mathcal{H}_0\approx_I \mathcal{H}_1^1$) for every replaced subpath.\footnote{Technically, we need the onion layers as used in $\mathcal{H}_1^1$ (with replaced onion layers between a honest sender and first honest node) in this case. Hence, slightly different than before the attack needs to simulate the other communications not only by the oracle use and processing, but also by replacing some onion layers (between the honest sender and first honest node) with randomly drawn ones as $\mathcal{H}_1^1$ does.} 

\textbf{Hybrid $\mathcal{H}_2^1$}. In this hybrid, for one communication (and all its replays) for which in the adversarial processing no modification occurred\footnote{We treat modifying adversaries later in a generic way.} onion layers    between two consecutive honest relays (the second might be the receiver) are replaced with random onion layers embedding the same path.   
More precisely, this machine acts like $\mathcal{H}_1^*$ except that the processing of $O_j$ (and, if no replay protection, the processing result of all replays of $O_j$); i.e. the consecutive onion layers $O_{j+1},\ldots,O_{j'}$ from a communication of an honest sender, starting at the next honest node $P_j$ to the next following honest node $P_{j'}$, are replaced with $\bar{O}_{j+1}, \ldots, \bar{O}_{j'}$. Thereby, $\bar{O}_{j+1}=O'_{j+k+1}$ with $(O'_1,\ldots, O'_n) \gets \FormOnion(m_{rdm}, \mathcal{P}_{rdm}, (PK)_{\mathcal{P}_{rdm}})$ where $m_{rdm}$ is a random message, $\mathcal{P}$ a random path that includes the sequence from $P_j$ to $P_{j'}$ starting at the $k$-th node. $\mathcal{H}_2^1$ stores $(\bar{O}_{j'}, P_{j'}, \ProcOnion(SK_{P_{j'}}, O_{j'}, P_{j'}))$ on the $\bar{O}$-list.
Like in $\mathcal{H}_1^*$ if an onion $\tilde{O}$ is sent to $P_{j'}$, processing is first checked for a fail. If it does not fail , $\mathcal{H}_2^1$  compares  $\tilde{O}$ to all $\bar{O}_{j'}$ on its $\bar{O}$-list where the second entry is $P_{j'}$. If it finds a match, the belonging $ \ProcOnion(SK_{P_{j'}}, O_{j'}, P_{j'})$ is used as processing result of $P_{j'}$. Otherwise, $\ProcOnion(SK_{P_{j'}}, \tilde{O}, P_{j'})$ is used.

$\mathbf{\mathcal{H}_1^*\approx_I \mathcal{H}_2^1}$. $\mathcal{H}_2^1$ replaces for one communication (and all its replays), the first subpath between two consecutive honest nodes after an honest sender. The output to $\mathcal{A}$ includes the earlier (by \(\mathcal{H}_1^*\)) replaced onion layers $\bar{O}_{earlier}$ before the first honest relay (these layers are identical in $\mathcal{H}_1^*$ and $\mathcal{H}_2^1$) that take the original subpath but are otherwise chosen randomly; the original onion layers after the first honest relay for all communications not considered by $\mathcal{H}_2^1$ (outputted by $\mathcal{H}_1^*$) or in case of the communication considered by $\mathcal{H}_2^1$, the newly drawn random replacement (generated by $\mathcal{H}_2^1$); and the processing after $P_{j'}$.

The onions $\bar{O}_{earlier}$ are chosen independently at random by $\mathcal{H}_1^*$ such that they embed the original path between an honest sender and the first honest relay, but contain a random message and random valid path before the honest sending relay and after the next following honest relay. As they are replaced by the original onion layers after $P_j$ (there was no modification for this communication) and  include a random path and message, onions $\bar{O}_{earlier}$ cannot be linked to onions output by $P_j$. Hence, the random onions before the first honest node do not help distinguishing the machines. 

Thus, all that is left to distinguish the machines, is the original/replaced onion layer after the first honest node and the processing afterwards. This is the same output as in  $\mathcal{H}_0\approx_I \mathcal{H}_1^1$. Hence, if there exists a distinguisher between $\mathcal{H}_1^*$ and $\mathcal{H}_2^1$ there exists an attack on $\OSSenderNode$. 

\textbf{Hybrid $\mathcal{H}_2^*$}. In this hybrid, for all communications, one communication (and all its replays) at a time is selected. Within that communication, the next (from sender to receiver) non-replaced subpath between two consecutive honest nodes is chosen.
If $\mathcal{A}$ previously (i.e. in onion layers up to the honest node starting the selected subpath) modified an onion layer in this communication, the communication is skipped.
Otherwise, the onion layers between those honest nodes are replaced with a random onion sharing the path.

$\mathbf{\mathcal{H}_2^1\approx_I \mathcal{H}_2^*}$.   Analogous above.

\textbf{Hybrid $\mathcal{H}_3^1$}. In this hybrid,  for one communication  (and all its replays)  for which in the adversarial processing no modification occurred  so far, onion layers from its last honest relay to the corrupted receiver are replaced with random onions sharing this path and message.
More precisely, this machine acts like $\mathcal{H}_2^*$ except that the processing of $O_j$ (and, if no replay protection, the processing result of all replays of $O_j$); i.e. the consecutive onion layers $O_{j+1},\ldots,O_{n}$ from a communication of an honest sender, starting at the last honest node $P_j$ to the corrupted receiver $P_n$ are replaced with $\bar{O}_{j+1}, \ldots, \bar{O}_{n}$. Thereby $\bar{O}_i=O'_{k+i}$ with $(O'_1,\ldots, O'_{n'}) \gets \FormOnion(m, \mathcal{P}_{rdm}, (PK)_{\mathcal{P}_{rdm}})$ where $m$ is the message of this communication\footnote{$\mathcal{H}_3^1$ knows this message as it communicates with the environment.}, $\mathcal{P}_{rdm}$ a random path that includes the sequence from $P_j$ to $P_n$ starting at the $k$-th node.

$\mathbf{\mathcal{H}_2^*\approx_I \mathcal{H}_3^1}$. 
Similar to $\mathcal{H}_1^*\approx_I \mathcal{H}_2^1$ the onion layers before $P_j$ are independent and hence do not help distinguishing. The remaining outputs suffice to construct an attack on $\OSNodeReceiver$ similar to the one on $\OSSenderNode$ in $\mathcal{H}_1^*$ and $\mathcal{H}_2^1$.

\textbf{Hybrid $\mathcal{H}_3^*$}. In this hybrid,  for one communication (and all its replays) for which in the adversarial processing no modification occurred  so far and for which the onion layers from its last honest relay to corrupted receiver have not been replaced before, the onion layers between those nodes are replaced with random onion layers sharing the path and message.

$\mathbf{\mathcal{H}_3^1\approx_I \mathcal{H}_3^*}$.   Analogous above.

\textbf{Hybrid $\mathcal{H}_4$} This machine acts the way that $\mathcal{S}$ acts in combination with $\mathcal{F}$. Note that  $\mathcal{H}_3^*$ only behaves differently from $\mathcal{S}$ in (a)
routing onions through the honest parties and (b) where it gets its information needed for choosing the replacement onion layers: (a) $\mathcal{H}_3^*$ actually routes them through
the real honest parties that do all the computation. $\mathcal{H}_4$, instead runs the
way that $\mathcal{F}$ and $\mathcal{S}$ operate: there are
no real honest parties, and the ideal honest parties do not do any crypto work.
(b) $\mathcal{H}_3^*$ gets inputs directly from the environment and gives output to it. In $\mathcal{H}_4$ the environment instead gives inputs to $\mathcal{F}$ and $\mathcal{S}$ gets the needed information (i.e. parts of path and the included message, if the receiver is corrupted) from outputs of $\mathcal{F}$ as the ideal world adversary.  $\mathcal{F}$ gives the outputs to the environment as needed. Further, $\mathcal{H}_3^*$ chooses the replacement onion layers randomly, but identical for replays, while $\mathcal{S}$ chooses them pseudo-randomly depending on an in $\mathcal{F}$ randomly chosen $temp$, which is identical for replays.

$\mathbf{\mathcal{H}_3^*\approx_I \mathcal{H}_4}$.
For the interaction with the environment from the protocol/ideal functionality, it is easy to see that the simulator directly gets the information it needs from the outputs of the ideal functionality to the adversary: whenever an honest node is done processing, it needs the path from it to the next honest node or  path from it to the corrupted receiver and in this case also the message.  This information is given to $\mathcal{S}$ by $\mathcal{F}$.

Further, in the real protocol, the environment is notified by honest nodes when they receive an onion together with some random ID that the environment sends back to signal that the honest node is done processing the onion. The same is done in the ideal functionality. Notice that the simulator ensures that every communication is simulated in $\mathcal{F}$ such that those notifications arrive at the environment without any difference.
  
For the interaction with the real world adversary, we distinguish the outputs in communications from honest and corrupted senders.
0) Corrupted senders: In the case of a corrupted sender both $\mathcal{H}_3^*$ and $\mathcal{H}_4$ (i.e. $\mathcal{S}$+$\mathcal{F}$) do not replace any onion layers except that with negligible probability a collision on the $\bar{O}$-list resp. $O$-list occurs.

1)Honest senders: 1.1) No modification of the onion by the adversary happens: All parts of the path are replaced with randomly drawn onion layers $\bar{O}_i$. The way those layers are chosen is identical for $\mathcal{H}_3^*$ and $\mathcal{H}_4$ (i.e. $\mathcal{S}$+ $\mathcal{F}$).
1.2) Some modification of the onion or a drop or insert happens: As soon as another onion as the expected honest processing is found, both $\mathcal{H}_3^*$ and $\mathcal{H}_4$ continue to use the bit-identical onion for the further processing except that with negligible probability a collision on the $\bar{O}$-list resp. $O$-list occurs. In case of a dropped onion it is simply not processed further in any of the two machines.

Note that the view of the environment in the real protocol is the same as its
view in interacting with $\mathcal{H}_0$. Similarly, its view in the ideal protocol with the
simulator is the same as its view in interacting with $\mathcal{H}_4$. As we have shown indistinguishability in every step, we have indistinguishability in their views.

\label{app:implyIdeal}

\subsection{Sphinx}\label{app:SphinxProof}
\subsubsection{Adapted Sphinx}
The original Sphinx protocol was adapted in~\cite{improvedSphinx} to use modern cryptographic primitives, which can be proven secure.
Further, the number of different cryptographic algorithms is reduced to improve performance of the construction.
Additionally, the encryption function used for the Sphinx payload is replaced by an authenticated encryption (AE) scheme, such that the payload is also authenticated at each node by the tag \(\gamma_i\) as part of the header.
Let \(\pi_{AE}\) (\(\pi_{AE}^{-1}\)) be the encryption (decryption) function of an AE scheme, as proposed by~\cite{improvedSphinx}.

The algorithm to generate a Sphinx packet is partly adapted.
Calculation of \(\alpha_i,s_i,b_i,\beta_i\) is equivalent to the original Sphinx description, except that we consider the 0-bit string for padding $\beta_{\nu-1}$ replaced by random bits to prevent the known attack from Section~\ref{sec:zeroAttack}.
The cryptographic primitives \(\mu,h_\mu,\pi,h_\pi\) are not used anymore in the adaptation. Instead an AE scheme is employed:
Let \(\delta_\nu\) be the payload of the Sphinx packet. For \(0 \leq i < \nu-1\): \((\delta_i,\gamma_i) \gets \pi_{AE}(s_i,\delta_{i+1},\beta_i) \), where \(\delta_i\) is an encryption of \(\delta_{i+1}\) and \(\gamma\) is a tag authenticating \(\delta_{i+1},\beta_i\). \(\pi_{AE},\rho,h_b,h_\rho\) are modelled as a random oracle.
The length of the Sphinx payload is fixed and checked at all mix nodes. If the length is incorrect, the packet is discarded.

\subsubsection{Proof of adapted Sphinx} 
The proof for Onion-Correctness is analogous to the one in~\cite{danezis_sphinx:_2009}.
The proof of our new security properties  follows:

\emph{Symmetric key $s_i$ is a secret:}
The mix nodes have an asymmetric private key \(x_{n_i}\), that is used in a Diffie-Hellman key exchange.
It follows that the shared symmetric key between an honest sender and an honest mix node is not known to the adversary.
If an adversary could extract the symmetric key with non-negligible probability, she could break the decisional diffie-hellman problem.
See~\cite{danezis_sphinx:_2009} Section 4.4, indistinguishability proof of hybrid \(\mathrm{\mathbf{G}}_1\).
Note that tag \(\gamma\) is generated using an AE scheme keyed with \(s_i\) directly. The argumentation from~\cite{danezis_sphinx:_2009} still holds.

$\mathbf{\OSSenderNode:}$
Recall that $\OSSenderNode$ allows the adversary to decide the inputs to $\FormOnion$ and either returns the resulting onion $O_1$ of this $\FormOnion$ call or a randomly chosen onion $\bar{O}_k$, that only matches the subpath between the honest nodes, together with the processing of $O_1$ after the honest node ($\ProcOnion(O_j)$). Furthermore, it allows oracle use before and after this decision.

\emph{No dependencies between $\FormOnion$:}
We define the game $\OSSenderNode^1$ to be the same as $\OSSenderNode$ except that the adversary has no oracle access before his input decision (skips Step 2).
As the creation of onions in Sphinx is adequately randomized, independent from earlier creations and using a sufficiently large security parameter, oracle access before the challenge only negligibly improves the adversary's success in guessing correctly.

\emph{No modification:}
We define the game $\OSSenderNode^2$ to be the same as $\OSSenderNode^1$ except that the adversary has no oracle access after his input decision (skips Step 7).
Using the oracle for a new onion $\tilde{O}$ independent of the challenge onion $O$ does not help guessing $b$ as the output $\ProcOnion(\tilde{O})$ is then independent from $b$ as well. 
Thus, we only need to look at modifications of the challenge onion processed until the honest node $O_{+j}:=\ProcOnion^j(O)$.
As any onion layer, $O_{+j}$ consists of four parts $(\alpha, \beta, \gamma, \delta)$, from which the tag \(\gamma\) authenticates \(\beta,\delta\) using a shared key \(s\) extracted from \(\alpha\).
Modifications generating a valid tag are thus only successful with negligible probability.
Therefore, there cannot be a successful attack on $\OSSenderNode^1$ that relies on the second oracle and thus any successful attack on $\OSSenderNode^1$ is also possible for $\OSSenderNode^2$ in Sphinx.

\emph{No linking:}
We define the game $\OSSenderNode^3$ to be $\OSSenderNode^2$ but the second part of the output ($ProcOnion(O_j)=(O_{j+1}, P_{j+1})$) is no longer given to the game adversary. Assume knowing this output helps the adversary to break $\OSSenderNode$. 
As the next hop $P_{j+1}$ is already known to her from her choice of path, the only part of the output that can help her is $O_{j+1}$.
Thus the adversary must be able to link  $O_{j+1}$ to the first output onion layer ($O_1$ resp. $\bar{O}_k$) which differs depending on $b$. 

Hence, she must be able to link the onion layers  before and after the honest node. 
The processing at a honest node changes all four parts of a Sphinx packet in a way such that the adversary cannot predict the result.
Let \(B = (\beta \|0_{2\kappa})\oplus \rho(h_\rho(s))\): \(
  \alpha' \gets \alpha^{h_b{\alpha,s}};  \beta' \gets B_{[2\kappa..(2r+3)\kappa-1]} ;
  \gamma' \gets B_{[\kappa..2\kappa-1]};   \delta' \gets \pi_{AE}^{-1}(s,\delta,\gamma)\).
Assume if the adversary can decide on $(\alpha,\beta,\gamma,\delta)$ she can distinguish any of the new values $(\alpha',\beta', \gamma', \delta')$ from randomness without knowing $s$. However, this implies that she is able to solve the DDH problem induced by the computation for \(\alpha'\), or break the secure $\rho$, $\pi_{AE}$, or hash primitives, which contradicts the assumption.
Thus, no successful attack on $\OSSenderNode^2$ based on the second part of the output ($ProcOnion(O_j)$) can exist for Sphinx. 

\emph{Onion layer indistinguishable from random ones:}
We define $\OSSenderNode^4$ to be $\OSSenderNode^3$ except that for the output onion layer the values of $\alpha,\beta,\gamma$ and $\delta$ are chosen randomly from their corresponding spaces, such that they result in the same subpath as given by the adversary.
 We show that $\OSSenderNode^4$ is indistinguishable from $\OSSenderNode^3$.
 Assume an adversary that can distinguish the games.
 As processing of onion layers results in expected behavior, she  must be able to distinguish some of the parts of the onion layer from randomness.
 Assume she can distinguish any part of the packet, that means she can \linebreak--~without knowing $s$~-- either solve the DDH problem or break the security of \(\rho\) or the AE scheme.
Therefore, she cannot distinguish any part of the packet from a randomly drawn value, and also not process it to get the message.


In $\OSSenderNode^4$ all the values are drawn exactly the same way independent of $b$. There cannot be an adversary with any advantage for this game. Because $\OSSenderNode^4\approx \OSSenderNode^3 \implies \OSSenderNode^2 \implies \OSSenderNode^1 \implies \OSSenderNode$, we have proven that any adversary has at most negligible advantage in guessing $b$ for $\OSSenderNode$.

$\mathbf{\OSNodeReceiver}:$
Recall that $\OSNodeReceiver$ either outputs the processing of the onion build from the adversary's choice ($\ProcOnion(O_j)=(O_{j+1},P_{j+1})$) or the processing from a random onion that matches the end of the path and message of the adversary's choice ($\ProcOnion(\bar{O}_k)=(\bar{O}_{k+1}, P_{j+1})$). Note that the next hop is always the same in those outputs and thus only the onion layers need to be indistinguishable. The proof of this is similar to $\OSSenderNode$'s ``Onion layer indistinguishable from random ones'' except that $O$  is chosen randomly from the onion layers that also include the adversary chosen message. Further, thanks to the fix to the attack determining the path length, also the values $\alpha_{\nu-1},\beta_{\nu-1},\gamma_{\nu-1},\delta_{\nu-1}$ the last node gets are indistinguishable from such random ones.

\end{document}